\title{Complete Compositional Syntax for Finite Transducers on Finite and Bi-Infinite Words}
\author{Titouan Carette}
\affil{\href{mailto:titouan.carette@polytechnique.edu}{titouan.carette@polytechnique.edu}\\
  LIX, CNRS, École polytechnique, Institut Polytechnique de Paris, 91120
    Palaiseau, France}
\author{Marc de Visme}
\affil{\href{mailto:marc.de-visme@inria.fr}{marc.de-visme@inria.fr}\\
  Université Paris-Saclay, CNRS, ENS Paris-Saclay, Inria, Laboratoire Méthodes Formelles, 91190 Gif-sur-Yvette, France}
\author{Vivien Ducros}
\affil{\href{vivien.ducros@ens-paris-saclay.fr}{vivien.ducros@ens-paris-saclay.fr}\\
  ENS Paris-Saclay, Université Paris-Saclay, 91190 Gif-sur-Yvette, France}
\author{Victor Lutfalla\footnote{Partially supported by ANR ALARICE (ANR-24-CE48-7504)}}
\affil{\href{victor.lutfalla@math.cnrs.fr}{victor.lutfalla@math.cnrs.fr}\\
  Aix-Marseille Univ, CNRS, I2M, Marseille, France}
\author{Etienne Moutot\footnote{Partially supported by ANR IZES (ANR-22-CE40-0011)}}
\affil{\href{mailto:etienne.moutot@math.cnrs.fr}{etienne.moutot@math.cnrs.fr}\\
CNRS, I2M, Aix-Marseille Université, Marseille, France}
\tikzstyle{black dot}=[inner sep=0.4mm,minimum width=1.5mm,minimum height=0pt,fill=black,draw=black,shape=circle]
\tikzstyle{ltriangle-red}=[shape=isosceles triangle, line width=1pt, tikzit fill={rgb,255: red,180; green,180; blue,180}, tikzit draw={rgb,255: red,255; green,0; blue,0}, draw={rgb,255: red,155; green,0; blue,0}, isosceles triangle stretches=true, inner sep=0.8pt, minimum width=0.4cm, minimum height=0.4cm, shape border rotate=180]
\tikzstyle{rtriangle-red}=[shape=isosceles triangle, line width=1pt, tikzit fill={rgb,255: red,75; green,75; blue,75}, tikzit draw={rgb,255: red,255; green,0; blue,0}, draw={rgb,255: red,155; green,0; blue,0}, isosceles triangle stretches=true, inner sep=0.8pt, minimum width=0.4cm, minimum height=0.4cm]
\tikzstyle{lsignal-blue}=[shape=signal, signal to=west, signal from=east, tikzit shape=rectangle, line width=1pt, tikzit fill={rgb,255: red,180; green,180; blue,180}, tikzit draw={rgb,255: red,0; green,0; blue,255}, draw={rgb,255: red,0; green,0; blue,155}, minimum height=6pt, inner sep=1pt, font={\scriptsize}, tikzit category=GLA]
\tikzstyle{rsignal-blue}=[shape=signal, signal to=east, signal from=west, tikzit shape=rectangle, line width=1pt, tikzit fill={rgb,255: red,75; green,75; blue,75}, tikzit draw={rgb,255: red,0; green,0; blue,255}, draw={rgb,255: red,0; green,0; blue,155}, minimum height=6pt, inner sep=1pt, font={\scriptsize}, tikzit category=GLA]
\tikzstyle{ltriangle-black}=[shape=isosceles triangle, tikzit fill={rgb,255: red,180; green,180; blue,180}, draw=black, isosceles triangle stretches=true, inner sep=0.8pt, minimum width=0.4cm, minimum height=0.4cm, shape border rotate=180]
\tikzstyle{rtriangle-black}=[shape=isosceles triangle, tikzit fill={rgb,255: red,75; green,75; blue,75}, draw=black, isosceles triangle stretches=true, inner sep=0.8pt, minimum width=0.4cm, minimum height=0.4cm]
\tikzstyle{lsignal-black}=[shape=signal, signal to=west, signal from=east, tikzit shape=rectangle, tikzit fill={rgb,255: red,180; green,180; blue,180}, draw=black, minimum height=6pt, inner sep=1pt, font={\scriptsize}, tikzit category=GLA]
\tikzstyle{rsignal-black}=[shape=signal, signal to=east, signal from=west, tikzit shape=rectangle, tikzit fill={rgb,255: red,75; green,75; blue,75}, draw=black, minimum height=6pt, inner sep=1pt, font={\scriptsize}, tikzit category=GLA]
\tikzstyle{ppp}=[-, dashed, draw={rgb,255: red,140; green,140; blue,140}]
\tikzstyle{ddd}=[-, dotted, tikzit draw={rgb,255: red,128; green,0; blue,128}]
\tikzstyle{grey}=[-, tikzit draw={rgb,255: red,140; green,140; blue,140}, draw={rgb,255: red,140; green,140; blue,140}]
\tikzstyle{dblue}=[-, line width=1pt, tikzit draw={rgb,255: red,0; green,0; blue,255}, draw={rgb,255: red,0; green,0; blue,155}]
\tikzstyle{dred}=[-, line width=1pt,tikzit draw={rgb,255: red,255; green,0; blue,0}, draw={rgb,255: red,155; green,0; blue,0}]
\newcommand{\bfup}[1]{\textbf{\textup{#1}}}
\newcommand{\ZZ}{{\mathbb{Z}}}
\newcommand{\NN}{{\mathbb{N}}}
\newcommand{\blue}[1]{{\textcolor{blue!50!black}{#1}}}
\newcommand{\red}[1]{{\textcolor{red!50!black}{#1}}}
\newcommand{\binterp}[1]{\blue{\left\llbracket \textcolor{black}{#1} \right\rrbracket}}
\newcommand{\rinterp}[1]{\red{\left\llbracket \textcolor{black}{#1} \right\rrbracket}}
\newcommand{\D}{{\mathcal{D}}}
\renewcommand{\L}{{\mathcal{L}}}
\newcommand{\M}{{\mathcal{M}}}
\renewcommand{\P}{{\mathcal{P}}}
\newcommand{\Pne}{\mathcal{P}_{\neq \varnothing}}
\newcommand{\R}{{\mathcal{R}}}
\newcommand{\bR}{{\blue{\bfup{R}}}}
\newcommand{\rR}{{\red{\bfup{R}}}}
\renewcommand{\S}{{\mathcal{S}}}
\newcommand{\bS}{{\blue{\bfup{S}}}}
\newcommand{\rS}{{\red{\bfup{S}}}}
\newcommand{\T}{{\mathcal{T}}}
\newcommand{\bT}{{\blue{\bfup{T}}}}
\newcommand{\rT}{{\red{\bfup{T}}}}
\newcommand{\U}{{\mathcal{U}}}
\newcommand{\bU}{{\blue{\bfup{U}}}}
\newcommand{\rU}{{\red{\bfup{U}}}}
\newcommand{\bV}{{\blue{\bfup{V}}}}
\newcommand{\rV}{{\red{\bfup{V}}}}
\newcommand{\one}{{\ensuremath{\mathds{1}}}}
\newcommand{\shift}[2]{\triangleright^{%
		\ifstrempty{#1}%
		{\ifstrempty{#2}{}{#2}}%
		{\ifstrempty{#2}{#1}{#1\mid{}#2}}}}
\newcommand{\id}{\bfup{id}}
\newcommand{\transpose}[1]{#1^{t}}
\newcommand{\fact}[1]{\mathrm{Fact}(#1)} 
\newcommand{\limlang}[1]{\overrightarrow{#1}} 
\newcommand{\prun}[1]{\mathrm{Prun}(#1)} 
\newcommand{\closure}[1]{\mathrm{cl}(#1)} 
\newcommand{\orbit}[1]{\mathcal{O}(#1)} 
\newcommand{\internal}[1]{\sqsubseteq_{#1}} 
\newcommand{\regularlanguages}{\mathrm{Reg}}
\newcommand{\soficshifts}{\mathrm{Sofic}}
\newcommand{\regularfactorpruned}{\regularlanguages^\mathrm{fc}_\mathrm{p}}
\theoremstyle{plain}
\newtheorem{theorem}{Theorem}[section]
\newtheorem{lemma}[theorem]{Lemma}
\newtheorem{corollary}[theorem]{Corollary}
\newtheorem{proposition}[theorem]{Proposition}
\crefname{proposition}{Proposition}{Propositions}
\theoremstyle{definition}
\newtheorem{definition}{Definition}[section]
\theoremstyle{remark}
\newtheorem{remark}[theorem]{Remark}
\begin{document}

\maketitle

\begin{abstract}
	Minimizing finite automata, proving trace equivalence of labelled transition systems or representing sofic subshifts involve very similar arguments, which suggests the possibility of a unified formalism. We propose finite states non-deterministic transducer as a lingua franca for automata theory, transition systems, and sofic subshifts. We introduce a compositional diagrammatical syntax for transducers in form of string diagrams interpreted as relations. This syntax comes with sound rewriting rules allowing diagrammatical reasoning. Our main result is the completeness of our equational theory, ensuring that language-equivalence, trace-equivalence, or subshift equivalence can always be proved using our rewriting rules.
\end{abstract}

\section{Introduction}
\label{sec:introduction}
The interest of category theory for computer science is to provide an abstract language suitable to describe computational structures in their most general form, allowing to clarify connections between similar constructions appearing in different fields, sometimes seemingly unrelated. 
One of those abstract ideas, ubiquitous in computer science, is a machine interacting with its environment by updating an internal set of states. 
Most computational models fit this description, the paradigmatic example being Turing machines. 
Such machines interact with their environment via inputs and outputs, thus their behavior can be described as the set of all possible input/output pairs, which are interpreted as a relation between a (possibly infinite) sequence of input events and another sequence of output events. 
When constraining the set of internal states of a machine to be finite, one enforces interesting mathematical properties on the set of admissible behaviors, as they must be definable by finite means, in a certain way.

Let us take three examples of models exhibiting this kind of behavior. 
The most famous is the finite automaton \cite{DBLP:books/daglib/0086373,DBLP:books/daglib/0016921}, which only takes finite sequences of input events and evaluates if those sequences are valid or not.
The behavior then coincides with the language recognized by the automaton, and the possible behaviors are exactly the regular languages.
Perhaps more aligned with the interacting machine point of view, a Labelled Transition System (LTS) performs a transition (depending on an input event and its current state) that can produce an output event. In the LTS literature, we usually talk about the trace of the system to design its behavior.
The last model we will mention is perhaps less known by the computer science community and comes from dynamical systems.
A discrete-time dynamical system is a set $X$ equipped with an update function $f:X\to X$ implementing the dynamic. 
Symbolic dynamics is the study of coarse-grained dynamical systems when $X$ is partitioned into subsets indexed by finitely many symbols \cite{lind-marcus}. 
One can learn a lot on the original dynamical system by studying the sequences of symbols induced by the evolution $f$.
Those sequences correspond to our general notion of behavior, and the ones "definable by finite means" correspond to a set of sequences known as sofic subshifts.
The fact that similar mathematical methods, mostly involving regular languages, can be employed to handle those three cases is folklore. 
However, to our knowledge, no framework exists to formally unify those three models.

We propose finite relational transducers as general objects subsuming automata, LTS, and symbolic dynamical systems. 
The originality of our approach is to embrace a fully relational point of view, considering relations and not functions as more fundamental. 
In other words, we will consider non-determinism to be more natural, and see determinism as an interesting special case. 
This stance will lead us to present generalizations of well-known models that might feel unfamiliar, typically in the case of non-deterministic symbolic dynamical systems. 
Still, we believe that the relational setting is the right place to understand clearly how the various models we present are linked.
Furthermore, the relational point of view is also the most well-suited for the use of string diagrams.

Indeed, our approach fits in a recent thread of research aiming to represent and reason on computational processes using diagrams with inputs and outputs.
Those diagrams can be thought of as boolean circuits-like structures, built from elementary generators (or gates) on which we can perform local rewritings by replacing a sub-circuit with another one having the same type and an equivalent behavior. Such methods have been successfully applied to provide compositional syntax for control flow graphs \cite{bonchi2014categorical,bonchi2021survey} or quantum computing \cite{van2020zx} for example.
As with the choice of moving from functions to relations, the diagrammatic paradigm requires the introduction of unfamiliar notions (for a reader used to automata, transition systems, or symbolic dynamic) from category theory \cite{selinger2011survey}.
Still, even if dressed in the unusual language of category theory and diagrams, the fundamental notions involved are the same.

We hope that the benefits of this choice outweigh the price paid for it.
Indeed, our formalism being formulated in the language of string diagrams (or equivalently, in the language of symmetric monoidal categories \cite{MacLane}), we expect to be able to obtain straightforward generalizations by changing our base category (the mathematical world in which we interpret our diagrams) from relation to stochastic kernels or quantum channels. 
This provides solid theoretical bases to define meaningful notions of automata, transition systems, and symbolic dynamical systems in the probabilistic and quantum case, on which we could develop similar simulation-based proof techniques.

One can see the current paper as the first step toward this goal. The use of diagrammatic formalism to unify notions of finite tilings and quantum tensors has been developed in \cite{quantumWang}.

We hope our unified formalism can ease the transfer of ideas between the transition system and symbolic dynamics communities and stimulate the emergence of new questions and insights set off by our diagrammatical approach. 

\smallskip

In this paper, we introduce relational transducers with finite states and show how they generalize finite automata, LTS, and symbolic dynamical systems. We define their behaviors on both finite and bi-infinite words. 
In both cases, we fully characterize the admissible behaviors with the notion of regular and sofic relations, respectively. 
We also introduce two string diagrammatic equational theories which are able to represent transducers on finite and bi-infinite words in a compositional way. 
We show that those equational theories are expressive enough to represent any transducers and present two sound simulation principles that are strong enough to equate in a diagrammatic form any two transducers having the same behavior, both in the finite and bi-infinite case. 
On a more technical side, we present a new diagrammatical formulation of the notion of backward and forward simulation and introduce a new notion of normal form for the presentation of a sofic subshift.

\smallskip

We start by introducing the diagrammatic aspects of our approach in \Cref{sec:relations}. 
In \Cref{sec:finite}, with the notion of transducers used throughout the paper, we introduce our diagrammatical equational theory and completeness result for transducers acting on finite words. 
\Cref{sec:infinite} gathers results on sofic relations culminating in the extension of our diagrammatical theory and a completeness proof for the case of bi-infinite words. 
Most of the proofs are postponed to the appendix.

\section{Relations and their Diagrams}
\label{sec:relations}

\label{sec:logic}

Before introducing transducers, we quickly present the formalism we use to handle relations between finite sets. We later use the same formalism for relations between infinite sets, more specifically sets of finite words and sets of bi-infinite words.
In the context of relations, those infinite sets will always be clearly denoted with a $\_^*$ or a $\_^\ZZ$, as such, the letters $A,B,C,\dots$ will refer to finite sets.

Given some finite sets $A_1,\dots,A_n,B_1,\dots,B_m$, a relation $\R$ from $\Pi_{i=1}^n A_i$ to $\Pi_{j=1}^m B_j$ is simply a subset $\R \subseteq \Pi_{i=1}^n A_i \times \Pi_{j=1}^m B_j$. We write $a~\R~b$, and say that $\R$ relates $a$ to $b$, whenever $(a,b) \in \R$. We represent such relation as the diagram on the left, the identity relation on $\Pi_{i=1}^n A_i$ as the one on the middle, and the transposed of $\R$ (defined by $b\R^t a$ iff $a\R b$) as the one on the right.
\[ \tikzfig{rel-def} \qquad \qquad \tikzfig{rel-identity} \qquad \qquad \tikzfig{rel-transposed} \]

We write $\one = \{()\}$ for the singleton set which is the zero-ary Cartesian product, and we will often omit diagrammatically the wires labeled by $\one$. Three notable relations are the swap $\gamma_{A,B} = \{ ((a,b),(b,a)) \mid a \in A, b \in B\}$, the cap $\eta_A = \{ ((),(a,a)) \mid a \in A\} : \one \to A \times A$ and the cup $\epsilon_A = \{((a,a),()) \mid a \in A\} : A \times A \to \one$, which we represent with bent wires as below. We also represent below the ``full'' relation $\bullet_{A,B} = A \times B : A \to B$.
\[ \tikzfig{rel-swap-cup-cap} \]
\begin{figure}[!h]
	\tikzfig{rel-equations-cup-cap}
	\caption{Equations for the Cup and Cap.} 
	\label{fig:rel-equations-cup-cap}
\end{figure}
We give in \Cref{fig:rel-equations-cup-cap} a couple of remarkable identities about them.
We note that there is a lot of arbitrary choices in those representations, as for example a relation from $A \times B$ to $\one$ could be represented in any of the following ways.
\[ \tikzfig{rel-equivalent-rep}\]

Nevertheless, those graphical representations remain practical, especially when representing various kind of compositions. 
The usual ones are the sequential compositions of $\R : A \to B$ with $\S : B \to C$, and the parallel composition of $\R : A \to B$ and $\S : C \to D$, which are defined as follows:
\[ \tikzfig{rel-normal-composition}\]

\begin{wrapfigure}{r}{0.2\textwidth}
	\tikzfig{rel-partial-composition}
\end{wrapfigure}
The general case is the partial composition. For example, given two relations $\R : A \to B \times C$ and $\S : C \times D \to E$, their composite relation $\{ ((a,d),(b,e)) \mid \exists c, a~\R~(b,c), (c,d)~\S~e \} : A \times D \to B \times E$ is written diagrammatically as shown on the right. Conversely, whenever we have a diagram that decomposes $\R : \Pi_{i=1}^n A_i \to \Pi_{j=1}^m B_j$, into the partial composition of $\R_1,\dots,\R_r$, and if we write $C_1, \dots, C_p$ for all the sets labeling the ``internal wires'' of the diagram, then we can build a logical formula equivalent to $(a_1,\dots,a_n)~\R~(b_1,\dots,b_m)$ of the following form:
\[ \exists c_1 \in C_1, \dots, \exists c_p \in C_p, (\_ \R_1 \_) \land \dots \land (\_ \R_r \_) \]
where the blanks $\_$ have to be filled with adequate tuples of $a_i$, $b_j$ and $c_k$. For example, the following diagram would correspond to the following logical formula:
\[a~\left(\tikzfig{rel-logic-example}\right)~b \iff \exists c_1 \in C_1, \exists c_2 \in C_2, ((a,c_1)~\R_1~(c_2,b)) \land (c_1~\R_2~c_2) \]
\begin{wrapfigure}{r}{0.2\textwidth}
	\tikzfig{rel-reorganise}
\end{wrapfigure}

We remark that if any of the $\R_\ell$ is the full relation $\bullet_{A,B}$, then it can safely be omitted from the logical formula as $(a~\bullet_{A,B}~b)$ is always true.

Combining the partial composition with the swap, cup and cap defined above, one can reorganize the inputs and outputs of a relation, turning for example a relation $\R : A \times B \times C \to D$ into a relation from $B \times A$ to $D \times C$ as on the right. 

All those diagrams can actually be formalized using string diagrams from category theory  \cite{MacLane}. Indeed, finite sets and relations form a strict symmetric monoidal category, called \bfup{FinRel}. We recall the definition of such a category in \Cref{app:relations}, though for the sake of this paper, the only required understanding is ``the diagrams works as intended, there is never a need to explicitly add bracketing, and all 'reasonable' ways of rewriting a diagram yield the same relation''. If we account for the cup and cap, relations even form a compact closed category\footnote{Where the dual is the transposition on relations, and the identity on objects.}, which exactly means that it satisfies the equations listed in \Cref{fig:rel-equations-cup-cap}. We redirect to \cite{selinger2011survey} for a survey of variations around monoidal categories and the corresponding diagrammatic notations.

\section{Transducers on Finite Words}
\label{sec:finite}

\subsection{Transducers and Regular Relations}

For any finite set $A$, called an alphabet, we denote $A^*$ the set of finite words over $A$, that is $A^* = \biguplus_{k \geq 0} A^k$. 
We write a word $w = w_1\dots w_k \in A^*$, $|w|$ for its size $k$ and $\epsilon$ for the empty word. 
The concatenation of two words $u$ and $v$, respectively of size $n$ and $m$, is denoted $uv$ and has size $n+m$. 
A subset of $A^*$ is called a language over the alphabet $A$. A \textbf{uniform relation} between two alphabets $A$ and $B$, denoted $\mathcal{R}:A^*\to B^*$, is a language over the product alphabet $A\times B$, equivalently, it's a relation between $A^* $ and $B^* $ only relating words of the same size, \textit{i.e.} $u~\R~v $ implies $|u|=|v|$.

The product of two uniform relations $A^* \to B^* $ and $C^* \to D^* $ is formally a relation $A^* \times C^* \to B^* \times D^* $ but to maintain the uniformity constraints, we choose to restrict it to a uniform relation $(A\times C)^* \to (B\times D)^* $. 
While of course $(A\times B)^* \neq A^* \times B^*$ as sets, we will identify the two as long as all relations involved are uniform. In other words, uniform relations are ignoring the elements of $A^* \times B^* $ that are not in $(A\times B)^*$. We keep the notation $\R \times \S$ for this uniform product, even if it is no longer a Cartesian product. Uniform relations are preserved by composition and (uniform) product, and include identities, as well as cups and caps. Similarly to \bfup{FinRel}, they form a compact closed strict symmetric monoidal\footnote{Note that the monoidal unit of \bfup{UniRel} is $\one^*$, that is the words over the singleton alphabet, which is isomorphic to $\mathbb{N}$.} category called \bfup{UniRel}, meaning we can use the exact same graphical representation for them. 

\begin{definition}[Transducer]
	A (non-deterministic finite) transducer is a tuple $(\T, A, B, Q, I, F)$ where $A$ is a finite set representing the \emph{input alphabet}, $B$ is a finite set representing the \emph{output alphabet}, $Q$ is a finite set representing the \emph{states}, $\T: A\times Q \to B\times Q$ the \emph{transition relation}, ${I} \subseteq Q$ the \emph{initial states} and ${F} \subseteq Q$ the \emph{final states}.
\end{definition}

Similarly to non-deterministic finite automata, we write $q \xrightarrow[b]{a} q'$ whenever $((a,q),(b,q'))$ is a transition of $\T$. A \textbf{run} of length $k$ within a transducer is a sequence
\[ q_0 \xrightarrow[b_1]{a_1} q_1 \xrightarrow[b_2]{a_2} \dots \xrightarrow[b_k]{a_k} q_k \]
such that $q_0 \in I$ and $q_k \in F$. When such a run exists, we say that the transducer can transform the word $a_1\dots a_k$ into $b_1\dots b_k$. The \emph{behavior} of the transducer, written $\L(\T, A, B, Q, I, F)$, is the uniform relation from $A^*$ to $B^*$ which relates $w$ to $v$ whenever the transducer can transform $w$ into $v$. 

A transducer with only inputs, said otherwise $B=\one$, is exactly a non-determinisitic finite automaton. Alternatively, every transducer from $A$ to $B$ can be seen as a non-deterministic finite automaton over $A \times B$. We recall that a \textbf{regular language}, is a language recognized by a finite automaton. 
A \textbf{regular relation} is a uniform relation $A^* \to B^* $ which is a regular language when seen as a subset of $(A\times B)^*$, which is the case if and only if it is the behavior of a transducer. 

\begin{proposition}\label{prop:RegRel-is-cat}
	Regular relations are preserved by composition and product. We call \bfup{RegRel} the subcategory of \bfup{UniRel} of regular relations.
\end{proposition}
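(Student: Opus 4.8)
The plan is to work with the transducer characterization of regular relations recalled just above the statement: a uniform relation is regular if and only if it is the behavior of some transducer. Thus, given transducers realizing the operands, it suffices to build transducers realizing their product and their composite and to check that the behaviors match. I would also record that the identity uniform relation on $A^*$ is regular, via the single-state transducer with one loop $q \xrightarrow[a]{a} q$ for each $a \in A$ and $I = F = \{q\}$; this guarantees that $\bfup{RegRel}$ contains the identities and is therefore genuinely a subcategory (the same one- and two-state recipes also realize the swaps, cups, and caps, so the compact closed structure restricts as well).

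For the product, let $(\T_1, A, B, Q_1, I_1, F_1)$ and $(\T_2, C, D, Q_2, I_2, F_2)$ realize $\R : A^* \to B^*$ and $\S : C^* \to D^*$. I would take the synchronous product on state set $Q_1 \times Q_2$, with input alphabet $A \times C$, output alphabet $B \times D$, initial states $I_1 \times I_2$, final states $F_1 \times F_2$, and transition relation putting $(q_1,q_2) \xrightarrow[(b,d)]{(a,c)} (q_1',q_2')$ exactly when $q_1 \xrightarrow[b]{a} q_1'$ in $\T_1$ and $q_2 \xrightarrow[d]{c} q_2'$ in $\T_2$. A run of length $k$ of this transducer is precisely a pair of length-$k$ runs, one in each factor, read componentwise; hence its behavior relates $(u,u')$ to $(v,v')$ iff $u\,\R\,v$ and $u'\,\S\,v'$, which is exactly the uniform product $\R \times \S$ under the identification $(A\times C)^* \cong A^* \times C^*$ that the text fixes for uniform relations.

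For composition, let $\T_1$ (states $Q_1$) realize $\R : A^* \to B^*$ and $\T_2$ (states $Q_2$) realize $\S : B^* \to C^*$. I would build a transducer for the composite $A^* \to C^*$ on state set $Q_1 \times Q_2$, with initial states $I_1 \times I_2$, final states $F_1 \times F_2$, and a transition $(q_1,q_2) \xrightarrow[c]{a} (q_1',q_2')$ whenever there exists $b \in B$ with $q_1 \xrightarrow[b]{a} q_1'$ in $\T_1$ and $q_2 \xrightarrow[c]{b} q_2'$ in $\T_2$. A length-$k$ run of the composite then amounts to a pair of length-$k$ runs in $\T_1$ and $\T_2$ sharing a common intermediate word $v = b_1 \dots b_k$, so the behavior relates $u$ to $w$ iff $\exists v,\ u\,\R\,v \land v\,\S\,w$, i.e. it is the relational composite. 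Here uniformity is exactly what makes the argument clean: since all relations are length-preserving, the two runs automatically have equal length and the per-position choices of $b$ glue into a well-formed intermediate word, with no $\epsilon$-transitions or length mismatch to manage.

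I do not expect a genuine obstacle: both constructions are the standard product-of-automata recipe, and the only point requiring care is the bookkeeping of the uniform identification $(A\times B)^* \cong A^* \times B^*$, which is transparent precisely because every relation in sight is length-preserving. As a slicker alternative for closure under composition, I could instead argue purely at the level of languages: viewing $\R$ and $\S$ inside words over $A \times B \times C$ by inverse letterwise projection, intersecting the two resulting regular languages, and then taking the letterwise projection onto $A \times C$ yields the composite; since regular languages are closed under inverse projection, intersection, and (length-preserving) projection, regularity is preserved either way.
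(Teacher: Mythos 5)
Your proposal is correct and follows essentially the same route as the paper: the paper's proof (in the appendix) builds exactly the same synchronous product transducer on $Q_1\times Q_2$ for the uniform product, and the same $Q_1\times Q_2$ transducer with an existentially quantified intermediate letter for composition. Your additional remarks on identities (and the alternative argument via closure of regular languages under inverse projection, intersection, and letterwise projection) go slightly beyond what the paper writes down but are consistent with it.
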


The proof is based on standard construction of products and composition of transducers.


Any relation $\mathcal{R}:A\to B$ can be lifted to a regular\footnote{The corresponding transducer has only one state which is accepting and initial and its transition relation is directly $\mathcal{R}$ seen as having type $A\times \one \to B\times \one $.} relation $\mathcal{R}^* : A^* \to B^* $ defined letter by letter: $\mathcal{R}^* = \{ (u,v) \mid |u| = |v|, \forall 1 \leq i \leq |u|, (u_i,v_i) \in \R\}$. Categorically, this lift is a faithful strong symmetric monoidal functor from \bfup{FinRel} to \bfup{RegRel}, which means:

\begin{description}
	\item[Functoriality] $(\S \circ \R)^* = \S^* \circ \R^*$ and $(\id_A)^* = \id_{A^*}$.
	\item[Strong Monoidality] $(\R \times \S)^*$ can be identified with $\R^* \times \S^*$ within \bfup{RegRel}.
	\item[Symmetric] $(\gamma_{A,B})^* = \gamma_{A^*,B^*}$.
	\item[Faithfullness] If $\R^* = \S^*$, then necessarily $\R = \S$.
\end{description}

This notion of transducers and regular relations appeared early in the history of theoretical computer science, one can find similar definitions, though in a different form in \cite{DBLP:journals/ibmrd/ElgotM65}.

\subsection{Diagrammatic Representation of Transducers}

Looking at $\R : \Pi_{i=1}^n A_i \to \Pi_{j=1}^m B_j$, we can see its lifted relation $\R^*$ as going from $\Pi_{i=1}^n A_i^*$ to $\Pi_{j=1}^m B_j^*$, since $(\Pi_{i=1}^n A_i)^*$ is identified with $\Pi_{i=1}^n A_i^*$ (we are only working with uniform relations here).
Diagrammatically, this allows us to apply $\_^*$ to individual wires:
\[ \tikzfig{rel-lifted}\]

When looking at uniform relations over words, one such regular\footnote{It is the behavior of the transducer $(\gamma_{A,A},A,A,A,I,F)$.} relation will be particularly useful, the \textbf{finite shift}  $\shift{I}{F}_A : A^* \to A^*$ for a finite set $A$ and two subsets $I,F \subseteq A$, defined as $\{ (w,v) \mid \exists i \in I, \exists f \in F, iw = vf \}$. Said otherwise, $w$ and $v$ are related by the shift if either both are empty and $I \cap F \neq \varnothing$, or if neither are empty and the first letter of $v$ is in $I$, the last letter of $w$ is in $F$, and the remaining letters satisfy $w_k = v_{k+1}$. Diagrammatically, we represent the shift and its transposed:
$\tikzfig{rel-shift-finite}$.

The shift allow to obtain the behavior of a transducer in a compositional way.

\begin{proposition}[Proved in \Cref{app:transducer-fin}]\label{prop:transducer-fin}
	For every transducer $(\T, A, B, Q, I,F)$, its behavior can be obtained by lifting $\T$ and using the shift as follows:
	\[ \tikzfig{rel-trans-fin} \]
\end{proposition}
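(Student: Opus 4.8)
The plan is to unfold the right-hand side into a first-order formula via the diagram-to-formula correspondence from \Cref{sec:relations}, and then to check that this formula describes exactly the accepting runs of the transducer.

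First I would read the formula off the diagram. The two internal wires are labelled $Q^*$; they carry a \emph{source} state word $s$ and a \emph{target} state word $t$, and the cup and cap wire the output $Q^*$ of $\T^*$ back into its input $Q^*$ through the shift $\shift{I}{F}_Q$. After eliminating the trivial equalities contributed by the cup and cap, the partial-composition rule yields that $w$ and $v$ are related by the right-hand diagram exactly when
\[ \exists\, s,t \in Q^*,\quad (w,s) \mathrel{\T^*} (v,t) \ \land\ t \mathrel{\shift{I}{F}_Q} s, \]
the shift being read as sending the target word $t$ back to the source word $s$.

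Then I would expand the two conjuncts. Because $\T^*$ is the letterwise lift of $\T$, the relation $(w,s) \mathrel{\T^*} (v,t)$ holds precisely when the four words share a common length $k$ and $s_i \xrightarrow[v_i]{w_i} t_i$ for every $1 \le i \le k$. By the definition of the shift, $t \mathrel{\shift{I}{F}_Q} s$ holds precisely when $s_1 \in I$, $t_k \in F$, and $t_j = s_{j+1}$ for $1 \le j \le k-1$ (and, in the degenerate case $k = 0$, when $I \cap F \ne \varnothing$).

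Finally I would match this with the definition of a run. Setting $q_0 := s_1$ and $q_i := t_i$ for $1 \le i \le k$, the overlap conditions $t_j = s_{j+1}$ force $s_i = q_{i-1}$ and $t_i = q_i$, so the transition conditions become $q_{i-1} \xrightarrow[v_i]{w_i} q_i$ while the boundary conditions become $q_0 \in I$ and $q_k \in F$; this is exactly an accepting run transforming $w$ into $v$, and conversely every such run provides the required $s$ and $t$ by reading off its source and target states. The only delicate point is this off-by-one re-indexing between the source and target state words, which is precisely what the shift is designed to absorb; the empty-word case $k=0$ must be treated separately, but there $\T^*$ relates $(\epsilon,\epsilon)$ to $(\epsilon,\epsilon)$ unconditionally while the shift contributes $I \cap F \ne \varnothing$, matching a length-zero run. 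I expect no genuine obstacle beyond this bookkeeping, since both sides collapse to the same existential formula.
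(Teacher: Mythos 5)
Your proposal is correct and follows essentially the same route as the paper's proof in \Cref{app:transducer-fin}: unfold the diagram into the existential formula $\exists s,t \in Q^*,\ (w,s)\,\T^*\,(v,t) \land t\,\shift{I}{F}_Q\,s$, expand the letterwise lift and the shift, and re-index the two state words into a single sequence $q_0,\dots,q_k$ matching the definition of a run. Your explicit treatment of the empty-word case is a small bonus over the paper's version, which leaves it implicit in the ``there exists a sequence $q_0,\dots,q_n$'' formulation.
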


\subsection{Diagrammatic Language for Transducers}\label{sec:diagrams-blue}

\begin{wrapfigure}{r}{0.2\textwidth}
		\tikzfig{graph-from-trans-fin}
\end{wrapfigure}
This is however not the only way to represent transducers using relations. Another approach is to start from the category \bfup{FinRel} and freely add a feedback, state that a transducer $(\T,Q,B,Q,I,F)$ should be able to be represented by the diagram on the right, and then find additional equations ensuring that this feedback operator behaves exactly as transducers do.

Formally, we define a graphical language \bfup{Trans}, so a syntactical construct, where the diagrams are seen as mathematical objects distinct from the relations they represent -- which is a category where the objects are finite sets, and where the morphisms are generated by composing sequentially and in parallel the generators below together with the usual equations of a strict symmetric monoidal category (see \Cref{appfig:rel-equations-fin} in the appendix for a reminder), the equations of a feedback  category\footnote{Some authors, like in \cite{Katis09}, have an additional equation which allows to slide isomorphisms through the feedback. While we do not have this equation, it is actually a special case of \Cref{fig:graph-slide}, which is part of the completed equational theory.} (see \Cref{fig:graph-feedback}), and some equations ensuring we faithfully embed \bfup{FinRel} (see \Cref{fig:graph-finrel}).
	\[\tikzfig{graph-generators}\]

\begin{figure*}
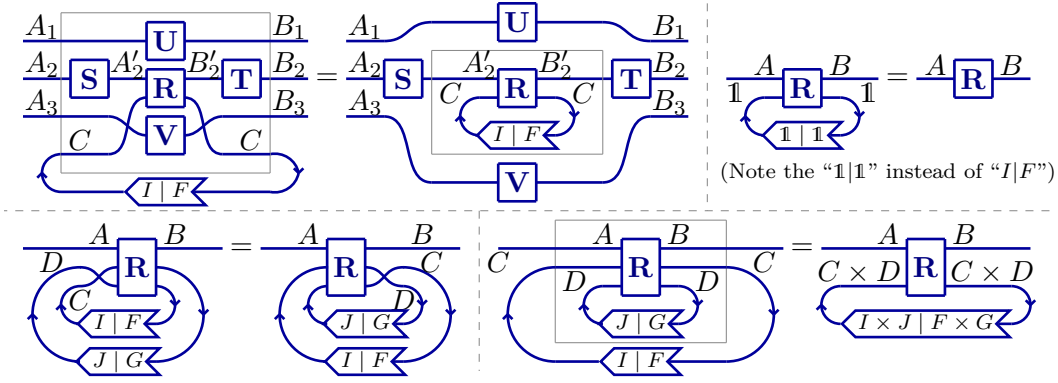

	\tikzfig{graph-feedback}
	\caption{Equations for a Feedback Category.}
	\label{fig:graph-feedback}
\end{figure*}

\begin{figure*}
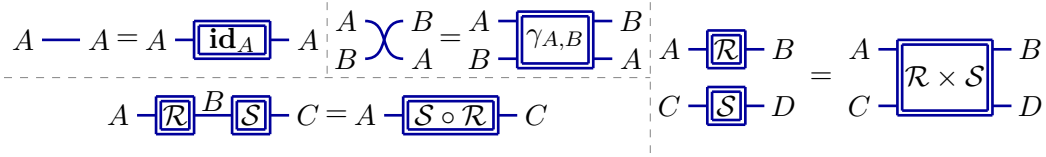

	\tikzfig{graph-finrel}
	\caption{Equations for Faithfully Embedding \bfup{FinRel}.}
	\label{fig:graph-finrel}
\end{figure*}

Let us point a couple of facts about this language:
\begin{itemize}
	\item A bundle of wires labeled $A_1,\dots,A_n$ is the same as a single wire labeled $\Pi_{i=1}^n A_i$.
	\item The wires and syntactical constructs are in \bfup{\blue{thick blue}}, to distinguish them from actual relations. $\R$ refers to an actual relation while $\bR$ refers to an element of our language.
	\item Double-line boxes denote the generator, while single-line boxes denote any diagram potentially constituted of many generators, including feedbacks.
	\item The arrows on the feedback are a reminders that those are not the same as the cup and cap of \Cref{sec:relations}.
	\item Within \Cref{fig:graph-feedback}, the gray lines correspond to bracketing, and similarly to the equations of a strict symmetric monoidal category, the overall consequences of those equations is that bracketing can be safely ignored. Additionally, compared to the literature, those equations had to be adapted to the presence of labels on the feedback.
	\item The equations of \Cref{fig:graph-finrel} ensures that if a diagram does not contain any instance of the ``feedback'' generator, we can merge all the generators into a single double-line box.
	\item The current equational theory is incomplete, additional equations will be added in \Cref{fig:fin-simulation-principle} to obtain completeness.
\end{itemize}

In order to ensure that the equations are not contradictory\footnote{Which would lead to the trivial category where all the diagrams are equal to one another.}, we provide a semantics and prove soundness of our equations. The semantics is a strong symmetric monoidal functor from \bfup{Trans} to \bfup{UniRel}, which we write $\binterp{-}$, and is actually simply ``removing the color and adding a $\_^*$ everywhere''. We provide an explicit definition in the appendix (see \Cref{appfig:fin-interp}). Soundness means that if one rewrites a diagram $\bR$ into $\bS$ using any of the listed equations, we still have $\binterp{\bR} = \binterp{\bS}$.
Completeness would be the other way around, whenever $\binterp{\bR} = \binterp{\bS}$ then we could rewrite $\bR$ into $\bS$ (in other words,  $\binterp{-}$ would be faithful).
While not having completeness yet, we still have a quasi-normal form, and universality for regular relations.
		
\begin{proposition}[Quasi-Normal Form]\label{prop:normal-form-fin}~\\
	\begin{tabular}{@{ }p{11cm}l@{ }}
	Any diagram of $\bR \in \bfup{Trans}$ from $A$ to $B$ can be put in the form on the right for some finite set $Q$ and $\T \in \bfup{FinRel}$.&\tikzfig{graph-from-trans-fin}
	\end{tabular}
\end{proposition}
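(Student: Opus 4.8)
The plan is to show that any diagram in \bfup{Trans} can be reduced to a single feedback loop around a box coming from \bfup{FinRel}. The key structural idea is an induction on the construction of the diagram, using the generators and the equations of \Cref{fig:graph-feedback,fig:graph-finrel}. First I would establish the base cases: every generator other than the feedback operator is, by the equations of \Cref{fig:graph-finrel}, expressible as a double-line \bfup{FinRel}-box, and such a box is trivially in the desired form by taking the state space $Q = \one$ (an empty feedback loop, using the relevant equation of \Cref{fig:graph-feedback} that a feedback over the monoidal unit is the identity). The feedback generator applied to a \bfup{FinRel}-box is already exactly the target form.

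The inductive step handles the two composition operators, and this is where the real work lies. Suppose $\bR : A \to B$ and $\bS : B \to C$ are each already in quasi-normal form, say $\bR$ is a feedback with label $Q_\bR$ around a box $\T_\bR \in \bfup{FinRel}$ and similarly $\bS$ around $\T_\bS$ with label $Q_\bS$. For the sequential composite $\bS \circ \bR$, the plan is to use the feedback axioms to merge the two nested loops into a single loop over the product state space $Q = Q_\bR \times Q_\bS$: one slides the two feedback loops so they sit side by side, combines them into a single feedback over $Q_\bR \times Q_\bS$ (the ``tightening''/superposing axioms of the feedback category in \Cref{fig:graph-feedback}), and then the inner part is now a feedback-free diagram built only from $\T_\bR$, $\T_\bS$, swaps, and identities; by the equations of \Cref{fig:graph-finrel} this feedback-free interior collapses into a single \bfup{FinRel}-box $\T \in \bfup{FinRel}$ of the correct type $A \times Q \to B \times Q$. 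The parallel composite is handled analogously, combining the two feedback loops into a feedback over $Q_\bR \times Q_\bS$ and again collapsing the feedback-free interior.

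The main obstacle I anticipate is bookkeeping the wire reorderings. Because the state wire must be threaded through the feedback while the input/output wires pass straight through, merging two loops into one over a product object requires carefully inserting swaps $\gamma$ to reposition the $Q_\bR$ and $Q_\bS$ wires, and one must verify that the resulting rearrangement is exactly one of the sliding/tightening moves permitted by \Cref{fig:graph-feedback} (adapted, as the excerpt notes, to the presence of labels on the feedback). The subtle point is that the interior, after the loops are combined, genuinely contains no feedback generator, so that \Cref{fig:graph-finrel} applies to fuse it into one box; I would argue this by noting that all feedback operators have been pushed to the single outermost loop. Once the diagram is feedback-free except for that outer loop, the collapse to a \bfup{FinRel}-box is purely the content of \Cref{fig:graph-finrel} and introduces no further difficulty.

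Finally, I would remark that the induction covers all of \bfup{Trans} since every morphism is, by definition, a sequential and parallel composite of generators, so the two inductive steps together with the base cases suffice. No appeal to completeness or to the semantics $\binterp{-}$ is needed: this is a purely syntactic normalization carried out inside the equational theory as stated before \Cref{fig:fin-simulation-principle} is introduced.
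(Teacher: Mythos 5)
Your proposal is correct and uses exactly the same ingredients as the paper's proof: the feedback-category equations to pull the loops outward and merge them into a single feedback over a product state space, and the \bfup{FinRel}-embedding equations to fuse the remaining feedback-free interior into one box. The only difference is presentational — you organize the argument as a structural induction on the diagram, whereas the paper phrases it as a single global rewriting pass (push all feedbacks to the bottom, merge the non-feedback part, merge the feedbacks) — but the mathematical content is identical.
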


\begin{proof}
	We start by using the first equation of \Cref{fig:graph-feedback} from right to left to push all the feedbacks at the bottom of the diagram. Then, we use the equations of \Cref{fig:graph-finrel} to merge all the non-feedback into a single box. Lastly, we use the last equation of \Cref{fig:graph-feedback} to merge all the feedbacks into a single feedback.
\end{proof}
\begin{theorem}[Universality]\label{thm:universality-fin}
	For all $\bS \in \bfup{Trans}$, $\binterp{\bS}$ is a regular relation. For all regular relation $\R$, there exists $\bS \in \bfup{Trans}$ such that $\binterp{\bS} = \R$.
\end{theorem}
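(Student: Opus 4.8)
The plan is to prove the two directions of universality separately, each exploiting the machinery already set up in the excerpt.

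\medskip

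\textbf{First direction (soundness/expressivity): every diagram denotes a regular relation.} I would argue by structural induction on the construction of $\bS \in \bfup{Trans}$. The base cases are the generators: each generator, once we apply $\binterp{-}$ (which ``removes the color and adds a $\_^*$ everywhere''), is either the lift $\R^*$ of a finite relation $\R \in \bfup{FinRel}$ — hence regular by the footnote construction noting $\R^*$ is the behavior of a single-state transducer — or the feedback generator, whose interpretation I would identify (via \Cref{prop:transducer-fin}) with a composite of a lifted relation and a finite shift $\shift{I}{F}_A$, both of which are regular. For the inductive step, I invoke \Cref{prop:RegRel-is-cat}: regular relations are closed under sequential composition and (uniform) product, so any diagram built by composing generators sequentially and in parallel again denotes a regular relation. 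This gives $\binterp{\bS} \in \bfup{RegRel}$ for all $\bS$.

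\medskip

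\textbf{Second direction (completeness of expressivity): every regular relation is denoted by some diagram.} Let $\R : A^* \to B^*$ be a regular relation. By definition $\R$ is the behavior $\L(\T,A,B,Q,I,F)$ of some transducer. The key is \Cref{prop:transducer-fin}, which already expresses this behavior as the interpretation of a specific diagrammatic expression: the lift of the transition relation $\T : A \times Q \to B \times Q$ composed with the finite shift $\shift{I}{F}_Q$ on the state wire (this is precisely the shape displayed in \Cref{prop:normal-form-fin} and in the wrapfigure \tikzfig{graph-from-trans-fin}). The remaining task is to check that this expression is realizable inside the syntactic category \bfup{Trans}: the lift $\bT$ of $\T$ is a single double-line generator box, and the feedback generator supplies exactly the wiring that \Cref{prop:transducer-fin} uses for the shift on the state space $Q$. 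Since $\binterp{-}$ is defined to act as lifting-plus-decoloring, the diagram $\bS$ formed by placing $\bT$ into one feedback loop satisfies $\binterp{\bS} = \R$ by construction.

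\medskip

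The main obstacle is bookkeeping rather than conceptual: I must confirm that the feedback generator's interpretation genuinely coincides with the finite-shift presentation of \Cref{prop:transducer-fin}, i.e.\ that the initial/final-state data $I,F$ encoded by the shift $\shift{I}{F}_Q$ matches what the feedback operator produces under $\binterp{-}$. Concretely, one has to verify that ``closing the state wire with a feedback labeled by $(I,F)$'' interprets to iterating the transition relation while constraining the first state to lie in $I$ and the last in $F$ — exactly the run condition $q_0 \in I$, $q_k \in F$. This is the content that \Cref{prop:transducer-fin} was designed to furnish, so the proof reduces to citing it together with \Cref{prop:RegRel-is-cat}; I would keep the write-up short and lean on those two results rather than re-deriving the shift identity.
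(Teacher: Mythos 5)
Your second direction matches the paper's proof: both realize a regular relation's transducer as a single lifted box closed by one feedback and invoke \Cref{prop:transducer-fin} to identify the interpretation with the behavior. For the first direction you take a genuinely different route: the paper puts an arbitrary diagram into quasi-normal form (\Cref{prop:normal-form-fin}) --- which requires the soundness of the rewriting rules --- and then applies \Cref{prop:transducer-fin} once, whereas you argue by structural induction with \Cref{prop:RegRel-is-cat} supplying closure under sequential and parallel composition. Your induction is viable, but one step is misfiled: the feedback is not a base-case generator whose interpretation you can read off from \Cref{prop:transducer-fin}; it is an operator applied to an arbitrary sub-diagram $\bR$ (possibly itself containing feedbacks), and \Cref{prop:transducer-fin} only covers the case where the loop encloses a single lifted box $\T^*$. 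The inductive step you actually need is: if $\binterp{\bR}$ is regular, then so is the interpretation of its feedback-closure. This holds because that interpretation is the composite, in \bfup{UniRel}, of $\binterp{\bR}$ with the transposed shift $\transpose{\left(\shift{I}{F}\right)}$ wired through cups and caps, and each piece is regular --- the shift as the behavior of the transducer $(\gamma_{Q,Q},Q,Q,Q,I,F)$ noted in the paper, and cups and caps as lifts of finite relations --- so \Cref{prop:RegRel-is-cat} closes the argument. With that one-line repair your proof is complete; the trade-off between the two approaches is that the paper's normal-form argument collapses all feedbacks at once but leans on the soundness of the equational theory, while yours is local, purely semantic, and independent of the rewriting rules.
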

\begin{proof}
	This follows from \Cref{prop:normal-form-fin} and \Cref{prop:transducer-fin}. 
\end{proof}

\subsection{Simulation Principle and Completeness}\label{sec:completeness-fin}

We claim that we can complete our equational theory by adding a single equation, the simulation principle of \Cref{fig:fin-simulation-principle}.

\begin{figure}[!h]
	\[\tikzfig{fin-simulation-principle}\]
	\caption{Simulation Principle for Finite Words.}
	\label{fig:fin-simulation-principle}
\end{figure}

The core idea behind this principle is that a finite run can be see as a diagram starting with $I$, followed by a finite number of applications of $\R$ and finally ends with $F$. In such diagram, $I$ can turn into $J$ by "spawning" a $\S$, which can then move through $\R$, turning it into $\T$, and finally be absorbed by $F$ to turn it into $G$, proving the equivalence with a run in the second transducer. This principle is sound with respect to $\binterp{-}$ (this is proved formally in \Cref{app:fin-simulation-principle}). The principle directly implies the sliding rule \Cref{fig:graph-slide} allowing in particular to equate transducers with isomorphic sets of states. Notice that \Cref{fig:graph-slide} allows for a $\bR$ that might itself include some feedbacks.

\begin{figure}[h]
	\[\tikzfig{graph-slide}\]
	\caption{Sliding.}
	\label{fig:graph-slide}
\end{figure}

\begin{theorem}[Completeness]\label{thm:completeness-fin} For $\bR$ and $\bT$ two diagrams of \bfup{Trans} from $A$ to $B$. Whenever $\binterp{\bR} = \binterp{\bT}$, we can rewrite $\bR$ into $\bT$ by using only the rules of \bfup{Trans} and \Cref{fig:fin-simulation-principle}.
\end{theorem}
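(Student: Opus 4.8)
The plan is to put both diagrams in quasi-normal form and then connect the resulting transducers by a finite chain of applications of the simulation principle, routed through the canonical minimal automaton of their common language. First I would apply \Cref{prop:normal-form-fin} to write $\bR$ and $\bT$ in transducer shape, so that $\bR$ is (the diagram of) a transducer $N_1 = (\R_1, A, B, Q_1, I_1, F_1)$ and $\bT$ is $N_2 = (\R_2, A, B, Q_2, I_2, F_2)$. By soundness and \Cref{prop:transducer-fin}, the interpretation of a normal-form diagram is the behavior of the corresponding transducer, so the hypothesis $\binterp{\bR} = \binterp{\bT}$ says exactly that $N_1$ and $N_2$, viewed as non-deterministic automata over $A \times B$, recognize the same regular language $L \subseteq (A \times B)^*$. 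It therefore suffices to prove that each $N_i$ is provably equal, using only the equational theory and \Cref{fig:fin-simulation-principle}, to the minimal deterministic automaton $M_L$ of $L$; since $M_L$ depends only on $L$, chaining gives $\bR = N_1 = M_L = N_2 = \bT$.

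The two intermediate steps realize determinization and minimization as instances of the simulation principle. For determinization I would take the accessible subset construction $\mathrm{Det}(N_i)$, a transducer whose transition $\R_{\mathrm{Det}}$ is deterministic as an automaton over $A\times B$ and whose states are the reachable subsets of $Q_i$, and apply the principle with the membership relation $\S \colon 2^{Q_i} \to Q_i$ relating a subset to each of its elements, read as a simulation from $\mathrm{Det}(N_i)$ to $N_i$. The three side conditions hold on the nose: the unique initial subset is sent to $I_i$ by $\S$, a subset is declared final iff it meets $F_i$ (so that $F_{\mathrm{Det}} = F_i \circ \S$), and, crucially, $(\id_B \otimes \S) \circ \R_{\mathrm{Det}} = \R_i \circ (\id_A \otimes \S)$, since both relate $(a,S)$ to $(b,q')$ precisely when some $q \in S$ satisfies $(a,q)\,\R_i\,(b,q')$. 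Hence the principle yields $N_i = \mathrm{Det}(N_i)$. For minimization, $\mathrm{Det}(N_i)$ is deterministic with all states accessible, so its Myhill--Nerode quotient $\pi \colon \mathrm{Det}(N_i) \twoheadrightarrow M_L$ is a functional automaton morphism that reflects finality; applying the principle with $\S = \pi$ (checking $I_{M_L} = \pi \circ I_{\mathrm{Det}}$, the congruence condition $\pi \circ \R_{\mathrm{Det}} = \R_{M_L} \circ \pi$, and $F_{\mathrm{Det}} = F_{M_L} \circ \pi$) gives $\mathrm{Det}(N_i) = M_L$. Finally, the minimal automaton is unique up to isomorphism, and isomorphic transducers are already identified by the sliding rule of \Cref{fig:graph-slide}, so both chains reach literally the same diagram $M_L$.

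Concatenating the equalities produces $\bR = N_1 = \mathrm{Det}(N_1) = M_L = \mathrm{Det}(N_2) = N_2 = \bT$, each step being an equation available in the completed theory, which proves the theorem. I expect the main obstacle to be the determinization step: one must verify that the membership relation satisfies the simulation principle \emph{exactly} as a relational identity, not merely up to language inclusion, while keeping careful track of the alphabet wires $A$ and $B$ carried alongside the state wire, and one must confirm that the accessible subset construction is genuinely well defined on reachable states so that this identity restricts correctly. By contrast, the minimization step is the routine fact that a congruence reflecting finality induces an automaton morphism, and the reduction to transducer shape is already supplied by \Cref{prop:normal-form-fin}; the only remaining care is to check that each construction lands in the exact transducer form to which \Cref{fig:fin-simulation-principle} applies.
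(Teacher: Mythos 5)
Your proposal follows essentially the same route as the paper's proof: reduction to quasi-normal form via \Cref{prop:normal-form-fin}, determinization realized as an instance of the simulation principle with the membership relation $\ni$, minimization realized as an instance of the simulation principle with the residual-language (Myhill--Nerode) map, and uniqueness of the minimal deterministic automaton to join the two chains. The only deviation is cosmetic: the paper first bends the output wire $B$ into an input using the cup $\epsilon_B$ (undoing it at the end via $(\id_B \times \epsilon_B) \circ (\eta_B \times \id_B) = \id_B$) so that \Cref{cor:determinization} and \Cref{cor:minimization}, stated for automata with trivial output, apply literally, whereas you keep the transducer shape and treat $A \times B$ as the alphabet with the two wires carried separately --- a routine restatement whose needed verification (the intertwining identity for $\ni$ holding on the nose with the split wires) your own closing caveat correctly identifies.
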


The core idea of the proof is diagrammatically mimic the determinization and minimization from automata theory. We recall that a finite automaton is simply a transducer with $\one$ for output alphabet, so we start by studying completeness in the case where $B = \one$.

\begin{definition}[Determinization]
	Let $(\T,A,Q,I,F)$ be a finite automaton. We write $\xrightarrow[\text{\small $\T$}]{}$ for the transitions within that automaton. Its determinization is the automata \[(\P(\T), A, \P^{\textup{acc}}(Q), \{I\}, \P^{\textup{acc}}_{\cap F \neq \varnothing}(Q))\]
	More precisely, we start by recalling that $\P(Q)$ is the set of all subsets of $Q$. We use $x,y$ for elements of $Q$, and $X$,$Y$ for elements of $\P(Q)$.  Then, we define the function\footnote{We consider functions to be a special case of relations.} $\P(\T): A \times \P(Q) \to \P(Q)$ as 
	$ \P(\T)(a,X)=\{y\in Q ~|~ \exists x\in X,~  x \xrightarrow[\text{\small $\T$}]{a} y \} $.
	We then consider the set $\P^{\textup{acc}}(Q)$ of subsets of $Q$ accessible by iteration of that function, starting from $I$. We can now restrict $\P(\T)$ to a function $A \mapsto \P^{\textup{acc}}(Q) \to \P^{\textup{acc}}(Q)$. Lastly, $\P^{\textup{acc}}_{\cap F \neq \varnothing}(Q)$ is the set of accessible subsets of $Q$ that have a non-empty intersection with $F$.
\end{definition}

\begin{proposition}\label{prop:determinization}
	For all finite automata $(\T,A,Q,I,F)$, we have
	\[ \tikzfig{fin-determinization}\]
	where $\ni : \P^{\textup{acc}}(Q) \to Q$ is the usual ``contains'' relation, that is $X \ni x$ whenever $x \in X$.
\end{proposition}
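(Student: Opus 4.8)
The plan is to present the equation of \Cref{prop:determinization} as a single instance of the simulation principle (\Cref{fig:fin-simulation-principle}). I would take the source transducer to be the original automaton $(\T,A,Q,I,F)$, the target transducer to be its determinization $(\P(\T),A,\P^{\textup{acc}}(Q),\{I\},\P^{\textup{acc}}_{\cap F \neq \varnothing}(Q))$, and the simulation relation to be exactly $\ni : \P^{\textup{acc}}(Q) \to Q$ (or its transpose $\in$, depending on the orientation chosen in the figure). With this assignment the displayed equation is literally the conclusion of the principle, so the whole argument reduces to verifying that $\ni$ meets the three hypotheses of \Cref{fig:fin-simulation-principle}.

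Next I would unfold those three hypotheses as equalities of relations in \bfup{FinRel}. The \emph{transition compatibility} condition — that $\ni$ can be slid through one step of $\P(\T)$ to become one step of $\T$ — is nothing but the defining equation $\P(\T)(a,X)=\{y \mid \exists x \in X,\ x \xrightarrow{a} y \text{ in } \T\}$ read as an equality of relations $A \times \P^{\textup{acc}}(Q) \to Q$; the point to stress is that this is a genuine \emph{equality}, both inclusions holding, because every element of the successor subset comes from some predecessor in $X$. The \emph{initial compatibility} condition holds since the $\ni$-image of the determinized initial state (the subset $I \in \P^{\textup{acc}}(Q)$ itself) is exactly the set $I \subseteq Q$ of initial states of the original automaton, as $I \ni x \iff x \in I$. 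The \emph{final compatibility} condition holds since a subset $X$ lies in $\P^{\textup{acc}}_{\cap F \neq \varnothing}(Q)$ precisely when $X \ni x$ for some $x \in F$, i.e.\ when $X$ meets $F$. Each of these is a one-line set-theoretic check straight from the definitions of $\P(\T)$, $\{I\}$ and $\P^{\textup{acc}}_{\cap F \neq \varnothing}(Q)$.

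Having matched the three equalities to the graphical hypotheses of the principle (the spawning of $\ni$ at the initial box, its sliding across the transition box, and its absorption at the final box), I would invoke \Cref{fig:fin-simulation-principle} once to obtain the equation. Since the simulation principle is already known to be sound for $\binterp{-}$, no separate soundness verification is needed.

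The step I expect to be the main obstacle is handling the restriction to \emph{accessible} subsets $\P^{\textup{acc}}(Q)$. The transition-compatibility equality must be established on $\P^{\textup{acc}}(Q)$ rather than on all of $\P(Q)$, so I would first confirm that $\P(\T)$ does restrict to an endofunction of $\P^{\textup{acc}}(Q)$ (accessible subsets are closed under the transition by construction) and then check that narrowing to accessible subsets weakens neither inclusion. This is exactly where the book-keeping of the subset construction lives: accessibility is what makes $\ni$ \emph{tight}, in the sense that each reachable determinized state decodes under $\ni$ to precisely the set of original states reachable by the same input word, and this tightness is what forces the simulation equation to hold in both directions rather than as a one-sided refinement.
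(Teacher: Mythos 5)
Your proposal inverts the architecture of the paper at this point. The displayed equation of \Cref{prop:determinization} is not the conclusion of the simulation principle; it is precisely the nontrivial \emph{premise} for it. It is an equation in \bfup{FinRel} between feedback-free composites of plain relations, namely the sliding condition $\T \circ (\id_A \times \ni) \;=\; \ni \circ \P(\T)$ as relations $A \times \P^{\textup{acc}}(Q) \to Q$, so it cannot be obtained as an instance of the conclusion of \Cref{fig:fin-simulation-principle}, whose conclusion equates two diagrams of \bfup{Trans} containing feedbacks. The behavioral statement you set out to prove is \Cref{cor:determinization}, which the paper derives \emph{from} the proposition by then invoking the principle with $\S = \ni$ (together with the one-line initial and final conditions $\ni \circ \{I\} = I$ and $X \in \P^{\textup{acc}}_{\cap F \neq \varnothing}(Q) \iff \exists x \in F,\ X \ni x$, which you verify correctly but which belong to the corollary, not to this proposition). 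That said, your ``transition compatibility'' paragraph is, in substance, exactly the paper's entire proof of \Cref{prop:determinization}: unfold both composites into logical formulas as in \Cref{sec:logic}, eliminate the existential over $Y$ using that $\P(\T)$ is a function, and conclude from the definition of $\P(\T)$, insisting as you do that both inclusions hold. So the mathematics needed for the statement is present in your write-up; what is wrong is the claim that the statement follows by one application of the simulation principle.

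A secondary inaccuracy concerns your closing discussion of accessibility. It is not accessibility that ``forces the simulation equation to hold in both directions rather than as a one-sided refinement'': the two-sidedness comes solely from $\P(\T)(a,X)$ being defined as the \emph{exact} successor set, so that every $y \in \P(\T)(a,X)$ admits some predecessor $x \in X$ by construction; the equation would hold verbatim on all of $\P(Q)$. The only role accessibility plays here is that $\P^{\textup{acc}}(Q)$ must be closed under $\P(\T)$ for the restriction to be well-defined — a point you do check — and its real purpose is downstream: the determinized automaton must have every state accessible from its initial state so that the minimization step (\Cref{prop:minimization}) applies in the completeness proof.
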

\begin{proof}
	Using the logical reasoning as in \Cref{sec:logic}, we can rewrite the equation as the following. We are looking at $\forall a\in A,~ \forall X\in \P^{\textup{acc}}(Q),~\forall y\in Q,~$
	
	\[ \left(\exists x\in Q,~ (x\in X) \land (x \xrightarrow[\text{\small$\T$}]{a} y)\right) ~\Leftrightarrow~ \left(\exists Y\in \P^{\textup{acc}}(Q),~   (X \xrightarrow[\text{\small$\P(\T)$}]{a} Y) \land (y\in Y)\right)\]
	We start by reformulating the right side of the equivalence, as $\P(\T)$ is a function, we can remove the $\exists$ and write $y \in \P(\T)(a,X)$. Then, using the definition of $ \P(\T)$, we obtain that it is equivalent to $\exists x \in X, x \xrightarrow[\text{\small$\T$}]{a} y$, which is exactly the left side of the equivalence.
\end{proof}

Rephrased in our graphical language and applyin the simulation principle, this gives:
\begin{corollary}\label{cor:determinization}
	For any finite automaton $(\T,A,Q,I,F)$, using \Cref{fig:fin-simulation-principle}, we have
	\[ \tikzfig{fin-determinization-conclusion}\]
\end{corollary}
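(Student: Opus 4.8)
The plan is to read the claimed identity as a single instance of the simulation principle of \Cref{fig:fin-simulation-principle}, applied to the two transducers \[(\T, A, Q, I, F)\] and its determinization \[(\P(\T), A, \P^{\textup{acc}}(Q), \{I\}, \P^{\textup{acc}}_{\cap F \neq \varnothing}(Q)),\] with the simulating relation taken to be the ``contains'' relation $\S = {\ni} : \P^{\textup{acc}}(Q) \to Q$ of \Cref{prop:determinization}. By \Cref{prop:transducer-fin} the behavior diagrams of both transducers are determined by their transition relations, initial-state sets and final-state sets, so the two sides of the claimed equation are exactly the left- and right-hand feedback diagrams of the simulation principle once these data are plugged in. The whole statement therefore reduces to verifying that $\ni$ satisfies the three local conditions demanded by the figure: that it mediates the initial states, commutes the two transition relations, and mediates the final states.

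The transition condition is already supplied by \Cref{prop:determinization}: its relational content is precisely that applying $\ni$ to a subset and then the original transition $\T$ agrees, as a relation $A \times \P^{\textup{acc}}(Q) \to Q$, with first applying $\P(\T)$ and then $\ni$. This is exactly the ``$\S$ slides through $\R$, turning it into $\T$'' half of the principle, i.e. the commuting simulation square for the feedback wires. The two boundary conditions are then immediate computations in \bfup{FinRel}. For the initial states, composing the point $\{I\} : \one \to \P^{\textup{acc}}(Q)$ with $\ni$ produces $\{x \mid x \in I\} = I : \one \to Q$, so $\ni$ carries the determinized initial state back to $I$; this is the ``$I$ spawns $\S$ to become $J$'' condition. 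For the final states, composing $\ni$ with $F : Q \to \one$ relates a subset $X$ to the unit exactly when $X$ contains an element of $F$, that is when $X \cap F \neq \varnothing$, which is by definition the determinization's final set $\P^{\textup{acc}}_{\cap F \neq \varnothing}(Q)$; this is the ``$F$ absorbs $\S$ to become $G$'' condition. With all three premises discharged, the simulation principle yields the diagrammatic equality in one step.

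The only genuinely delicate point is orientation bookkeeping. One must check that $\ni$, rather than its transpose $\in$, is placed on the feedback wire in the direction $\P^{\textup{acc}}(Q) \to Q$ so that the ``spawn at $I$, slide through $\R$, absorb at $F$'' pattern of \Cref{fig:fin-simulation-principle} matches the three relational equations above, and that the behavior diagrams produced by \Cref{prop:transducer-fin} for the two transducers are literally the sides of the claimed equation. Both are routine once the conventions are fixed, and no equation beyond the already-sound rules of \bfup{Trans} together with \Cref{fig:fin-simulation-principle} is needed, since the principle proves the equality directly.
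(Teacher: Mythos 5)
Your proposal is correct and takes essentially the same route as the paper, which likewise obtains the corollary by instantiating the simulation principle of \Cref{fig:fin-simulation-principle} with the relation ${\ni} : \P^{\textup{acc}}(Q) \to Q$ and using \Cref{prop:determinization} as the sliding condition. If anything, you are more explicit than the paper, which leaves implicit the two boundary checks you spell out ($\ni$ composed with $\{I\}$ giving $I$, and $F$ composed with $\ni$ giving exactly $\P^{\textup{acc}}_{\cap F \neq \varnothing}(Q)$).
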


Another key element of our completeness result is the minimization of automata.
\begin{definition}[Minimization]	
	Let $(\D,A,Q,\{i\},F)$ be a \textbf{deterministic} finite automaton where every state is accessible from the initial state. We write $\xrightarrow[\text{\small $\D$}]{}$ for the transitions within that automaton. We write  $\xrightarrow[\text{\small $\D$}]{w}$ with $w$ a word of size $n$ for its iterated transition $\xrightarrow[\text{\small $\D$}]{w_1}\dots \xrightarrow[\text{\small $\D$}]{w_n}$. Its minimization is the deterministic automata 
	$ (L_\D,A,L_{A^*},L_{\epsilon},L_{A^*}^{\ni \epsilon}) $.\\
	More precisely, we write $L_w = \{ v \in A^* \mid \exists f \in F, i \xrightarrow[\text{\small $\D$}]{wv} f\}$, and remark that whenever $L_w = L_u$, then for all $a \in A$ we also have $L_{wa} = L_{ua}$. We then define $L_{A^*} = \{ L_w \mid w \in A^*\}$ note that a single element of this set might correspond to multiple distinct $w$, and in fact $L_{A^*}$ is actually smaller or equal to $Q$ in cardinality. We defined its restriction $L_{A^*}^{\ni \epsilon} = \{ L_w \mid w \in A^*, \epsilon \in L_w\}$. Lastly, we define the transition function as $L_\D(a,L_w) =  L_{wa}$.
\end{definition}

This minimization is well known in the literature \cite{DBLP:books/daglib/0016921}, and yields the unique smallest deterministic automaton equivalent to the starting one.

\begin{proposition}\label{prop:minimization}
	For any deterministic finite automaton $(\D,A,Q,\{i\},F)$ where every state is accessible from the initial state, we have
	\[ \tikzfig{fin-minimization}\]
	where $\L : Q \to L_{A^*}$ relates $p \in Q$ to $\ell \in L_{A^*}$ whenever $\ell = \{ v \mid \exists f \in F, p \xrightarrow[\text{\small$\D$}]{v} f\}$. We note that since every state is accessible, there exists $w \in A^*$ such that the later is equal to $L_w$.
\end{proposition}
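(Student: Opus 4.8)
The plan is to follow the template of the proof of \Cref{prop:determinization}: translate the asserted equality of relations into a logical bi-implication using the dictionary between diagrams and formulas of \Cref{sec:logic}, and then verify that bi-implication directly. Reading the two sides of \tikzfig{fin-minimization} as relations of type $A \times Q \to L_{A^*}$, the claim unfolds to the statement that for all $a \in A$, all $p \in Q$ and all $\ell \in L_{A^*}$,
\[ \left(\exists p' \in Q,\ p \xrightarrow[\text{\small$\D$}]{a} p' \ \land\ p' \mathrel{\L} \ell\right) \iff \left(\exists \ell' \in L_{A^*},\ p \mathrel{\L} \ell' \ \land\ \ell' \xrightarrow[\text{\small$L_\D$}]{a} \ell\right), \]
where the left side composes the transition of $\D$ with $\L$ and the right side composes $\L$ with the transition $L_\D$ of the minimized automaton.

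The first step is to eliminate the existential quantifiers using the fact that all three relations involved are functions: $\D$ is deterministic, so $p' $ is the unique $a$-successor of $p$ (when it exists); $\L$ sends each state to its single residual language; and $L_\D$ is defined as a function. Hence the left side reduces to $\ell = \L(p')$ and the right side to $\ell = L_\D(a, \L(p))$. The second step is to invoke accessibility: every $p \in Q$ is reached from $i$ by some word $w$, so $\L(p) = L_w$ by the definitions of $\L$ and of $L_w$, while $\L(p') = L_{wa}$ since $p'$ is reached by $wa$. On the other side, $L_\D(a, \L(p)) = L_\D(a, L_w) = L_{wa}$. Both conditions thus read $\ell = L_{wa}$, which settles the equivalence.

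The main obstacle is precisely the well-definedness that makes the two computations of $L_{wa}$ agree. The identity $\L(p) = L_w$ does not single out $w$ -- the same residual language is in general reached by many words -- so the equality $L_\D(a,\L(p)) = L_{wa}$ is meaningful only because of the consistency property recorded in the definition of the minimization, namely that $L_w = L_u$ implies $L_{wa} = L_{ua}$. This is the Myhill--Nerode congruence condition, and it is exactly what guarantees that $\L$ intertwines the two transition structures; I would make sure to isolate it cleanly as the one nontrivial input, since the rest is bookkeeping. Finally, to feed the later application of the simulation principle of \Cref{fig:fin-simulation-principle}, I would record the two boundary checks: $i$ is reached by $\epsilon$, so $\L(i) = L_\epsilon$ matches the initial state of the minimized automaton, and $p \in F$ iff $\epsilon \in \L(p)$ matches the final set $L_{A^*}^{\ni \epsilon}$, with surjectivity of $\L$ onto $L_{A^*}$ again following from accessibility.
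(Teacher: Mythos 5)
Your proposal is correct and follows essentially the same route as the paper's proof in \Cref{app:prop:minimization}: unfold the diagrammatic equation into a logical bi-implication, eliminate the existentials using determinism of $\D$ and functionality of $\L$ and $L_\D$, then use accessibility to write $\L(p) = L_u$ for some witness word $u$ and reduce both sides to $\ell = L_{ua}$, with the congruence property ($L_w = L_u \Rightarrow L_{wa} = L_{ua}$) guaranteeing well-definedness exactly as recorded in the paper's definition of minimization. Your closing remarks on the initial/final boundary conditions and surjectivity of $\L$ go beyond the proposition itself (they belong to the application of the simulation principle in \Cref{cor:minimization}) but are accurate and harmless.
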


The core idea is similar to \Cref{prop:determinization}: we express both sides of the equality as equations relating transitions of the transducers, and prove their equivalence using standard argument of automata minimization.
Again the simulation principle can be applied to conclude that the behavior of an automata and its minimization are the same.

\begin{corollary}\label{cor:minimization}
	For any deterministic finite automaton $(\D,A,Q,\{i\},F)$ where every state is accessible from the initial state, using \Cref{fig:fin-simulation-principle}, we have
	\[ \tikzfig{fin-minimization-conclusion}\]
\end{corollary}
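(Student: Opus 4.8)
The plan is to obtain this corollary from Proposition \ref{prop:minimization} in exactly the same way that Corollary \ref{cor:determinization} is obtained from Proposition \ref{prop:determinization}: Proposition \ref{prop:minimization} provides the commuting square through the relation $\L : Q \to L_{A^*}$, and the simulation principle of \Cref{fig:fin-simulation-principle} then upgrades this state-level identity to an equality of the two transducer diagrams (behaviors). Concretely, I would take $\L$ to be the simulating relation of \Cref{fig:fin-simulation-principle}, with the original automaton $(\D,A,Q,\{i\},F)$ on the left and its minimization $(L_\D,A,L_{A^*},L_\epsilon,L_{A^*}^{\ni\epsilon})$ on the right.

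First I would isolate the three hypotheses that \Cref{fig:fin-simulation-principle} requires of $\L$: a spawning condition at the initial states, a sliding (commuting) condition through the transition relations, and an absorption condition at the final states. The sliding condition is precisely the content of Proposition \ref{prop:minimization}, namely $\L \circ \D = L_\D \circ (\id_A \times \L)$, so nothing new is needed there. The two boundary conditions reduce to one-line relational checks. For the initial leg, the definition of $\L$ gives $\L(i) = \{ v \mid \exists f \in F,\ i \xrightarrow[\D]{v} f\} = L_\epsilon$, so composing the point $i$ with $\L$ yields exactly the initial state $L_\epsilon$ of the minimization. For the final leg, a state $p$ satisfies $\epsilon \in \L(p)$ iff $p \xrightarrow[\D]{\epsilon} f$ for some $f \in F$; since $\D$ is deterministic this forces $p = f$, hence $\epsilon \in \L(p) \iff p \in F$, which is exactly the identity $F = (L_{A^*}^{\ni\epsilon}) \circ \L$ of relations $Q \to \one$ demanded by the absorption leg.

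Here is where the accessibility hypothesis enters, and it is the point I would be most careful about: accessibility guarantees that $\L$ is a total function \emph{landing in} $L_{A^*}$. Indeed, if $i \xrightarrow[\D]{w} p$ then $\L(p) = L_w$, so every $\L(p)$ is a genuine state of the minimization; without accessibility the right-hand transducer would not even be well-typed against $\L$, and the initial and final legs would fail to match. With totality and the three conditions in hand, the simulation principle of \Cref{fig:fin-simulation-principle} applies verbatim and rewrites the transducer diagram of $(\D,A,Q,\{i\},F)$ into that of its minimization, which is the claimed identity. I expect no real mathematical obstacle beyond faithfully matching the shapes of the initial and final legs of \Cref{fig:fin-simulation-principle} to the two relational equalities above; all the substance is already carried by Proposition \ref{prop:minimization}.
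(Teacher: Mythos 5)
Your proposal is correct and follows exactly the route the paper intends: instantiate the simulation principle of \Cref{fig:fin-simulation-principle} with $\S = \L$, use \Cref{prop:minimization} for the sliding condition, and check the two boundary conditions $\L(i) = L_\epsilon$ and $F = L_{A^*}^{\ni\epsilon} \circ \L$ (the paper leaves these routine checks implicit with the remark that ``the simulation principle can be applied to conclude''). One cosmetic note: determinism is not needed for your final-leg check, since a run labelled by $\epsilon$ has length zero and hence ends where it starts in any automaton.
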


We can now provide a proof of the completeness result.

\begin{proof}[Proof of \Cref{thm:completeness-fin}]
	We start with two diagrams $\bR,\bS$ of \bfup{Trans} from $A$ to $B$ and assume $\binterp{\bR} = \binterp{\bS}$. We start by focusing on $\bR$. We combine it with the cup $\epsilon_B$ to bend its output into an input, and we then use \Cref{prop:normal-form-fin} to put the result in quasi-normal form. Then, we use \Cref{cor:determinization} followed by \Cref{cor:minimization} to obtain:
	\[ \tikzfig{fin-compl-epsilon} \quad = \quad \tikzfig{fin-compl-normal-form} \quad=\quad \tikzfig{fin-compl-minimal} \]
	For convenience, we name $(\R_M,B \times A,Q_M,I_M,F_M)$ the resulting minimal automaton. We do the same for $\S$ and write $(\S_N,B \times A,Q_N,I_N,F_N)$ for the resulting minimal automaton. Since $\binterp{\bR} = \binterp{\bS}$ and by soundness of the equations, we obtain that:
	\[ \binterp{\tikzfig{fin-compl-minimal-M}} \qquad = \qquad \binterp{\tikzfig{fin-compl-minimal-N}} \]
	Using the definition of $\binterp{-}$ and \Cref{prop:transducer-fin}, it follows that $(\R_M,B \times A,Q_M,I_M,F_M)$ and $(\S_N,B \times A,Q_N,I_N,F_N)$ recognize the same language, hence by uniqueness of the minimal automaton (see \cite{DBLP:books/daglib/0016921}) we obtain that the two automata are equal up to an isomorphism $\iota : Q_M \to Q_N$, hence using the simulation principle with this $\iota$, we obtain
	\[   \tikzfig{fin-compl-epsilon}  =  \tikzfig{fin-compl-minimal-M}  = \tikzfig{fin-compl-minimal-N}  =  \tikzfig{fin-compl-epsilon-bis} \]
	By combining with the cap $\eta_B$ to bend the input $B$ into an output, and using the the fact that $(\id_B \times \epsilon_B) \circ (\eta_B \times \id_B) = \id_B$, we obtain $\bR = \bS$.
\end{proof}

The simulation principle requires to find a well chosen relation between the state spaces of two automata for the language equality to follow.
This is reminiscent of the well studied simulation techniques in automata theory, and it is not a coincidence. 
In fact, the simulation principles also holds in the case where we only want language inclusion, simply by replacing every ``='' of \Cref{fig:fin-simulation-principle} by $\subseteq$ (for a backward-simulation) or by $\supseteq$  (for a forward-simulation). Thus, the simulation principle can be reinterpreted as asking to find a simulation which is at the same time forward and backward, the fact that finding a backward-forward simulations implies language equivalence has already appeared in the literature \cite{lynch1995forward}.

\section{Transducers on Bi-Infinite Words}
\label{sec:infinite}

We will now consider the case of transducers acting on bi-infinite words, i.e. words that are infinite both ways. 
We call those transducers $\ZZ$-transducers, and note that within them the notion of initial or final state becomes irrelevant, although one could argue that in fact all the states of those transducers are both initial and final.

\subsection{$\ZZ$-Transducers}
Similarly to the $\_^*$ lift, for $A$ finite, we consider elements of $A^{\ZZ}$ as infinite words $\dots w_{-1}w_0w_1 \dots$, and we define the lifted relation $\R^{\ZZ} : A^{\ZZ} \to B^{\ZZ}$ of $\R : A \to B$ as $\{(w,v)\mid \forall i \in \ZZ, (w_i,v_i) \in \R\}$. 
Similarly to \bfup{UniRel} with $\_^*$, we write $\ZZ$-\bfup{Rel} for the compact closed strict symmetric monoidal category of relations over sets of the form $A^\ZZ$, and reuse the diagrammatic representations of \bfup{FinRel} in this setting. Similarly to $\_^*$, we can apply $\_^{\ZZ}$ to individual wires as below, and this lifting operation is a faithful strong symmetric monoidal functor from \bfup{FinRel} to $\ZZ$-\bfup{Rel}.
\[ \tikzfig{rel-lifted-z}\]

We can also define $\ZZ$-transducers -- a.k.a transducers on bi-infinite words. They are simply tuples $(\T,A,B,Q)$ similar to transducers but without a specified sets of initial or final states. When $B=\one$, such transducers corresponds to transition systems with finite set of states. When $A=\one$, they can also be interpreted as generalized symbolic dynamic systems. Indeed the set of elements of $Q$ in relation with a symbol $b\in B$ form an overlapping partition of $Q$, and the transition function can be interpreted as the action of a non-deterministic dynamic.

We reuse the same notations as for usual transducers, and note that a \textbf{run} within an $\ZZ$-transducer is now a bi-infinite sequence $\dots \xrightarrow[b_{-1}]{a_{-1}} q_{-1} \xrightarrow[b_0]{a_0} q_0 \xrightarrow[b_1]{a_1} q_1 \xrightarrow[b_2]{a_2} \dots$. 
When such a run exists, we say that the transducer can transform the bi-infinite word $\dots a_{-1}a_0a_1\dots$ into $\dots b_{-1}b_0\dots b_1\dots$.
The \emph{behavior} of the transducer is the relation from $A^\ZZ$ to $B^\ZZ$ which relates $w$ to $v$ whenever the transducer can transform $w$ into $v$. We write $\L^\ZZ(\T,A,B,Q)$ for its behavior. 
We define the \textbf{infinite shift} $\shift{}{}_A : A^\ZZ \to A^\ZZ$ as the relation $\{ (w,v) \mid \forall k \in \ZZ, w_k = v_{k+1} \}$. 
Said otherwise, $\shift{}{}_A((a_n)_{n \in \ZZ}) = (a_{n-1})_{n \in \ZZ}$. 
We often omit the alphabet $A$ and only write $\shift{}{}$ when the alphabet is clear from context and we represent it and its transposed:
$\tikzfig{rel-shift-infinite}$. As in the finite case, we can obtain the behavior of a transducer from the shift.

\begin{proposition}\label{prop:zeta-transducer-red}
	For every $\ZZ$-transducer $(\T,A,B,Q)$, its behavior can be obtained using the lift of $(\T,A,B,Q)$ and the shift as follows:
	\[ \tikzfig{rel-trans-inf-red} \]
\end{proposition}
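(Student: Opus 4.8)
The plan is to follow the strategy used for the finite case in \Cref{prop:transducer-fin}: read the right-hand diagram as a logical formula via the recipe of \Cref{sec:logic}, and check that the resulting formula is exactly the condition defining the behavior $\L^\ZZ(\T,A,B,Q)$. The diagram is assembled from the lift $\T^\ZZ$ with its state output wire closed back onto its state input wire through the infinite shift $\shift{}{}_Q$, the loop being formed by a cup and a cap of $\ZZ$-\bfup{Rel}. Since $\ZZ$-\bfup{Rel} is compact closed exactly like \bfup{FinRel} and \bfup{UniRel}, the diagram-to-logic translation of \Cref{sec:logic} applies unchanged, with the sole novelty that the ``internal wire'' threading the feedback now carries the infinite set $Q^\ZZ$; thus the feedback contributes a single existential quantifier over a bi-infinite state sequence.

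Concretely, writing $w = (a_n)_{n\in\ZZ}$ on the $A$ wire and $v = (b_n)_{n\in\ZZ}$ on the $B$ wire, I would unfold the two generators sitting on the loop. By definition of the lift, $\T^\ZZ$ relates $((a_n),(q_n))$ to $((b_n),(q'_n))$ iff $q_n \xrightarrow[b_n]{a_n} q'_n$ for every $n \in \ZZ$; and the infinite shift imposes $q'_n = q_{n+1}$ for every $n$. Substituting, the whole formula collapses to $\exists (q_n)_{n\in\ZZ}, \forall n \in \ZZ, q_n \xrightarrow[b_n]{a_n} q_{n+1}$, which is precisely the statement that there is a bi-infinite run $\dots q_{-1} \xrightarrow[b_0]{a_0} q_0 \xrightarrow[b_1]{a_1} q_1 \dots$ transforming $w$ into $v$, i.e. that $(w,v) \in \L^\ZZ(\T,A,B,Q)$.

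The computation is routine and, if anything, lighter than the finite case: there are no initial or final states, so there is no boundary condition to splice in, and the infinite shift merely matches consecutive states with no endpoints to handle (unlike $\shift{I}{F}$, which had to encode the $I$ and $F$ constraints at the two ends of a finite word). The only genuine care needed is the index bookkeeping in the middle step; depending on whether one threads the input or the output state wire through the shift, one obtains $q_n \xrightarrow[b_n]{a_n} q_{n+1}$ or $q_{n-1} \xrightarrow[b_n]{a_n} q_n$, but these differ only by a one-step reindexing of the existentially quantified state family and so describe the same set of runs.

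The point I expect to require the most attention is not combinatorial but foundational: justifying that the logical reading of \Cref{sec:logic} remains sound when an internal wire is labelled by the infinite set $Q^\ZZ$, so that a single existential over an element of $Q^\ZZ$ faithfully encodes the existence of a coherent bi-infinite family of states. I would also confirm that the cup and cap forming the feedback are the honest compact-closed cup and cap of $\ZZ$-\bfup{Rel}, rather than the feedback operator of \bfup{Trans}, so that the identity lives entirely in $\ZZ$-\bfup{Rel} and needs no equational theory beyond the definitions of the lift and the shift.
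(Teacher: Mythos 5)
Your proposal is correct and takes essentially the same route as the paper's own proof: both unfold the relational composition into a single existential over bi-infinite state sequences in $Q^\ZZ$, use the transposed infinite shift to identify consecutive states, and observe that the resulting formula $\exists (q_n)_{n\in\ZZ},\ \forall n,\ q_n \xrightarrow[b_n]{a_n} q_{n+1}$ is exactly the definition of a run (your reindexing remark corresponds to the paper's two sequences $q,u \in Q^\ZZ$ related by $q_{k+1} = u_k$). Your foundational caveat is also consistent with the paper, which works directly with composition of relations in $\ZZ$-\bfup{Rel} rather than with any syntactic feedback operator, so no further justification is needed.
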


In \Cref{lem:zeta-transducer-from-transducer} we will link the behavior of a $\ZZ$-transducer $(\T,A,B,Q)$ to the behavior of the transducer $(\T,A,B,Q,Q,Q)$, that is, where all the states are initial and final.

\subsection{Sofic Relations}
\label{sec:subshifts}

We define the analog of regular languages and relations in the bi-infinite case.
We start with \emph{sofic subshifts} which have been extensively studied in symbolic dynamics.

\subsubsection{Sofic Subshifts, Factor-Closure and Limits}
\label{subsec:sofic-subshfits}

For $A$ finite, we can define a distance on $A^\ZZ$ as 
$\mathrm{d}(x,y) = 2^{-\inf\{i \in  \NN, x_i\neq y_i \vee x_{-i} \neq y_{-i}\}}$. The topology induced by this distance is called the prodiscrete topology (or Cantor topology), for which $A^\ZZ$ is compact and closed. Subsets of $A^\ZZ$ that are shift-invariant (meaning $\shift{}{}_A(X) = X$) and closed for the prodiscrete topology are called \textbf{subshifts}.

\begin{definition}[Factor] Given two potentially bi-infinite words $x,y \in A^* \cup A^\ZZ$, we say that $x$ is a factor of $y$ and write $x \sqsubseteq y$ whenever there exists $u$,$v$ such that $y = uxv$. Additionally, we say that is $x$ is an internal factor of depth $n$, and write $x \internal{n} y$, whenever both $u$ and $v$ are of length at least $n$ (including infinity).	
	For $x\in A^{\mathbb{Z}}$, we then define $\fact{x}:= \{ u \in A^* | u \sqsubseteq x\}$. Similarly, for $X$ a subshift or a language, we define  $\fact{X}:= \{ u \in A^* | \exists x \in X, u\sqsubseteq x\}$. A language $L$ is said \textbf{factor-closed} if $\fact{L} = L$.
\end{definition}

A subshift $X \subseteq A^\ZZ$ is said to be \textbf{sofic}\footnote{There are several equivalent definitions of sofic subshifts, we briefly discuss this in Appendix \ref{app:bijection-subshifts}} if $\fact{X}$ is regular. 
More than that, a sofic subshift can actually be generated by its factors with a well-chosen notion of limit.

For the remaining of this section, whenever we consider words of $A^*$, we consider them to be indexed as if they were words of $A^\ZZ$.
By default, $u \in A^*$ is indexed from $0$ to $|u|-1$ but we write $\shift{k}{}(u)$ for its $k$-shifted version indexed from $-k$ to $-k+|u|-1$. 
Whenever $i$ is not a valid index for $u$, we write $u_i = \bot$.
This allows us to extend\footnote{The formula is unchanged, that is $\mathrm{d}(x,y) = 2^{-\inf\{i \in  \NN, x_i\neq y_i \vee x_{-i} \neq y_{-i}\}}$; see Appendix \ref{app:bijection-subshifts}.} the distance $\mathrm{d}(u,x)$ to a finite word $u$ and an infinite word $x$.

\begin{definition}[Limit Language]
	Given a language $L\subseteq A^*$, we define its limit language $\limlang{L}$ as the set of bi-infinite words that can be reached as a limit of words in $L$, that is :
	\[ \limlang{L} := \{ x \in A^{\mathbb{Z}}, \exists (u_n)_{n\in \mathbb{N}} \in L^{\mathbb{N}}, (k_n)_{n\in \mathbb{N}} \in \mathbb{N}^{\mathbb{N}}\ s.t.\ \lim\limits_{n\to \infty} \shift{k_n}{}(u_n) = x\}\] 
\end{definition}

\begin{lemma}\label{lem:limlang-is-subshift}
	Given a language $L\subseteq A^*$, its limit language $\limlang{L}$ is a subshift. 
\end{lemma}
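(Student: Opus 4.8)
The plan is to verify directly the two defining properties of a subshift: that $\limlang{L}$ is invariant under the shift $\shift{}{}_A$, and that it is closed in the prodiscrete topology. Throughout I work with the extended distance $\mathrm{d}$ comparing a finite word (padded with $\bot$ outside its window) to a bi-infinite word, noting that this remains an ultrametric once $\bot$ is treated as an extra symbol. A preliminary observation I would record is that if $x \in \limlang{L}$ is witnessed by $(u_n)_n \in L^\NN$ and $(k_n)_n$ with $\lim_n \shift{k_n}{}(u_n) = x$, then, since $x$ has a genuine letter at every index of $\ZZ$ while $\shift{k_n}{}(u_n)$ equals $\bot$ outside the window $[-k_n,\,|u_n|-1-k_n]$, convergence forces this window to eventually contain every $[-M,M]$; in particular $k_n \to \infty$. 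This is the only quantitative input needed below.

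For shift-invariance I would argue by reindexing the witnesses. Unwinding the definition, $\lim_n \shift{k_n}{}(u_n) = x$ says that for each fixed $i$ one has $(u_n)_{i+k_n} = x_i$ eventually. Replacing $k_n$ by $k_n+1$, which stays in $\NN$, yields $\lim_n \shift{k_n+1}{}(u_n) = \shift{}{}_A^{-1}(x)$, and replacing $k_n$ by $k_n-1$, which is nonnegative for $n$ large since $k_n \to \infty$, yields $\lim_n \shift{k_n-1}{}(u_n) = \shift{}{}_A(x)$. Hence both $\shift{}{}_A(x)$ and $\shift{}{}_A^{-1}(x)$ lie in $\limlang{L}$, giving $\shift{}{}_A(\limlang{L}) = \limlang{L}$.

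For closedness I would use sequential closedness, which is legitimate since $A^\ZZ$ is metric, together with a diagonal extraction. Let $x^{(m)} \in \limlang{L}$ with $x^{(m)} \to x$. Each $x^{(m)}$ is a limit $\lim_n \shift{k^{(m)}_n}{}(u^{(m)}_n) = x^{(m)}$ with $u^{(m)}_n \in L$, so for each $m$ I can pick an index $n_m$ with $\mathrm{d}\big(\shift{k^{(m)}_{n_m}}{}(u^{(m)}_{n_m}),\, x^{(m)}\big) < 2^{-m}$. Setting $v_m := u^{(m)}_{n_m} \in L$ and $j_m := k^{(m)}_{n_m}$, the ultrametric inequality gives $\mathrm{d}\big(\shift{j_m}{}(v_m),\, x\big) \le \max\big(\mathrm{d}(\shift{j_m}{}(v_m),\, x^{(m)}),\, \mathrm{d}(x^{(m)},\, x)\big) \to 0$, so $\lim_m \shift{j_m}{}(v_m) = x$ and therefore $x \in \limlang{L}$.

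The main obstacle is the closedness step: one must ensure that the single diagonal sequence $(\shift{j_m}{}(v_m))_m$ genuinely converges to $x$, which is exactly where the ultrametric form of the triangle inequality for $\mathrm{d}$ does the work, collapsing the two nested limits into one. The shift-invariance step is routine once the reindexing is set up, the only point to watch being that the offsets remain in $\NN$, which the observation $k_n \to \infty$ guarantees when decreasing an offset.
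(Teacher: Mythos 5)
Your proof is correct and follows essentially the same route as the paper's: shift-invariance by reindexing the offsets $k_n$ (with your observation $k_n \to \infty$ justifying the decrement, a point the paper glosses over), and closedness by a diagonal extraction collapsing the two nested limits. Your use of the ultrametric inequality on $(A \sqcup \{\bot\})^\ZZ$ is a legitimate, slightly more explicit bookkeeping device for the same step the paper carries out via containment of central factors, and it is consistent with the paper's caveat since you work in the padded space rather than on $A^* \cup A^\ZZ$ directly.
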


It is clear that different languages can have the same limit, typically a language and its factor closure have the same limit,
however there is a canonical language for a given sofic subshift.
The idea is to remove all the words that are not relevant for the limit behavior.

\begin{definition}[Pruned Language]
Given a language $L\subseteq A^*$, we define its pruned sub-language $\prun{L}$ as the set of words in $L$ that have arbitrarily long (simultaneous) left and right extensions that is : $\prun{L} := \{ u \in L, \forall n \in \mathbb{N}, \exists v \in L, u \internal{n} v\}$.
A language $L$ is said to be \textbf{pruned} when $L=\prun{L}$, or equivalently whenever for every $w \in L$ there exists two non-empty words $u,v$ such that $uwv \in L$.
\end{definition}

\begin{proposition}\label{prop:bijection-subshifts}
	$\fact{-} : \P(A^\ZZ) \to \P(A^*)$ can be restricted to a bijection between sofic subshifts and pruned factor-closed regular languages, and has $\limlang{-} : \P(A^*) \to \P(A^\ZZ)$ for inverse when restricted similarly.
\end{proposition}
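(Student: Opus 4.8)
The plan is to verify that each of the two maps sends the stated domain into the stated codomain, and then to establish the two round-trip identities $\limlang{\fact{X}} = X$ (for $X$ a sofic subshift) and $\fact{\limlang{L}} = L$ (for $L$ pruned, factor-closed and regular). Together these say precisely that the restricted maps are mutually inverse, which gives the bijection. For the well-definedness of $\fact{-}$, I would note that for any subshift $X$ the language $\fact{X}$ is automatically factor-closed, since a factor of a factor of $x$ is again a factor of $x$, and pruned, since any $u \sqsubseteq x$ with $x \in A^\ZZ$ extends on both sides to arbitrarily long factors of the same bi-infinite word $x$; regularity of $\fact{X}$ is exactly the hypothesis that $X$ is sofic. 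Dually, once $\fact{\limlang{L}} = L$ is proved, soficity of $\limlang{L}$ is not an extra burden: by \Cref{lem:limlang-is-subshift} it is a subshift, and its factor language is then the regular language $L$.

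For the round-trip $\limlang{\fact{X}} = X$, I would prove $X \subseteq \limlang{\fact{X}}$ by approximating each $x \in X$ with its central windows $u_n = x_{-n}\cdots x_n \in \fact{X}$, re-indexed via $\shift{n}{}$ so that $\shift{n}{}(u_n) \to x$. For the reverse inclusion, given $x = \lim_n \shift{k_n}{}(u_n)$ with each $u_n \in \fact{X}$, I would choose a witness $x^{(n)} \in X$ with $u_n \sqsubseteq x^{(n)}$, shift it (using shift-invariance of $X$) so that the occurrence of $u_n$ sits exactly where it sits in $\shift{k_n}{}(u_n)$, and observe that convergence in the prodiscrete topology forces the defined range of $\shift{k_n}{}(u_n)$ to cover every fixed finite window eventually; hence these shifted witnesses also converge to $x$, and closedness of $X$ yields $x \in X$.

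The heart of the argument is $\fact{\limlang{L}} = L$. The inclusion $\fact{\limlang{L}} \subseteq L$ is the easy direction: any finite factor of a limit $x = \lim_n \shift{k_n}{}(u_n)$ is pinned down on a finite window and therefore occurs as a genuine factor of some $u_n$ for $n$ large, so it lies in $\fact{L} = L$ by factor-closure. The reverse inclusion $L \subseteq \fact{\limlang{L}}$ is where pruning and compactness enter, and is the step I expect to be the main obstacle: given $u \in L$, pruning supplies words $v_n \in L$ with $u \internal{n} v_n$, so that $u$ occurs in $v_n$ with left and right margins of length at least $n$; re-indexing each $v_n$ to place this occurrence of $u$ at a fixed position and extracting a convergent subsequence by compactness of $A^\ZZ$ produces a limit $x \in \limlang{L}$ containing $u$, whence $u \in \fact{\limlang{L}}$. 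The delicate points are ensuring that the margins growing to infinity prevent the limit from carrying any $\bot$ entries, so that $x$ is genuinely bi-infinite rather than one-sided or finite, and keeping the re-indexing consistent with the definition of $\limlang{L}$. With both round-trips in hand, the two restricted maps are mutually inverse and hence bijections, which is the claim.
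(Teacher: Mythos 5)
Your proposal is correct and follows essentially the same route as the paper: the paper factors the argument through the general identities $\fact{\limlang{L}} = \fact{\prun{L}}$ (Lemma~\ref{lemma:fact-prun}), $\limlang{\fact{X}} = \closure{\orbit{X}}$ (Lemma~\ref{lemma:lim-factor}) and Lemma~\ref{lemma:fact-is-pfc}, then specializes, whereas you prove the two round trips directly under the restricted hypotheses -- but the underlying arguments (central-window approximation, shifting witnesses and using closedness, pruning plus compactness of $A^\ZZ$ with growing margins) are identical. You also correctly flag the one delicate point the paper glosses over, namely that the compactness extraction is applied to re-indexed \emph{finite} words and the growing margins are what guarantee a genuinely bi-infinite limit.
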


This result allows to reduce questions on sofic subshifts to questions on pruned factor-closed regular languages.

\subsubsection{Presentation of a Sofic Subshift}

We have established a bijection between sofic subshifts and factor closed-regular languages, now building on this we introduce a normal form for automata that allows us to reproduce our completeness result to the infinite case.

\begin{definition}[Presentation of a Sofic Subshift]
	A \textbf{presentation} $(\T,A,Q)$ of a sofic subshift $X$ is an automaton $(\T,A,Q,Q,Q)$, meaning  all the  states are initial and final, recognizing a language $L$ and such that $\limlang{L}=X$. A presentation is said 
	\begin{description}
		\item[Pruned] if its language is pruned, meaning that for every word $w$ in the language, there exists $u$ and $v$ both non-empty such that $uwv$ is in the language.
		\item[Right-Resolving] whenever its transition relation is a partial function.
		\item[Rooted] whenever there exists a root state $r$ such that
		\begin{itemize}
			\item every accepted word admits an accepted run starting by $r$,
			\item and every state is accessible from $r$, meaning there is a run from $r$ to that state.
		\end{itemize}
	\end{description}
\end{definition}

\begin{theorem}[Uniqueness]\label{thm:minimal-uniqueness-inf}
	Every non-empty\footnote{If the sofic subshift is empty, its only pruned presentation is the empty automaton, but this automaton is not rooted.} sofic subshift admits a unique minimal rooted right-resolving pruned presentation, up to isomorphism.
\end{theorem}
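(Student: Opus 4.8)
The plan is to produce an explicit candidate for the minimal presentation via a follower-set (Myhill–Nerode) construction on the canonical language of $X$, and then to show that every rooted right-resolving pruned presentation of $X$ surjects onto this candidate, so that minimality forces an isomorphism. First I would pin down the language of every competitor. By \Cref{prop:bijection-subshifts}, since $X$ is a non-empty sofic subshift, $L := \fact{X}$ is the unique pruned, factor-closed (hence prefix-closed) regular language with $\limlang{L} = X$. The key preliminary observation is that this fixes the language of every candidate: if $(\T',A,Q')$ is any rooted right-resolving pruned presentation of $X$ with recognized language $L'$, then $L'$ is factor-closed (labels of finite paths are closed under taking factors), pruned by hypothesis, regular, and satisfies $\limlang{L'} = X$; applying \Cref{prop:bijection-subshifts} gives $L' = \fact{\limlang{L'}} = \fact{X} = L$. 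Thus all competitors recognize exactly $L$.

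For existence, I would take the follower sets $L_w = \{ v \in A^* : wv \in L\}$ for $w \in L$, exactly as in the minimization of \Cref{prop:minimization}, and form the automaton with finite state set $Q = \{ L_w : w \in L\}$ (finite because $L$ is regular), partial transition function $L_w \xrightarrow{a} L_{wa}$, defined iff $wa \in L$ and well-defined since $L_w = L_{w'}$ forces $L_{wa} = L_{w'a}$, and root $r = L_\epsilon = L$. This transition relation is a partial function, so the presentation is right-resolving. Reading $w$ from $r$ reaches $L_w$ and stays defined because $L$ is prefix-closed, so every state is accessible from $r$ and every $u \in L$ admits a run from $r$, giving rootedness. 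Using prefix-closedness again, the labels of the paths issued from a state $L_w$ are exactly the words of $L_w$, so with all states initial and final the recognized language is $\bigcup_{w \in L} L_w = L$ (from $L_w \subseteq L$ and $L_\epsilon = L$). Hence this presentation recognizes $L$, is pruned, and presents $X$.

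For minimality and uniqueness, given any rooted right-resolving pruned presentation $(\T',A,Q')$ with root $r'$, I would define the follower map $\phi \colon Q' \to \P(A^*)$ sending $q$ to the set of labels of paths issued from $q$. Right-resolvingness makes $\phi(q)$ a residual of $L$, and rootedness together with the previous paragraph identifies its image with $Q = \{L_w : w \in L\}$; moreover $\phi$ is surjective onto $Q$, since the state reached from $r'$ by reading $w$ maps to $L_w$. In particular $|Q'| \geq |Q|$, so the follower-set automaton is minimal. Furthermore $\phi$ is a morphism of presentations: it sends $r'$ to $r$ and commutes with transitions, because $\phi$ of the $a$-successor of $q$ is the $a$-residual of $\phi(q)$. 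If $(\T',A,Q')$ is itself minimal, then $|Q'| = |Q|$, so $\phi$ is a bijective morphism between deterministic automata, hence an isomorphism, which proves uniqueness up to isomorphism.

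The main obstacle is the bookkeeping around the all-states-initial-and-final convention of presentations, which departs from the single-initial-state acceptance underlying \Cref{prop:minimization}: one must check that under this convention the recognized language is precisely the union of follower sets and coincides with $L$, and that the follower-set quotient stays inside the class of rooted right-resolving pruned presentations of $X$ (prunedness is preserved because the recognized language is the quotient-invariant language $\fact{X}$, and right-resolvingness is preserved since merging states with equal follower sets preserves determinism). The linchpin making everything collapse to a single language is \Cref{prop:bijection-subshifts}, which forces every candidate presentation to recognize $\fact{X}$.
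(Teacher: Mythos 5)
Your proof is correct, and it shares the paper's two pillars while diverging on the uniqueness mechanism. Like the paper (\Cref{app:inf-uniqueness}), you first pin down every competitor's language as $\fact{X}$ via \Cref{prop:bijection-subshifts} (the paper isolates exactly this as \Cref{applem:pruned}), and your candidate automaton on the nonempty residuals $\{L_w : w \in L\}$ is the same object as the paper's candidate, which is the minimal DFA of $\fact{X}$ restricted to its final states (for a factor-closed language the useful residuals are precisely those containing $\epsilon$). Where you genuinely differ is the uniqueness half: the paper completes an arbitrary minimal competitor $(\M,A,P)$ with a sink state $\bot$ into a total DFA, argues by a minimality-transfer step (``a strictly smaller DFA would, after re-applying the restriction procedure, yield a strictly smaller presentation'') that this completion is the minimal DFA of $\fact{X}$, and then invokes the classical uniqueness of minimal DFAs as a black box; you instead run the Myhill--Nerode argument directly at the level of presentations, exhibiting the follower map $\phi$ as an explicit surjective morphism from any competitor onto the candidate, so that $|Q'| \geq |Q|$ gives minimality and equality of cardinalities gives the isomorphism. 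Your route is more self-contained, avoids the sink-state bookkeeping and the slightly delicate minimality-transfer step, and makes the canonical surjection explicit; the paper's route is shorter because it can cite the textbook theorem. One step you compress: ``bijective morphism, hence isomorphism'' needs transitions to be \emph{reflected}, not merely preserved, since these are partial deterministic automata---but this does follow from what you established (if $r' \xrightarrow{w} q$ then $\phi(q) = L_w$, and for $wa \in L$ rootedness provides an accepting run that determinism forces through $q$, so $q$ has an $a$-transition exactly when the candidate does at $L_w$), so this is a presentational shortcut rather than a gap.
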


Notice that this result is not in contradiction with the well known fact that the minimal presentation of a sofic subshift is not unique in general \cite{lind-marcus}. Indeed, it is often possible to find non-rooted presentations that are smaller than our ``minimal'' one, but it is from this rootedness that we obtain uniqueness.

\subsubsection{Sofic Relations and $\ZZ$-Transducers}\label{sec:sofic-z-transducer}

A relation $\R : A^\ZZ \to B^\ZZ$ is said sofic if it is a sofic subshift when seen as a subset of $(A\times B)^\ZZ$, which we know is equivalent to having a representation $(\T,A \times B,Q)$. Using the usual equivalence between automaton on $A \times B$ and transducers from $A$ to $B$, and a relation  $\R : A^\ZZ \to B^\ZZ$ is sofic if and only if there exists a $\ZZ$-transducer $(\T,A,B,Q)$ such that $\limlang{\L(T,A,B,Q,Q,Q)} = \R$. Using the following lemma, $\R$ is sofic if and only if it is the behavior of a $\ZZ$-transducer.

\begin{lemma}\label{lem:zeta-transducer-from-transducer}
	Given a $\ZZ$-transducers $(\T,A,B,Q)$, we consider the transducer  $(\T,A,B,Q,Q,Q)$ where all the states are initial and final.
        We have $\L^\ZZ(\T,A,B,Q) = \limlang{\L(\T,A,B,Q,Q,Q)}$.
\end{lemma}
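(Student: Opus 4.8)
The statement equates two relations on bi-infinite words: $\L^\ZZ(\T,A,B,Q)$, the behavior via bi-infinite runs, and $\limlang{\L(\T,A,B,Q,Q,Q)}$, the limit of the finite-word behavior where all states are both initial and final. Since both sides are relations on $(A\times B)^\ZZ$, I can work with the single-alphabet reformulation $C = A\times B$ and treat the transition relation as an automaton $(\T,C,Q,Q,Q)$; the claim then becomes $\L^\ZZ = \limlang{\L(\T,C,Q,Q,Q)}$ as subsets of $C^\ZZ$. The plan is to prove the two inclusions separately.

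For the inclusion $\L^\ZZ(\T,A,B,Q) \subseteq \limlang{\L(\T,A,B,Q,Q,Q)}$, I would start from a bi-infinite run $\dots \to q_{-1} \xrightarrow{c_0} q_0 \xrightarrow{c_1} q_1 \to \dots$ witnessing a word $x \in \L^\ZZ$. Truncating this run to the window of states $q_{-n},\dots,q_n$ yields a finite run; because every state is initial and final, this truncated finite run is accepting in $(\T,C,Q,Q,Q)$, so the corresponding finite factor $u_n = c_{-n+1}\dots c_n$ lies in the language $L = \L(\T,C,Q,Q,Q)$. Choosing the shift $k_n$ so that $\shift{k_n}{}(u_n)$ is indexed to agree with $x$ on positions $-n+1$ through $n$, the sequence $\shift{k_n}{}(u_n)$ converges to $x$ in the prodiscrete topology, since it matches $x$ on an ever-growing central window. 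Hence $x \in \limlang{L}$ by definition.

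For the reverse inclusion $\limlang{L} \subseteq \L^\ZZ$, I would take $x \in \limlang{L}$, so there exist $(u_n) \in L^\NN$ and shifts $(k_n)$ with $\shift{k_n}{}(u_n) \to x$. Each $u_n$ is accepted by a finite run, and since every state is both initial and final these finite runs impose no endpoint constraints beyond being runs of $\T$; each gives a finite run whose labels match $x$ on the central window determined by $k_n$ and $|u_n|$. The goal is to assemble a single bi-infinite run. For each position $k \in \ZZ$ and each $n$ large enough that the window of $\shift{k_n}{}(u_n)$ covers $k$, I obtain a state of $Q$ sitting at position $k$ in the corresponding finite run. Since $Q$ is finite, a compactness/König's-lemma argument produces a coherent bi-infinite assignment of states: concretely, I would extract a subsequence (or apply a diagonal argument together with the pigeonhole principle on the finite state set) so that, for every $k$, the state at position $k$ stabilizes along the subsequence, and consecutive stabilized states are compatible with a transition reading the matching letter of $x$. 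This yields a bi-infinite run of $\T$ labeled by $x$, witnessing $x \in \L^\ZZ$.

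The main obstacle is the reverse inclusion, specifically the compactness argument that patches the finite runs into one bi-infinite run. The subtlety is that the $u_n$ may have wildly varying lengths and shifts, and the finite runs need not agree with one another, so one cannot simply take a limit of runs pointwise. The standard remedy is a König's-lemma argument on the finitely-branching tree of partial state-labelings consistent with $x$ on finite windows — or equivalently the sequential compactness of $Q^\ZZ$ in the product topology combined with the closedness of the set of valid runs — extracting a convergent subsequence of state sequences and checking that the limit respects $\T$ at every position. I would also need to confirm that the state a finite run assigns at a given position is well-defined along the extraction, which is where the finiteness of $Q$ and a diagonal extraction over all positions $k \in \ZZ$ are essential.
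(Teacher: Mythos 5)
Your proposal is correct and follows essentially the same route as the paper: the inclusion $\L^\ZZ(\T,A,B,Q) \subseteq \limlang{\L(\T,A,B,Q,Q,Q)}$ by truncating the bi-infinite run to growing central windows (valid precisely because all states are initial and final), and the converse by a compactness argument patching the finite runs into a bi-infinite one. The paper implements that compactness step by annotating each $u_n$ with its run to get words over $A\times B\times Q$, first using factor-closure and prunedness to replace the sequence by a nested one, and then extracting a convergent subsequence in $(A\times B\times Q)^\ZZ$ --- which is the same sequential-compactness-of-$Q^\ZZ$ plus closedness-of-the-run-set argument as your diagonal extraction.
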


The composition or product of two sofic relations is sofic, one can simply build the associated $\ZZ$-transducer using the exact same construction as in \Cref{prop:RegRel-is-cat}, so we write \bfup{SofRel} for the subcategory of $\ZZ$-\bfup{Rel} of sofic relations.

\subsection{Diagrams for Subshifts and Completeness}\label{sec:diag-shift}

\begin{wrapfigure}{r}{0.35\textwidth}\vspace{-0.2cm}
\tikzfig{graph-fin-to-inf}
\end{wrapfigure}
The graphical language $\ZZ$\bfup{-Trans} is a variant of \bfup{Trans}. 
Its definition and equational theory (without simulation principle) are exactly the same, up to changing \blue{\textbf{thick blue}} into \red{\textbf{thick red}} and applying the substitution on the right 

We refer to \Cref{app:infinite-language} for an explicit definition. 
Adapting the simulation principle is more complex, so we will treat it explicitly in the core of this paper. 
We write $\rinterp{-}$ for its semantics, going from  $\ZZ$\bfup{-Trans} to $\ZZ$\bfup{-Rel}, which is simply ``removing the color and adding a $\_^\ZZ$ everywhere''. The explicit definition of this semantics is given in \Cref{appfig:inf-interp}. 
As in the finite case, even without a simulation principle, we have a quasi-normal form and our language is universal for sofic relations, see \Cref{app:inf-quasi-normal-form,app:inf-universality} for the exact statement.

Adapting the simulation principle to this infinite case is non-trivial, as we cannot simply remove the preconditions that acted on $I$ and $F$ and expect the equation to be sound. We propose in \Cref{fig:inf-simulation-principle} some relatively complex preconditions. 
\begin{figure*}[!h]
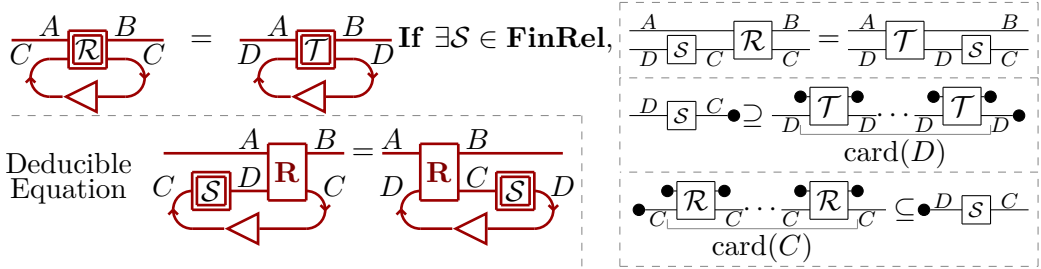

	\[\tikzfig{inf-simulation-principle}\]
	\caption{Simulation Principle for Bi-Infinite Words, and an Equation Deducible from it.}
	\label{fig:inf-simulation-principle}
\end{figure*}

Informally, the second precondition means that any state of  the $\ZZ$-transducer $(\T,A,B,D)$ that is the start of a path of size $\textup{card}(D)$ (hence a loop must exists, hence it is the start of an infinite path) must be in the domain of $\S$. Conversely, the third precondition means that any state of the $\ZZ$-transducer $(\R,A,B,C)$ that is the end of a path of size $\textup{card}(C)$ (hence a loop must exists, hence it is the end of an infinite path) must be in the codomain of $\S$. We note that most of the time this equation is used in the case where $\S$ is a total and surjective relation (that is $\S^{-1}(C) = D$ and $\S(D) = C$), which automatically satisfies the second and third preconditions.
Like in the finite case, we can derive from the simulation principle one of the well-known equations of trace monoidal categories.
The only missing equation to obtain a trace monoidal category is that adding a feedback loop to $\gamma_{A,A}$ yields the identity $\id_A$ (which would require to remove the ``shift'' from our feedback).
We can now state the completeness theorem.

\begin{theorem}[Completeness]\label{thm:completeness-inf} For $\rR$ and $\rT$ two diagrams of $\ZZ$\bfup{-Trans} from $A$ to $B$. If $\rinterp{\rR} = \rinterp{\rT}$, we can rewrite $\rR$ into $\rT$ by using only the rules of $\ZZ$\bfup{-Trans} and \Cref{fig:inf-simulation-principle}.
\end{theorem}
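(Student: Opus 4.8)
The plan is to mimic the structure of the finite-word completeness proof (Theorem~\ref{thm:completeness-fin}), replacing the determinization--minimization pipeline with the machinery developed for sofic subshifts in Section~\ref{sec:subshifts}. Concretely, given two diagrams $\rR,\rT$ with $\rinterp{\rR}=\rinterp{\rT}$, I would first bend the output $B$ into an input using the cup $\epsilon_B$, reducing to the case $B=\one$, i.e.\ to $\ZZ$-transducers whose behavior is a sofic subshift of $(B\times A)^\ZZ$. By the quasi-normal form for $\ZZ$\textbf{-Trans} (the red analogue of Proposition~\ref{prop:normal-form-fin}) and Proposition~\ref{prop:zeta-transducer-red}, each diagram becomes a single lifted box $\T$ with one feedback, i.e.\ a presentation $(\T,B\times A,Q)$ of a sofic subshift $X$. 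By Lemma~\ref{lem:zeta-transducer-from-transducer} the semantics $\rinterp{-}$ of such a diagram is exactly $\limlang{\L(\T,B\times A,Q,Q,Q)}$, so the hypothesis $\rinterp{\rR}=\rinterp{\rT}$ says precisely that the two presentations define the \emph{same} sofic subshift $X$.

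The heart of the argument is then Theorem~\ref{thm:minimal-uniqueness-inf}: every non-empty sofic subshift has a \emph{unique} minimal rooted right-resolving pruned presentation up to isomorphism. So the strategy is to exhibit a chain of diagrammatic rewrites that transforms an arbitrary presentation into this canonical one, entirely within the equational theory plus the simulation principle of Figure~\ref{fig:inf-simulation-principle}. This should proceed in three diagrammatic stages, each an infinite-word analogue of a step in the finite proof: (i) a \emph{pruning} step, removing states and transitions that do not lie on any bi-infinite run, so that the language becomes pruned --- diagrammatically this is an instance of the simulation principle where $\S$ is the inclusion of the pruned state set, and one checks the second and third preconditions hold because a pruned presentation has no ``dead-end in one direction'' states; (ii) a \emph{determinization/right-resolving} step, the red analogue of Corollary~\ref{cor:determinization}, turning $\T$ into a partial function via a subset construction and justified by the simulation principle with the ``contains'' relation $\ni$; and (iii) a \emph{minimization} step producing the rooted minimal presentation, analogous to Corollary~\ref{cor:minimization}. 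Having reduced both $\rR$ and $\rT$ to presentations of the same $X$, uniqueness gives an isomorphism $\iota$ between their state spaces, and a final application of the simulation principle along $\iota$ (which is total and surjective, hence trivially satisfies all three preconditions) equates the two diagrams. Bending the input $B$ back to an output with $\eta_B$ and using $(\id_B\times\epsilon_B)\circ(\eta_B\times\id_B)=\id_B$ then yields $\rR=\rT$.

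The main obstacle, and the reason the infinite simulation principle carries the elaborate preconditions, is verifying that each of these rewrites is a \emph{legitimate} instance of Figure~\ref{fig:inf-simulation-principle} rather than of the stronger finite principle. In the finite case the presence of genuine $I$ and $F$ subsets means a simulation relation $\S$ automatically transports accepting runs; here every state is initial and final, and an arbitrary relation between state spaces can fail to respect the bi-infinite run structure unless the second and third preconditions --- which force $\S$ to be defined on all ``loop-reachable'' states and surjective onto all ``loop-co-reachable'' states --- are met. The delicate points are therefore the pruning and determinization steps, where $\S$ is not total-and-surjective and one must argue the loop-based preconditions by hand: for pruning, that every state surviving the pruning is both the start and end of an infinite path (so lies in the domain/codomain constraints), using the characterization via internal factors $\internal{n}$ and the definition of $\prun{-}$; and for determinization, that the subset-construction state $\ni$ relation meets the preconditions because reachable nonempty subsets correspond exactly to states lying on infinite runs. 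I would handle the rooted-minimization step and the final isomorphism step more quickly, since there $\S$ (respectively $\iota$) is total and surjective and the preconditions are automatic, exactly as noted after Figure~\ref{fig:inf-simulation-principle}. A secondary subtlety is the non-empty hypothesis in Theorem~\ref{thm:minimal-uniqueness-inf}: the degenerate case $X=\varnothing$ (and more generally diagrams whose semantics is the empty relation) must be treated separately, reducing all such diagrams to a canonical ``empty'' diagram directly via the equations.
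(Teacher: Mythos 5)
Your proposal follows essentially the same route as the paper's proof: bending $B$ with the cup, quasi-normal form, pruning the states off bi-infinite runs, a non-empty-subset determinization justified via the simulation principle with $\ni$, minimization to the unique minimal rooted right-resolving pruned presentation of \Cref{thm:minimal-uniqueness-inf}, a final simulation along the isomorphism $\iota$, and separate treatment of the empty case (which the paper absorbs into the pruning step, where the pruned state set becomes empty). The only minor divergence is bookkeeping: the paper splits pruning into forward- and backward-pruning and observes that for its determinization the relation $\ni$ is automatically total and surjective (since the root is $Q$ itself), so the loop-based preconditions need hand-verification only at the pruning stage, not at determinization as you anticipated.
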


While the proof still relies on the uniqueness of minimal deterministic automatons, two differences are notable: (1) Our ``automaton'' is now a presentation of a sofic subshift, meaning that we rely on \Cref{thm:minimal-uniqueness-inf} instead of the usual theorem about uniqueness of the minimal automaton. (2) Before the determinization step, we need to prune all the states of the presentation that are not part of any bi-infinite path.

\section{Conclusion}
\label{sec:conclusion}

The diagrammatical axiomatization presented in this paper presents many similarities with recent works also relying on feedback loops to represent internal state space.
There was a thread of work aiming to handle diagrammatically streams of data, first \cite{sprunger2019differentiable} in the classical case and then \cite{di2022monoidal} and \cite{carette2021graphical} in the probabilistic and quantum cases.
Our simulation principle bears similarity with equations appearing in \cite{ghica2022fully}.
In \cite{piedeleu2021string}, the authors provide a complete diagrammatical theory for finite automata with similar goals, however the structure or the categories involved is very different as they do not use the cartesian product as monoidal structure.
However, we do not know of any attempt to handle subshifts with such formalism.
Unifying transducers and sofic subshifts with similar graphical languages opens many research directions towards diagrammatical proofs in symbolic dynamics. We present a few of them.

\textbf{Subshifts on $\mathbb{N}$} A straightforward next step would be to consider the situation in-between finite and bi-infinite words. Indeed, this would allow to draw interesting bridges between symbolic dynamics on the half-line, Buchi automata and LTS with initial states. This direction would also allow easier comparison with the monoidal stream line of research. 

\textbf{One-dimensional SFTs} One of the most natural objects of symbolic dynamics are subshift of finite type (SFTs), defined by a finite number of local constraints.
Surprisingly, their diagrammatical definition is less straightforward than sofic ones.
The logical next step would be to define SFTs diagrammatically and use this language to show classical results linking sofic subshifts, SFTs and cellular automata.

\textbf{Symbolic dynamics on groups} Symbolic dynamics also studies infinite colorings of $\ZZ^2$ and more generally Cayley graphs of finitely generated groups. 
Diagrammatical proofs seem promising to tackle this since they seem to abstract the topology of the graph, "hidden" in the shift relation.
This would suggest that some diagrammatical one-dimensional proof techniques may be adapted to higher dimensions and groups quite straightforwardly.

\textbf{General induction principle} Many objects in symbolic dynamics have a coinductive definition (substitutive subshifts, limit sets of cellular automata, ...). 
Generalizing our simulation principle to more abstract categories could provide new proofs techniques usable for all these co-inductive objects in an abstract way.

Finally, abstracting our diagrams to other settings/categories than relations may also allow us to study other problems such as counting subshifts, studying ergodic probability measures, and defining a notion of quantum subshifts.

\printbibliography

\clearpage
\appendix
\addtocontents{toc}{\protect\setcounter{tocdepth}{0}}
\section{The Category of Relations}
\label{app:relations}

\begin{figure*}
	\tikzfig{rel-equations}
	\caption{Equations for a Strict Symmetric Monoidal Category.}
	\label{fig:rel-equations}
\end{figure*}
\begin{figure*}
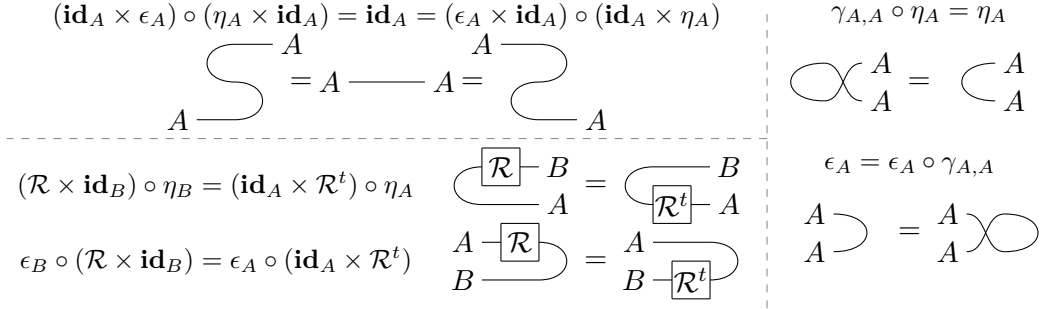

\tikzfig{rel-equations-cup-cap}
\caption{Equations for the Cup and Cap}
\label{appfig:rel-equations-cup-cap}
\end{figure*}

All the diagrams of \Cref{sec:relations} can actually be formalized using string diagrams from category theory \cite{MacLane}. Indeed, finite sets and relations is well known to be a strict symmetric monoidal category, called \bfup{FinRel}, that is:
\begin{itemize}
	\item A set of objects (finite sets).
	\item A binary operation on objects (the Cartesian product $\times$) which is associative and has a neutral element (\one).
	\item A set of morphisms between those objects (relations), a way to compose them (the usual composition $\circ$) that is associative and has a neutral element (the identity $\id$).
	\item A binary operations on morphisms (the Cartesian product of relations $\times$) which is associative, has a neutral element ($\id_\one$), and is a bifunctor (see the last equation of \Cref{fig:rel-equations}).
	\item For every two objects $A$ and $B$, a morphism that swaps them ($\gamma_{A,B}$) which forms a natural isomorphism (third and fourth equations of \Cref{fig:rel-equations}).
\end{itemize}
We list all the resulting equations in \Cref{fig:rel-equations}, both in text and diagrams. In those diagrams, gray boxes correspond to ``bracketing'', and the overall consequence of those equations is that we can keep the bracketing implicit.

If we account for the cup and cap, relations form a compact closed category (where additionally all the objects are self-duals) \cite{KellyLaplaza}, which correspond to the additional equations listed in \Cref{appfig:rel-equations-cup-cap}. 

\section{Proofs for Finite Words}
\label{app:finite}

\subsection{Composing Transducers}\label{app:transducer-category}

In this section, we prove \Cref{prop:RegRel-is-cat} by showing that we can compose and make a product of transducers. 

Given two transducers $(\T,A,B,Q,I,F)$ and $(\S,C,D,P,J,G)$, we can build their product as $(\R,A \times C,B \times D,Q \times P, I \times J, F \times G)$ where $((a,c),(q,p))~\R~((b,d),(q',p'))$ whenever $(a,q)~\R~(b,q')$ and $(c,p)~\R~(d,p')$. Its behavior is exactly the product of the two behaviors. 
Assuming $B=C$, we can build their composition as $(\U,A,D,Q \times P,I \times J, F \times G)$ where  $(a,(q,p))~\U~(d,(q',p'))$ whenever there exists $b \in B$ such that  $(a,q)~\R~(b,q')$ and $(b,p)~\R~(d,p')$.	 Its behavior is exactly the composition of the two behaviors.

\subsection{From Transducers to Graphical Representation}\label{app:transducer-fin}

In this section, we prove \Cref{prop:transducer-fin}, that is for every transducer $(\T, A, B, Q, I,F)$, its behavior can be obtained by lifting $\T$ and using the shift as follows:
\[ \tikzfig{rel-trans-fin} \]

	We write $\R$ for the left-hand-side. We take $w \in A^*$ and $v \in B^*$ of size $n$, and note that by definition of the composition of relations, $w~\R~v$ if and only there exists two words $u,t \in Q^*$ such that $(w,u)~\T^*~(v,t)$ and $u~\transpose{\left(\shift{I}{F}\right)}~t$. 
	
	Let us focus on the former, it is equivalent to, for every $1 \leq k \leq n$, $(w_k,u_k)~\T~(v_k,t_k)$, that is $u_k \xrightarrow[v_k]{w_k} t_k$.
	
	Let us focus on the latter, it is equivalent to the first letter of $u$ being in $I$, the last letter of $t$ being in $F$, and for all $1 \leq k \leq n$, $u_{k+1} = t_k$. Said otherwise, there exists a sequence $q_0,q_1,\dots,q_n \in Q$ such that $u = q_0\dots q_{n-1}$, $t = q_1\dots q_n$, $q_0 \in I$ and $q_n \in F$.
	
	Coming back to the initial equivalence, we have $w~\R~v$ if and only if there exists a sequence $q_0,q_1,\dots,q_n \in Q$ such that $q_0 \in I$, $q_n \in F$, and for all $1 \leq k \leq n$, $q_{k-1} \xrightarrow[v_k]{w_k} q_k$. Since this is exactly the definition of a run, we have that $w~\R~v$ if and only if $w~\L(\T,A,B,Q,I,F)~v$.

\subsection{Soundness of the Standard Equations}\label{app:fin-soundness}

\begin{figure*}[h]
	\tikzfig{graph-generators}
	\caption{Generators of \bfup{Trans}.}
	\label{appfig:graph-generators}
\end{figure*}

\begin{figure*}[h]
	\tikzfig{rel-equations-fin}
	\caption{Equations for a Strict Symmetric Monoidal Category.}
	\label{appfig:rel-equations-fin}
\end{figure*}

\begin{figure*}[h]
	\tikzfig{graph-feedback}
	\caption{Equations for a Feedback Category.}
	\label{appfig:graph-feedback}
\end{figure*}

\begin{figure*}[h]
	\tikzfig{graph-finrel}
	\caption{Equations for Faithfully Embedding \bfup{FinRel}.}
	\label{appfig:graph-finrel}
\end{figure*}

\begin{figure*}
	\[\tikzfig{fin-interp}\]
	\caption{Inductive Definition of the Semantics $\binterp{-} : \bfup{Trans} \to \bfup{UniRel}$.}
	\label{appfig:fin-interp}
\end{figure*}

We start by  recalling the generators of our language and all its equations in \Cref{appfig:graph-generators,appfig:rel-equations-fin,appfig:graph-feedback,appfig:graph-finrel,appfig:graph-finrel}. Then we formally define the semantics $\binterp{-} : \bfup{Trans} \to \bfup{Rel}$ inductively on the syntax, as shown in \Cref{appfig:fin-interp}.

We now prove the soundness of the equational theory. The soundness of the equations stating that $\bfup{Trans}$ is a strict symmetric monoidal category follows immediately from the fact that $\bfup{Rel}$ is a strict symmetric monoidal category, so we only need to look at the equations of \Cref{fig:graph-feedback}. The top-left and bottom-left equations also follow from the fact that $\bfup{Rel}$ is a strict symmetric monoidal category. The top-right is sound because $\shift{\one}{\one}_\one = \id_\one$, and the bottom-right is sound because $\shift{I}{F}_C \times \shift{J}{G}_D = \shift{I \times J}{F \times G}_{C \times D}$. \qed

\subsection{Soundness of the Simulation Principle}\label{app:fin-simulation-principle}

\begin{figure*}
	\[\tikzfig{fin-simulation-principle-black}\]
	\caption{Semantics of the Simulation Principle for Finite Words.}
	\label{appfig:fin-simulation-principle-black}
\end{figure*}

We prove the soundness of the equation of \Cref{fig:fin-simulation-principle} with respect to $\binterp{-}$, that is, we want to prove the equation of \Cref{appfig:fin-simulation-principle-black}. 

Since all the relations involved are uniform, we can make a case-by-case analysis on the size of the input words on $A^*$. This mean we are now trying to prove the following for all $n \geq 0$:
\[\tikzfig{fin-simulation-principle-sound}\]
Then, using the hypothesis of \Cref{appfig:fin-simulation-principle-black}, we can rewrite $I$ into $\S \circ J$, then move the $\S$ through the $n$ copies of $\R$ to transform them into $n$ copies of $\T$, and lastly rewrite $\S \circ F$ into $G$.

\begin{figure*}[h]
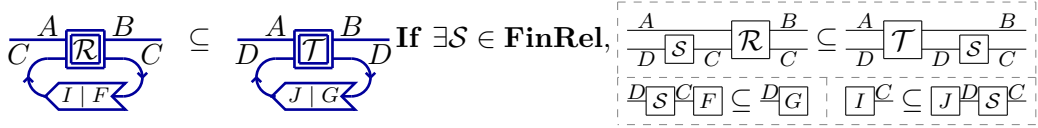

	\[\tikzfig{fin-backward-simulation-principle}\]
	\caption{Backward-Simulation Principle for Finite Words.}
	\label{appfig:fin-backward-simulation-principle}
\end{figure*}

\begin{figure*}[h]
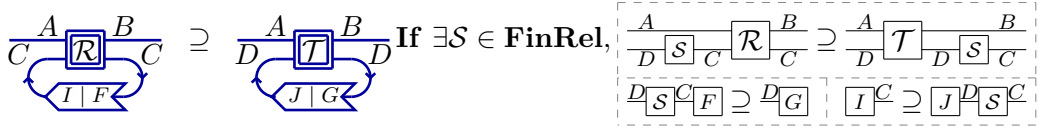

	\[\tikzfig{fin-forward-simulation-principle}\]
	\caption{Forward-Simulation Principle for Finite Words.}
	\label{appfig:fin-forward-simulation-principle}
\end{figure*}

In the above proof, one could replace all the instances of $=$ by $\subseteq$, or all of them by $\supseteq$, this would yield a proof of soundness for the backward-simulation principle of \Cref{appfig:fin-backward-simulation-principle} or of the forward-simulation principle of \Cref{appfig:fin-forward-simulation-principle}. \qed

\subsection{Minimization of finite transducers}\label{app:prop:minimization}
	We provide the full proof of \Cref{prop:minimization}.

	Using the logical reasoning as in \Cref{sec:logic}, we can rewrite the equation as the following. We are looking at $\forall a \in A, \forall p \in Q, \forall \ell \in L_{A^*},$
	\[ \begin{array}{c} \left(\exists q\in Q,~  (p \xrightarrow[\text{\small$\D$}]{a} q)\land (\ell = \{ w \mid \exists f \in F, q \xrightarrow[\text{\small$\D$}]{w} f\}) \right) \\ \Leftrightarrow \\ \left(\exists m \in L_{A^*},~  (m = \{ v \mid \exists f \in F, p \xrightarrow[\text{\small$\D$}]{v} f\}) \land (m \xrightarrow[\text{\small$L_\D$}]{a} \ell)\right) \end{array} \]

	Before studying either side of that equivalence, we note that since $p$ is accessible from the initial state, there exists a word $u \in A^*$ such that $i \xrightarrow[\text{\small$\D$}]{u} p$, hence $L_u = \{ v \mid \exists f \in F, p \xrightarrow[\text{\small$\D$}]{v} f\}$.

	We start by reformulating the second part of the equivalence.  Since the formula for $m$ is given, we can remove the $\exists m$ and obtain $L_u \xrightarrow[\text{\small$L_\D$}]{a} \ell$. We can now use the definition of $L_\D$ and simplify the later in $\ell = L_{ua}$.

	We now look at the first part of the equivalence, since $\D$ is deterministic, $q$ is uniquely determined so we can remove the $\exists q$ and we obtain $\ell = \{ w \mid \exists f \in F, \D(a,p) \xrightarrow[\text{\small$\D$}]{w} f\}$. Since  $i \xrightarrow[\text{\small$\D$}]{u} p$, this is equivalent to $\ell = \{ w \mid \exists f \in F, i \xrightarrow[\text{\small$\D$}]{uaw} f\}$, that is $\ell = L_{ua}$.

	\qed

\subsection{Deducing the Sliding Equation}\label{app:fin-slide}

\begin{figure*}
	\[\tikzfig{graph-slide}\]
	\caption{Deducible Equation.}
	\label{appfig:graph-slide}
\end{figure*}

We prove that the equation from \Cref{appfig:graph-slide} can be deduced from the others. For that, we start by using \Cref{prop:normal-form-fin} on $\bR$:
\[\tikzfig{graph-slide-proof-normal-form}\]
We can then merge the two feedbacks into one, rewriting both sides of the equation we are trying to prove into:
\[\tikzfig{graph-slide-proof-rewrite}\]

We then conclude using the the simulation principle, using the following prerequisites:
\[\tikzfig{graph-slide-proof-prerequisites}\]
~\qed

\section{Proofs for the Sofic Subshifts}
\label{app:subshifts}

\subsection{Sofic subshifts and pruned factor-closed language}
\label{app:bijection-subshifts}
Here we present the proofs that where omitted in \Cref{sec:subshifts} and a few additional minor or intermediate results on the correspondence between sofic subshifts and pruned factor-closed languages.

First recall that, in Section \ref{sec:subshifts}, we define sofic subshifts as the subshifts whose factor language is regular.
This may differ from the definition with which the reader is familiar with, as a more usual definition is that a susbshift is sofic when it can be defined as the image of a SFT (subshift of finite type) by a cellular automaton.

Note that we deal here only with symbolic dynamics in dimension 1, that is the full shift is $A^\ZZ$ for some finite alphabet $A$.
Hence another definition of a sofic shift is a subshift $X$ which can be defined by a regular language $\mathcal{F}$ of forbidden words. Denote $A^*\cdot \mathcal{F}\cdot A^*$ the set of finite word containing a word in $\mathcal{F}$ as a factor subword.
Now remark that the factor language of $X$ is precisely the complement of $A^*\cdot \mathcal{F} \cdot A^*$, since $\mathcal{F}$ is regular then $\fact{X}$ is also regular.
Conversely if $X$ is a subshift with regular factor language $\L$, then it can be defined as avoiding the complement of $\L$ which is also a regular language.

For the same considerations (though with a different terminology) one may refer to \cite{lind-marcus}.

We can refine the factor relation $x \sqsubseteq y$ into $x \subset y$ by asking for the indexes to match. More precisely, for $u$ a finite word and $x$ a bi-infinite word, we write $\shift{k}{}(u)\subset x$ whenever $u=x_{-k}\dots x_{-k+|u|-1}$. We remark that $\mathrm{d}(x,\shift{k}{}(u))= \sup\{ \mathrm{d}(x,y) | y\in A^\ZZ\ s.t.\ \shift{k}{}(u)\subset y\}$.

\begin{remark}[Comparing finite and infinite words]
  For simplicity we extend the distance $\mathrm{d}$ to compare finite words and infinite words.
  However this does not define a metric on the mixed space $A^*\cup A^\mathbb{Z}$.

  The most rigorous approach to talk of a convergence of a sequence of finite words to an infinite word is that of cylinders.

  The \textbf{cylinder} of word $u$ at position $k$, denoted by $[u]_k$, is the set of all bi-infinite words that coincide with $u$ at position $k$ that is
  \[ [u]_k := \{ x \in A^\ZZ | \forall 0\leq i <|u|, x_{i-k}=u_i\} \]

  In our notation this corresponds to $[u]_k = \{ x \in A^\ZZ | \shift{k}{}(u) \subset x\}$.

  We call \textbf{domain} of a cylinder $[u]_k$ the integer interval $\llbracket -k, |u|-k-1\rrbracket$, and we denote it here by $\mathcal{D}([u]_k)$.

  Given a family $(C_n)_{n\in\NN}$ of cylinders, if $\mathcal{D}(C_n)\underset{n\to\infty}{\longrightarrow} \ZZ$, then there exists a sub-family $(C_{f(n)})_{n \in \NN}$, for each $n$ a larger cylinder $C_{f(n)} \subset C_n'$ and an infinite word $x$ such that $\{x\} = \bigcap\limits_{n\in\NN} C_n'$.

  In particular if $\mathcal{D}(C_n)\underset{n\to\infty}{\longrightarrow} \ZZ$ and for all $n$, $C_{n+1}\subseteq C_n$ then $\bigcap\limits_{n\in\NN} C_{n}=\{x\}$ for some bi-inifinite word $x$.

  The easiest case is when each cylinder is some $C_n=[u_n]_{k_n}$ such that $u_n \internal{1} u_{n+1}$ with $u_{n+1}= v\cdot u_n \cdot v'$, $|v|=k_{n+1}-k_{n}>0$ and $|v'|>0$.
  In this case we indeed have $\bigcap\limits_{n\in\NN} C_{n}=\{x\}$ which, outside this remark, we write $\shift{k_n}{}(u_n) \underset{n\to\infty}{\longrightarrow} x$.
\end{remark}

\begin{remark}[Intersection of cylinders compared to limit of words]
  The notion of limit of words is a bit more permissive than that of intersection of cylinders.
  Take for example the sequence of word $u_n := 1\cdot 0^{2n+1}\cdot 1$ and positions $k_n=n+1$.
  We have $\shift{k_n}{}(u_n) \underset{n \to \infty}{\longrightarrow} 0^\ZZ$, but $\bigcup\limits_{n\in\NN}[u_n]_{k_n} = \emptyset$.
  However, taking $v_n = 0^{2n+1}$ and $k_n'=n$, we have $[u_n]_{k_n} \subset [v_n]_{k_n'}$ as $u_n = 1\cdot v_n \cdot 1$ and $\bigcap\limits_{n\in \NN}[v_n]_{k_n'} = \{ 0^\ZZ\}$.
\end{remark}

\medskip

We now prove Lemma \ref{lem:limlang-is-subshift} which states that any limit language $\limlang{L}$ is a subshift.

\begin{proof}[Proof of Lemma \ref{lem:limlang-is-subshift}]
	Let $L\subseteq A^*$.
	First remark that $\limlang{L}$ is straightforwardly shift invariant, because if $x\in\limlang{L}$ then by definition $x$ is some limit of $\shift{k_n}{}(u_n)$ and simply shifting the sequence of indices by $k$ proves that $\shift{k}{}(x)$ is in $\limlang{L}$ as $\shift{k+k_n}{}(u_n)\underset{n\to\infty}{\longrightarrow} \shift{k}{}(x)$.
	
	Now we prove similarly that $\limlang{L}$ is closed. Let $y\in A^{\mathbb{Z}}$ be the limit of a sequence $(x_n)_{n\in\mathbb{N}}$ of infinite words in $\limlang{L}$.
	First let us remark that the sequence of central factors of $x_n$ also converges to $y$, that is $(x_n)_{-n}\dots (x_n)_{n} \underset{n\to\infty}{\longrightarrow} y$.
	Since $x_n\in \limlang{L}$, it is the limit of some sequence $(\shift{k_{n,m}}{}{u_{n,m}})_{m\in\mathbb{N}}$ where the $k_{n,m}$ are integers and $u_{n,m}$ words in $L$. Hence for each $n$  there exists $f(n)$ such that $\shift{k_{n,f(n)}}{}(u_{n,f(n)})$ contains $(x_n)_{-n}\dots (x_n)_{n}$ and we now have that $\shift{k_{n,f(n)}}{}(u_{n,f(n)}) \underset{n\to\infty}{\longrightarrow} y$. Hence $y$ is in $\limlang{L}$ and so $\limlang{L}$ is closed.
\end{proof}

\begin{lemma}
	A regular language is factor-closed if and only if it is recognized by a finite automaton where all states are initial and final. Furthermore, all the states of the minimal automaton of a factor-closed language are final.
\end{lemma}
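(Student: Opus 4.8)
The plan is to prove both directions of the biconditional and then the supplementary claim about the minimal automaton. For the statement, recall that a language $L$ is factor-closed when $\fact{L} = L$, and that a word is a factor of another precisely when it appears as a contiguous subword. I would first establish the easy direction: if $L$ is recognized by a finite automaton $(\T,A,Q,Q,Q)$ where all states are both initial and final, then $L$ is factor-closed. Indeed, take $w \in L$ and let $u \sqsubseteq w$, so $w = x u y$ for some words $x,y$. Since $w$ is accepted, there is a run $q_0 \xrightarrow{} q_1 \xrightarrow{} \dots \xrightarrow{} q_n$ with $q_0, q_n \in Q$. This run passes through intermediate states, and the sub-run corresponding to the factor $u$ starts and ends at some states $q_i$ and $q_j$; because every state is initial and final, this sub-run is itself an accepting run for $u$, so $u \in L$. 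Hence $\fact{L} \subseteq L$, and since trivially $L \subseteq \fact{L}$, we get $L = \fact{L}$.

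For the converse, suppose $L$ is factor-closed and regular. The natural approach is to start from any finite automaton recognizing $L$, but the cleanest route is to work with the minimal (complete) deterministic automaton, or at least to trim it. First I would take a trim automaton recognizing $L$, meaning one in which every state is both accessible from some initial state and co-accessible to some final state; every regular language admits such an automaton since inaccessible and non-co-accessible states can be removed without changing the language. The claim is then that in any trim automaton for a factor-closed language, turning every state into both an initial and final state does not change the recognized language. One inclusion is immediate, since enlarging $I$ and $F$ can only add words. For the other inclusion, suppose $w$ is accepted by the modified automaton via a run from an arbitrary state $p$ to an arbitrary state $r$. Since the automaton is trim, there is a word $x$ leading from a genuine initial state to $p$ and a word $y$ leading from $r$ to a genuine final state, so $x w y \in L$. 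As $L$ is factor-closed and $w \sqsubseteq x w y$, we conclude $w \in L$. Thus the modified automaton recognizes exactly $L$.

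For the final sentence, I would argue directly about the minimal automaton $(\D, A, Q, \{i\}, F)$ of a factor-closed language $L$. The key fact is the Myhill--Nerode characterization: states of the minimal automaton correspond to right-residual languages $w^{-1} L = \{ v \mid wv \in L \}$, and a state is final exactly when the corresponding residual contains $\epsilon$, i.e.\ when $w \in L$. So I must show every residual $w^{-1}L$ that occurs (i.e.\ every accessible state) contains $\epsilon$. A state is accessible iff it equals $w^{-1}L$ for some $w$ that actually labels a path from $i$, and in the minimal complete automaton one subtlety is the sink state where $w^{-1}L = \varnothing$; this state is non-final and would be a counterexample unless $L = A^*$. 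The resolution is that the minimal automaton here should be taken without the sink (the minimal \emph{trim} automaton), or equivalently every accessible non-sink state arises from some $w \in \fact{L} = L$, whence $\epsilon \in w^{-1}L$ and the state is final. I expect this last step to be the main obstacle: one must be careful to exclude the dead/sink state and to pin down precisely which ``minimal automaton'' is meant, since the bare Myhill--Nerode automaton of $L = A^*$ versus a proper factor-closed language behaves differently. Once the convention is fixed so that accessibility forces the prefix $w$ to lie in $L$, the conclusion that all states are final is immediate from factor-closure.
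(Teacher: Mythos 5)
Your proof is correct, and it diverges from the paper's in two instructive ways. For the converse direction the paper works with the minimal deterministic automaton: it first argues that every accessible and co-accessible state is final (using determinism and factor-closure), then makes every state initial and checks the language is unchanged via the same extension argument you use ($vu \in L$, factor-closure, hence $u \in L$); the ``furthermore'' clause then falls out of that same analysis. You instead take an arbitrary \emph{trim} automaton and flip all states to initial and final in one step, which is slightly more general since it never invokes determinism or minimality, at the cost of needing a separate Myhill--Nerode residual argument for the final claim (a state is final iff its residual $w^{-1}L$ contains $\epsilon$, and every nonempty residual has $w \in \fact{L} = L$, hence contains $\epsilon$). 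Second, your caution about the sink state is not pedantry but an actual repair: the paper's step ``by minimality, all states are accessible and co-accessible'' holds only for the minimal trim (partial) automaton, since the complete minimal DFA of a factor-closed $L \neq A^*$ has an accessible, non-co-accessible, non-final dead state; indeed the paper's own appendix proof of the uniqueness theorem treats the minimal automaton as having non-final states and restricts it to $F$, noting the transition function becomes partial. Your explicit fixing of the convention resolves an imprecision that the paper's proof glosses over.
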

\begin{proof}
	First given an automata $\mathcal{A}$ where all states are initial and final and a word $u$ recognized by $\mathcal{A}$, from an accepting run of $u$ in $\mathcal{A}$ we can directly read accepting runs for all factors of $u$, so the language recognized by $\mathcal{A}$ is factor-closed.
	
	Now, we consider a factor closed regular language $L$.
	There exist a minimal deterministic automaton $\mathcal{A}$ recognizing $L$.
	As $L$ is factor closed and $\mathcal{A}$ is determistic, every accessible and co-accessible state (\emph{i.e.}, on a path from an initial state to an accepting state) is accepting.
	By minimality, all states are accessible and co-accessible so all states are accepting.
	Now considering $\mathcal{A}'$ where all states are also initial states.
	$\mathcal{A}'$ also recognizes $L$. First we have that $L$ is included in the language recognized by $\mathcal{A}'$ because any accepting run in $\mathcal{A}$ is also accepting in $\mathcal{A}'$.
	Now we prove that the recognized language is exactly $L$. 
	Let $u$ be a word accepted by $\mathcal{A}'$, it is accepted by a path starting on some state $q$, as $q$ is accessible in $\mathcal{A}$ there exist a word $v$ such that $v\cdot u$ is accepted on a path starting from an initial state $q_0$ of $\mathcal{A}$. Therefore $v\cdot u$ is in $L$, by factor closure $u$ is also in $L$.
\end{proof}

\begin{lemma}[Limit and pruning]
	Given a language $L\subseteq A^*$, we have $\fact{\limlang{L}} = \fact{\prun{L}}$.
	\label{lemma:fact-prun}
\end{lemma}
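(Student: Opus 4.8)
The plan is to prove the set equality $\fact{\limlang{L}} = \fact{\prun{L}}$ by a double inclusion, exploiting the fact that both sides are by construction factor-closed, so it suffices to compare the factors appearing in the limit words with the factors surviving the pruning.

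First I would prove $\fact{\prun{L}} \subseteq \fact{\limlang{L}}$. Take a word $u \in \prun{L}$; by definition of pruning, for every $n$ there exists $v_n \in L$ with $u \internal{n} v_n$, meaning $v_n = s_n u t_n$ with $|s_n|, |t_n| \geq n$. Positioning each $v_n$ so that the central copy of $u$ sits at a fixed location (say, indexing $u$ from $0$), the words $\shift{k_n}{}(v_n)$ all contain $u$ at the same position while their left and right margins grow without bound. By a compactness argument in $A^\ZZ$ (the prodiscrete topology is compact, as recalled in the excerpt), one extracts a subsequence converging to some $x \in A^\ZZ$ that still contains $u$ as an internal factor. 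Hence $x \in \limlang{L}$ and $u \in \fact{\limlang{L}}$. Since $\fact{\prun{L}}$ is factor-closed and generated by $\prun{L}$, this gives the inclusion on all factors.

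Next I would prove the reverse inclusion $\fact{\limlang{L}} \subseteq \fact{\prun{L}}$. Let $u \sqsubseteq x$ for some $x \in \limlang{L}$, so $x = \lim_n \shift{k_n}{}(w_n)$ with $w_n \in L$. Because $u$ occurs in $x$ at some fixed window of indices, convergence means that for all large $n$ the word $\shift{k_n}{}(w_n)$ already contains $u$ at that same window; moreover, since the domains of the approximating words exhaust $\ZZ$, the occurrence of $u$ inside $w_n$ has arbitrarily long extensions on both sides as $n$ grows. The remaining task is to upgrade "$u$ has long extensions inside words of $L$" to "$u$ is a factor of a word that is itself in $\prun{L}$". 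Here I would use that the witnessing words can be taken with unbounded simultaneous left and right margins, so that the sub-factor of $w_n$ consisting of $u$ together with large but bounded surrounding margins is pruned, and contains $u$; thus $u \in \fact{\prun{L}}$.

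The main obstacle I anticipate is the second inclusion, specifically the step that passes from a single convergent sequence of words of $L$ to producing an \emph{pruned} witness. Convergence only guarantees that $u$ sits inside ever-larger finite approximants, but these approximants themselves need not lie in $\prun{L}$; a word of $L$ can contain $u$ deep inside while still being "dead-ended" at its own extremities. The careful argument is to observe that any word $w \in L$ admitting, for every $n$, an extension $s w t \in L$ with $|s|,|t| \geq n$ is by definition in $\prun{L}$, and that the limit structure of $x$ forces exactly such unbounded two-sided extendability for the occurrences of $u$. I would therefore isolate the precise finite sub-word around $u$ that inherits this two-sided extendability from the tail of the convergent sequence, and verify it belongs to $\prun{L}$. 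Handling the indexing bookkeeping (the shifts $k_n$ and the matching of windows) cleanly, so that "contains $u$ at a fixed position with growing margins" is made rigorous, is where most of the technical care is required.
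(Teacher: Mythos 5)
Your first inclusion, $\fact{\prun{L}} \subseteq \fact{\limlang{L}}$, is essentially the paper's own argument: position the growing extensions $s_n u t_n \in L$ so that $u$ sits at a fixed window, and use compactness of $A^\ZZ$ to extract a convergent subsequence whose limit still contains $u$; this direction is fine. The problem is in the converse inclusion, at exactly the step you flagged as delicate. You propose to take as pruned witness ``the sub-factor of $w_n$ consisting of $u$ together with large but bounded surrounding margins.'' But $\prun{L} \subseteq L$ by definition: a word belongs to $\prun{L}$ only if it belongs to $L$ in the first place, and the lemma makes no factor-closure assumption on $L$, so a sub-factor of a word $w_n \in L$ has no reason whatsoever to lie in $L$, let alone in $\prun{L}$. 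For the same reason, the unbounded two-sided extendability of the occurrences of $u$ that you correctly extract from convergence does not by itself give $u \in \prun{L}$: the word $u$ need not be in $L$ either. So ``verify it belongs to $\prun{L}$'' cannot go through for any trimmed window around $u$; the witness must be an \emph{entire} word of $L$.

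The paper's repair is to take that witness to be a whole early approximant rather than a cut-out sub-word. Writing $x = \lim_n \shift{k_n}{}(u_n)$ with $u_n \in L$ and $u \sqsubseteq x$, first fix $n_0$ large enough that $\shift{k_{n_0}}{}(u_{n_0})$ agrees with $x$ on a window covering the occurrence of $u$, so that $u \sqsubseteq u_{n_0} \in L$. Then set $k'' = |k_{n_0}| + |u_{n_0}|$, which bounds the entire domain of the positioned word $\shift{k_{n_0}}{}(u_{n_0})$; for every $l$, choosing $n$ with $\mathrm{d}(\shift{k_n}{}(u_n),x) \leq 2^{-k''-l}$ forces $u_n$ to contain $u_{n_0}$ with both margins of length at least $l$, i.e., $u_{n_0} \internal{l} u_n$ with $u_n \in L$. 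Since $l$ is arbitrary, $u_{n_0} \in \prun{L}$, and hence $u \in \fact{\prun{L}}$. The idea you were missing is that the two-sided extendability is inherited by the approximant $u_{n_0}$ itself, with the extensions witnessed inside the \emph{later} approximants $u_n$ of the same convergent sequence — this supplies a pruned witness that genuinely lies in $L$, which no bounded-margin sub-word of some $w_n$ can be guaranteed to do.
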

This result is quite natural as factors of limit words are precisely the words that have arbitrarily long left-and-right extensions in the language.
\begin{proof}
	We prove this lemma by double inclusion. Let $L$ be a language.
	First we prove $\fact{\limlang{L}}\subseteq \fact{\prun{L}}$.
	Let $v \in \fact{\limlang{L}}$, that is there exists $x \in \limlang{L}$ such that $v\sqsubseteq x$, that is, there exists $k\in \mathbb{Z}$ $\shift{k}{}(v)\subset x$.
	By definition of $\limlang{L}$, there exists two sequences $(u_n)_{n\in\mathbb{N}}$ of words in $L$ and $(k_n)_{n\in\mathbb{N}}$ of integers such that $\lim\limits_{n\to \infty} \shift{k_n}{}(u_n) = x$.
	Denote $k' = |k| + |v|$, by definition for any integer $l$ and for any word $v'$ such that $d(x,v')\leq 2^{-k'-l}$ we have $x_{[-k'-l,k'+l]} \sqsubseteq v'$ and since $v\sqsubseteq x_{[-k',k']}$ we have $v \sqsubseteq v'$ and the existence of $v_p, v_s$ of length at least $l$ such that $v_p v v_s =v'$.
	In particular as  $(\shift{k_n}{}(u_n))$ tends to $x$, there exists $n_0$ such that
	$d(\shift{k_{n_0}}{}(u_{n_0}),x)\sqsubseteq 2^{-k'}$ so that $v\sqsubseteq u_{n_0} \in L$.
	Now repeat the same reasoning with $u_{n_0}$, and denote $k'' = |k_{n_0}| + |u_{n_0}|$.
	As $(\shift{k_n}{}(u_n))$ tends to $x$, for any $l$ there exists $n$ such that we have $d(\shift{k_n}{}(u_n),x)\sqsubseteq 2^{-k''-l}$ so that $u_{n_0}\sqsubseteq u_n$ with a prefix $u_p$ and a suffix $u_s$ of length at least $l$ such that $u_p u_{n_0} u_s = u_n$.
	That is $v\sqsubseteq u_{n_0}$ and $u_{n_0} \in\prun{L}$ so that $v\in \fact{\prun{L}}$.
	
	Now the second inclusion.
	Let $u\in \fact{\prun{L}}$, that is $u\sqsubseteq u' \in \prun{L}$.
	For simplicity assume that $|u'|$ is odd and denote $k' = \frac{|u'|-1}{2}$.
	That is, for any $n$ there exists $u_{p,n},u_{s,n} \in A^{\geq n}$ such that $u_{p,n}u'u_{s,n} \in L$.
	Denote $k_n = |u_{p,n}| + k'$ and $u_n = u_{p,n} u' u_{s,n}$.
	By compactness of $A^{\mathbb{Z}}$, $(\shift{k_n}{}(u_n))$ admits a converging subsequence. Denote $x$ its limit, which by definition is in $\limlang{L}$.
	Since $u'$ is at the center of each $\shift{k_n}{}(u_n)$, more precisely $\shift{k_n}{}(u_n)|_{[-k', k']}=u'$, so we have $u'\sqsubseteq x$ and more precisely $x|_{[-k',k']}=u'$. So $u\sqsubseteq u' \sqsubseteq x$ and $u \in \fact{\limlang{L}}$.
\end{proof}

\begin{definition}[Orbit, closure]
	Given a set $X\subseteq A^{\mathbb{Z}}$ we define its orbit $\orbit{X}$ as the set of all possible shifts of words in $X$, that is $\orbit{X}:= \{ \shift{k}{}(x) , k\in \mathbb{Z}, x \in X\}$.
	
	We also define its orbit-closure $\closure{\orbit{X}}$ as the topological closure of its orbit.
\end{definition}

As subshifts are precisely sets of bi-inifinte words that are shift-invariant and closed, a set $X\subseteq A^{\mathbb{Z}}$ is a subshifts if and only if $X= \closure{\orbit{X}}$. And more generaly $\closure{\orbit{X}}$ is the smallest subshift containing $X$.
\begin{lemma}
	For any set $X$ of infinite words we have $\limlang{\fact{X}} = \closure{\orbit{X}}$.
	\label{lemma:lim-factor}
\end{lemma}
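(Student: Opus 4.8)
The plan is to prove the equality $\limlang{\fact{X}} = \closure{\orbit{X}}$ by double inclusion, leaning on the two facts already established: that a limit language is always a subshift (\Cref{lem:limlang-is-subshift}), and that $\closure{\orbit{X}}$ is the smallest subshift containing $X$ (noted just before the statement).

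For the inclusion $\closure{\orbit{X}} \subseteq \limlang{\fact{X}}$, I would first observe that $\limlang{\fact{X}}$ is a subshift by \Cref{lem:limlang-is-subshift}, so by the minimality of $\closure{\orbit{X}}$ it suffices to check that $X \subseteq \limlang{\fact{X}}$. Given $x \in X$, I would exhibit $x$ as a limit of shifted finite factors of itself: taking the central factors $u_n := x_{-n}\dots x_n \in \fact{X}$ with shift indices $k_n := n$, one has $\shift{k_n}{}(u_n) \to x$, so $x \in \limlang{\fact{X}}$. This gives $X \subseteq \limlang{\fact{X}}$ and hence the inclusion.

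For the reverse inclusion $\limlang{\fact{X}} \subseteq \closure{\orbit{X}}$, I would take $x \in \limlang{\fact{X}}$, so $x = \lim_n \shift{k_n}{}(u_n)$ with each $u_n \in \fact{X}$. By definition of the factor language, each $u_n$ appears as a factor of some genuine bi-infinite word $y_n \in X$, say at a position matching $\shift{k_n}{}(u_n)$; by shifting $y_n$ appropriately one obtains a word $z_n \in \orbit{X}$ agreeing with $\shift{k_n}{}(u_n)$ on the domain of that factor. Since $\shift{k_n}{}(u_n) \to x$, the agreement forces $z_n \to x$ as well (the factors grow in length, so the region of agreement exhausts $\ZZ$), whence $x \in \closure{\orbit{X}}$.

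The main obstacle I anticipate is the bookkeeping in the reverse inclusion: one must carefully track indices so that the shifted occurrence of $u_n$ inside $y_n$ lines up with $\shift{k_n}{}(u_n)$, and then argue the convergence $z_n \to x$ rigorously using the metric $\mathrm{d}$. This is exactly the subtlety flagged in the preceding remarks distinguishing limits of words from intersections of cylinders, so I would frame the convergence in terms of the region of agreement $\llbracket -k_n, |u_n|-k_n-1 \rrbracket$ tending to $\ZZ$, ensuring $\mathrm{d}(z_n,x) \to 0$. Once the index alignment is handled, both inclusions are routine.
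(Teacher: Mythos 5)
Your proof is correct, and it differs from the paper's in the first inclusion. For $\closure{\orbit{X}} \subseteq \limlang{\fact{X}}$, the paper argues directly: it takes $y \in \closure{\orbit{X}}$, an approximating sequence $x_n \in \orbit{X}$, and extracts the central factors $u_n := (x_n)|_{[-n,n]} \in \fact{X}$, so that $\shift{n}{}(u_n) \to y$. You instead invoke \Cref{lem:limlang-is-subshift} to see that $\limlang{\fact{X}}$ is a subshift, check only the easy containment $X \subseteq \limlang{\fact{X}}$ via central factors of $x$ itself, and conclude by the minimality of $\closure{\orbit{X}}$ among subshifts containing $X$ (a fact the paper records just before the lemma). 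Your route is arguably cleaner -- it avoids juggling the two sequences $x_n$ and $u_n$ simultaneously -- at the cost of depending on \Cref{lem:limlang-is-subshift}, which the paper's direct argument does not need. For the reverse inclusion $\limlang{\fact{X}} \subseteq \closure{\orbit{X}}$ your argument is essentially the paper's: embed each $u_n$ into some $z_n \in \orbit{X}$ whose occurrence of $u_n$ is aligned with $\shift{k_n}{}(u_n)$, and conclude $z_n \to x$. In fact you are more careful here than the paper, whose choice of alignment ($\shift{-|u_n|/2}{}(u_n) \subset x_n$) does not match the offsets $k_n$ as written; your explicit requirement that the occurrence line up on the domain $\llbracket -k_n, |u_n|-k_n-1 \rrbracket$, combined with the paper's earlier remark that $\mathrm{d}(x,\shift{k}{}(u)) = \sup\{\mathrm{d}(x,y) \mid \shift{k}{}(u) \subset y\}$, gives $\mathrm{d}(z_n,x) \leq \mathrm{d}(\shift{k_n}{}(u_n),x) \to 0$ cleanly, which is exactly the bookkeeping the paper's sketch glosses over.
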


\begin{proof}
	Let $y \in \closure{X}$, by definition there exists a sequence $(x_n)_{n \in \mathbb{N}}$ of elements of $\orbit{X}$ such that $x_n \underset{n\to\infty}{\longrightarrow}y$, by definition there exists for each $n$, $x_n'\in X$ and $k_n\in \mathbb{Z}$ such that $x_n = \shift{k_n}{}(x_n')$ .
	Denote $\shift{n}{}(u_n):= x_n|_{[-n,n]}$, that is the central word on length $2n+1$ in $x_n$.
	We have for each $n$, $u_n$ in $\fact{X}$ as $u_n \sqsubseteq x_n = \shift{k_n}{}(x_n')$, and we have by definition that $\shift{n}{}(u_n) \underset{n\to\infty}{\longrightarrow} y$ so that $y\in \limlang{\fact{X}}$.
	
	Conversely, let $y \in \limlang{\fact{X}}$. There exists a sequence $(u_n)_{n\in\mathbb{N}}$ of words in $\fact{X}$ and a sequence $(k_n)_{n\in \mathbb{N}}$ of offsets such that $\shift{k_n}{}(u_n) \underset{n\to\infty}{\longrightarrow} y$.
	By definition for each $u_n$ there exists a $x_n$ in $\orbit{X}$ such that $\shift{-|u_n|/2}{}(u_n) \subset x_n$.
	By definition of the distance on $A^{\mathbb{Z}}$ the sequence $x_n$ also converges and to the same limit $y$. Hence $y$ is in $\closure{\orbit{X}}$.  
\end{proof}

\begin{lemma}
  \label{lemma:fact-is-pfc}
  For any set $X \subseteq A^\ZZ$, $\fact{X}$ is pruned and factor-closed.
\end{lemma}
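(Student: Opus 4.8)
The plan is to prove that for any $X \subseteq A^\ZZ$, the language $\fact{X}$ is both pruned and factor-closed. These are two separate conditions, so I would handle them one at a time, starting with the easier factor-closure.

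First I would establish factor-closure, i.e.\ that $\fact{\fact{X}} = \fact{X}$. The inclusion $\fact{X} \subseteq \fact{\fact{X}}$ is immediate since every word is a factor of itself. For the reverse inclusion, take $u \in \fact{\fact{X}}$, so there exists $w \in \fact{X}$ with $u \sqsubseteq w$, and in turn there exists $x \in X$ with $w \sqsubseteq x$. Since the factor relation $\sqsubseteq$ is transitive (if $u = w_1 w w_2$ as a subword of $w$, and $w$ appears as a subword $x = v_1 w v_2$, then $u$ appears as a subword of $x$), we get $u \sqsubseteq x$, hence $u \in \fact{X}$. This is a one-line transitivity argument.

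Next I would show prunedness, that is $\prun{\fact{X}} = \fact{X}$. By definition $\prun{\fact{X}} \subseteq \fact{X}$ always holds, so the content is the reverse inclusion: every $u \in \fact{X}$ lies in $\prun{\fact{X}}$, meaning for every $n \in \NN$ there must exist $w \in \fact{X}$ with $u \internal{n} w$ (i.e.\ $w = w_p u w_s$ with both $w_p$ and $w_s$ of length at least $n$). The key observation is that $u \in \fact{X}$ means $u \sqsubseteq x$ for some genuinely bi-infinite word $x \in A^\ZZ$. Since $x$ extends infinitely in both directions, for any $n$ I can take the factor $w$ of $x$ obtained by including $n$ additional letters of $x$ on each side of the occurrence of $u$; this $w$ is a finite word, is a factor of $x$ (hence lies in $\fact{X}$), and has $u$ as an internal factor of depth $n$. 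This is where the bi-infiniteness of elements of $A^\ZZ$ is essential, and it is the only subtle point.

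The main obstacle, such as it is, is purely bookkeeping with indices: one must be careful that the extension on both sides is always available precisely because $x$ is indexed over all of $\ZZ$, so no boundary can ever be reached. There is no genuine difficulty here since $X$ consists of two-sided infinite words; the proof would be markedly harder (indeed false in general) if $X$ consisted of finite or one-sided infinite words, where prunedness can fail at the ends. Thus the whole lemma reduces to the transitivity of $\sqsubseteq$ for factor-closure and the unbounded two-sided extendability of any factor of a bi-infinite word for prunedness.
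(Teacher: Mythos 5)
Your proposal is correct and takes essentially the same approach as the paper's own proof: factor-closure follows from transitivity of $\sqsubseteq$, and prunedness from extending any occurrence of $u$ in a bi-infinite word $x \in X$ by $n$ letters on each side, producing a factor of $x$ (hence an element of $\fact{X}$) containing $u$ as an internal factor of depth $n$. Your remark that bi-infiniteness of $x$ is exactly what guarantees the two-sided extensions correctly identifies the only point of substance.
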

\begin{proof}
  Let $X\subseteq A^\ZZ$ and $L=\fact{X}$.

  $L$ is factor closed, indeed for any $u\in L$, there exists $x\in X$ such that $u\sqsubseteq x$ by definition of $\fact{-}$. For any $v\in L$ we have $v\sqsubseteq u \sqsubseteq x$ so $v\in \fact{x} \subset \fact{X}=L$.
  So we have $L=\fact{L}$.
  
  $L$ is pruned, indeed for any $u\in L$, there exists $x\in X$ and $k\in \mathbb{Z}$ such that $\shift{k}{}(u) \subset x$.
  Denote $k_n = k+n$ and $u_n = x_{-k-n}\dots x_{-k+|u|+n}$.
  For each $n$ we have $u_n\in \fact{x}\subseteq L$ and we have $u\internal{n}u_n$. Hence $u\in \prun{L}$ and $L$ is pruned.
\end{proof}

We now prove Proposition \ref{prop:bijection-subshifts} which states that $\fact{}:\P(A^\ZZ) \to \P(A^*)$ is, when restricted from sofic subshifts to pruned factor-closed regular languages, is a bijection.

\begin{proof}[Proof of Proposition \ref{prop:bijection-subshifts}]
  In this proof we denote $\soficshifts\subseteq \P(A^\ZZ)$ the set of sofic subshifts, $\regularlanguages\subseteq \P(A^*)$ the set of regular languages and $\regularfactorpruned\subseteq \regularlanguages$ the set of factor-closed and pruned regular languages.

  Let $L\in\regularfactorpruned$, then by \Cref{lemma:fact-prun}, $\fact{\limlang{L}} = \fact{\prun{L}} = L$.

  Let $X\in\soficshifts$, by \Cref{lemma:lim-factor}, $\limlang{\fact{X}} = \closure{\orbit{X}} = X$ and since $X$ is sofic, $\fact{\limlang{\fact{X}}} = \fact{X}$ is regular. It is also pruned and factor-closed by \Cref{lemma:fact-is-pfc}.

  Therefore the restriction $\fact{-} : \soficshifts \to \regularfactorpruned$ is bijective and its inverse is the restriction $\limlang{-} : \regularfactorpruned \to \soficshifts$.
\end{proof}

\subsection{Characterization of the Behavior of a Transducer}\label{app:transducer-behavior}

We prove Lemma \ref{lem:zeta-transducer-from-transducer} which states that the behavior of the infinite transducer is the limit of the behavior of the corresponding finite transducer, that is $\L^\ZZ(\T,A,B,Q) = \limlang{\L(\T,A,B,Q,Q,Q)}$.

\begin{proof}[Proof of Lemma \ref{lem:zeta-transducer-from-transducer}]
  Let $(\T,A,B,Q)$ be a $\ZZ$-transducer.

  Denote $X_{\T} \subset (A\times B)^\ZZ$ its behavior, that is $X_{\T}= \L^\ZZ(\T,A,B,Q)$.
  And $\L_{\T}\subset (A\times B)^*$ the behavior of the finite transducer $(\T,A,B,Q,Q,Q)$, that is $\L_{\T}=\L(\T,A,B,Q,Q,Q)$.

  We prove $X_{\T}=\limlang{\L_{\T}}$.

  This holds because an infinite run in $(\T,A,B,Q)$ is precisely the limit of an increasing sequence of finite runs in $(\T,A,B,Q,Q,Q)$.

  We prove that $\limlang{\L_{\T}} \subseteq X_{\T}$.
  First recall that since all states are initial and final, $\L_{\T}$ is pruned and factor-closed. Hence any sequence $(\shift{k_n}{}(u_n))_{n\in\NN}$ of words in $\L_{\T}$ that converge to a bi-infinite word $x$ can be taken as strictly increasing that is $\shift{k_n}{}(u_n)\internal{1}\shift{k_{n+1}}{}(u_{n+1})$.
  Take such a sequence $(\shift{k_n}{}(u_n))$ in $\L_{\T}$.
  For each $n$, $u_n=(a_{n,0},b_{n,0})\dots (a_{n,k},b_{n,k})$ admits a run in $(\T,A,B,Q,Q,Q)$ denote $u_n'$ the word in $(A\times B \times Q)$ such that for each $i$ we have $q_{n,i} \xrightarrow[a_{n,i}]{b_n{i}} q_{n,i+1}$.

  As $(A\times B \times Q)^\ZZ$ is compact, the sequence $(\shift{k_n}{}(u_n'))_{n\in\NN}$ has subsequence that converges to some bi-infinite word $y$.
  By hypothesis the projection of $y$ on the first two components of each letter is $x$ and $y$ is the trace of an infinite run in $(\T,A,B,Q)$ and hence $x \in X_{\T}$.

  \medskip

  Now we prove the converse, that is $X_{\T} \subseteq \limlang{\L_{\T}}$.
  Let $x = ((a_i,b_i))_{i\in \ZZ} \in X_{\T}$, there exists a sequence of states $(q_i)_{i\in\NN}$ such that
  $… \xrightarrow[a_{-1}]{b_{-1}}q_0 \xrightarrow[a_0]{b_0} q_1 \xrightarrow[a_1]{b_1}$ is a bi-infinite run in $(\T,A,B,Q)$.
  For each $n$, let $k_n = n$ and $u_n=(a_{-n},b_{-n})\dots (a_n,b_n)$ (the central subword of length $2n+1$ in $x$).

  For each $n$ we have $q_{-n-1} \xrightarrow[a_{-n}]{b_{-n}} q_{-n} \dots \xrightarrow[a_n]{b_n} q_n$ which is a run of $(\T,A,B,Q,Q,Q)$ of length $2n+1$.
  Hence $u_n\in \L_{\T}$. Therefore $x=\lim\limits_{n\to\infty} \shift{k_n}{}(u_n)$ is in $\limlang{\L_{\T}}$

\end{proof}

\subsection{Uniqueness of the Minimal Rooted Right-Resolving Pruned Presentation}\label{app:inf-uniqueness}

We recall that a presentation of a sofic subshift $X$ is simply a $\ZZ$-transducer with $\one$ for output alphabet, so we start by studying completeness in the case where $B = \one$. A presentation of a sofic subshift is:
\begin{description}
	\item[Pruned] if its language is pruned, meaning that for every word $w$ in the language, there exists $u$ and $v$ both non-empty such that $uwv$ is in the language.
	\item[Right-Resolving] whenever its transition relation is a partial function.
	\item[Rooted] whenever there exists a root state $r$ such that
	\begin{itemize}
		\item every accepted word admits an accepted run starting by $r$,
		\item and every state is accessible from $r$, meaning there is a run from $r$ to that state.
	\end{itemize}
\end{description}

\begin{lemma}[Pruned Presentation]\label{applem:pruned}
	For any pruned presentation $(\T,A,Q)$ of the sofic subshift $X$, the language recognized by the presentation is exactly $\fact{X}$.
\end{lemma}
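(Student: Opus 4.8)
The plan is to simply chain together the properties of presentations with the limit/factor machinery already developed, so that the statement falls out of \Cref{lemma:fact-prun} and the characterisation of factor-closed languages. Write $L$ for the language recognised by the automaton $(\T,A,Q,Q,Q)$, so that by definition of a presentation we have $\limlang{L}=X$. First I would record two structural facts about $L$. Since every state of the presentation is both initial and final, the earlier characterisation (a regular language is factor-closed iff it is recognised by an automaton whose states are all initial and final) applies verbatim and gives that $L$ is factor-closed, i.e.\ $\fact{L}=L$. Second, the pruned hypothesis on the presentation is precisely the statement $\prun{L}=L$.

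With these in hand, I would invoke \Cref{lemma:fact-prun}, which asserts $\fact{\limlang{L}}=\fact{\prun{L}}$ for an arbitrary language, and apply it to our $L$. Combining with $\limlang{L}=X$ yields the chain
\[ \fact{X} = \fact{\limlang{L}} = \fact{\prun{L}} = \fact{L} = L, \]
where the first equality is the defining property of the presentation, the second is \Cref{lemma:fact-prun}, the third uses prunedness ($\prun{L}=L$), and the last uses factor-closure ($\fact{L}=L$). This is exactly the claimed equality, that the recognised language coincides with $\fact{X}$.

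I do not expect a genuine obstacle here: the substance has been absorbed into the preceding lemmas, and the argument is purely a matter of assembling them. The only points worth checking carefully are that the two hypotheses are being read in the right form — that ``all states initial and final'' is literally the condition required by the factor-closed characterisation, and that ``pruned presentation'' unpacks to $\prun{L}=L$ rather than merely $L\subseteq\prun{L}$ — but both are immediate from the definitions given. In particular no compactness or topological input is needed at this stage, since all the analytic work sits inside \Cref{lemma:fact-prun}.
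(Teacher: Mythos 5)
Your proof is correct and takes essentially the same route as the paper: the paper's proof simply cites \Cref{prop:bijection-subshifts} to get $L = \fact{\limlang{L}} = \fact{X}$, and the proof of that proposition is exactly your chain through \Cref{lemma:fact-prun}. You merely inline the proposition and make explicit the factor-closedness of $L$ (via the all-states-initial-and-final characterisation), a hypothesis the paper's two-line proof leaves implicit.
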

\begin{proof}
	Let $L$ be the language recognized by $(\T,A,Q)$. By definition, we have $\limlang{L} = X$. 
	By \Cref{prop:bijection-subshifts}, $L = \fact{\limlang{L}} = \fact{X}$.
\end{proof}

\begin{theorem}[Uniqueness]\label{appthm:minimal-uniqueness-inf}
	Every non-empty sofic subshift admits a unique minimal rooted right-resolving pruned presentation, up to isomorphism.
\end{theorem}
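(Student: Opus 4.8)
The plan is to reduce the statement to the classical Myhill--Nerode uniqueness of the minimal deterministic automaton, by first pinning down the recognized language and then reinterpreting the rootedness and right-resolving conditions as ordinary determinism. First I would observe that the language is forced: by \Cref{applem:pruned}, any pruned presentation of $X$ recognizes exactly $L := \fact{X}$, which by \Cref{lemma:fact-is-pfc} is a pruned, factor-closed, regular language, and satisfies $\limlang{L} = X$ by \Cref{prop:bijection-subshifts}. Thus every rooted right-resolving pruned presentation of $X$ recognizes the \emph{same} language $L$, so the only freedom left is the shape of the underlying automaton.

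The key step is to reformulate such presentations as reachable deterministic automata for $L$. Right-resolving means the transition relation is a partial function, i.e.\ the automaton is a (partial) deterministic automaton. Given a root $r$, let $L_r$ denote the set of words admitting a run from $r$; rootedness says every word of $L$ has an accepted run from $r$ (so $L \subseteq L_r$), while runs from $r$ land in accepting states and hence give words of the presentation's language (so $L_r \subseteq L$), whence $L_r = L$. Therefore a rooted right-resolving presentation is exactly a deterministic automaton with single initial state $r$, all states final and reachable from $r$, recognizing $L$. Conversely, the lemma characterizing factor-closed languages (all states of the minimal automaton of a factor-closed language are final, and making all states initial does not change the language) guarantees that such a single-root automaton, once all states are declared initial, recognizes $L$ again, so this is a genuine presentation; factor-closedness is what makes the "all states initial and final" semantics collapse to the single-root deterministic one.

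With this dictionary in place, existence and uniqueness follow from Myhill--Nerode. For existence I would take the minimal partial deterministic automaton of $L$ (the follower-set automaton with the dead state removed): all its states are final by the factor-closed lemma, it is reachable from its initial state $q_0$, and declaring every state initial yields a presentation that is right-resolving, rooted with root $q_0$, pruned (its language $L$ is pruned), and presents $X$ since $\limlang{L}=X$. For uniqueness and minimality, any rooted right-resolving pruned presentation is, through the reformulation above, a reachable deterministic automaton recognizing $L$; the standard minimization quotient (identifying states with equal follower sets) maps it onto the Myhill--Nerode automaton, which has the least number of states and is unique up to isomorphism. Hence the minimal rooted right-resolving pruned presentation exists and is unique up to isomorphism.

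The main obstacle I anticipate is the reformulation step, not the final appeal to Myhill--Nerode: one must verify carefully that rootedness together with right-resolvingness makes the presentation behave as a single-initial-state deterministic automaton recognizing precisely $L$ (using factor-closedness in both directions), and that the notion of ``minimal'' used for presentations agrees with state-count minimality of deterministic automata so that the Myhill--Nerode bound genuinely applies. Once this correspondence is established as an equivalence that preserves state counts up to isomorphism, the uniqueness is immediate.
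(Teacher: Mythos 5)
Your proof is correct and takes essentially the same route as the paper's: both pin down the recognized language as $\fact{X}$ via \Cref{applem:pruned}, translate rooted right-resolving pruned presentations into reachable all-states-final (partial) deterministic automata for this factor-closed language, and conclude by uniqueness of the minimal deterministic automaton. The only difference is cosmetic: you quotient the partial automaton directly onto the follower-set (Myhill--Nerode) automaton, whereas the paper completes the presentation with a sink state $\bot$ and appeals to uniqueness of the minimal complete automaton, with a proof by contradiction for minimality.
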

\begin{proof}
	Given a sofic subshift $X$, let us consider the minimal deterministic automaton $(\D,A,Q,\{i\},F)$ that recognizes $\fact{X}$. Within that automaton, we write $\xrightarrow[\text{\small $\T$}]{}$ for the transitions, and $\xrightarrow[\text{\small $\D$}]{w}$ for the iterated transitions.	
	
	If we take $w = w_1\dots w_n \in \fact{X}$, it corresponds to a run from $i \xrightarrow[\text{\small $\D$}]{w} f \in F$. Since $\fact{X}$ is factor-closed, we also have $w_1\dots w_k \in \fact{X}$ for $0 \leq k \leq n$, and since $\D$ is deterministic it means that all the states of the run  $i \xrightarrow[\text{\small $\D$}]{w} f$ are also final states. This means that if we restrict our automaton to only its final states, so $(\D,A,F,\{i\},F)$, then the language recognized is identical. Note that $\D$ is now a partial function (except in the case $\fact{X} = A^*$).
	
	We now consider $q\in F$ and $u \in A^*$ such that there exists a $f \in F$ satisfying $p \xrightarrow[\text{\small $\D$}]{u} f$. In the minimal deterministic automaton, every state is accessible (otherwise the automaton would not be minimal), so there exists $v \in A^*$ such that $i \xrightarrow[\text{\small $\D$}]{v} q$, meaning that $vu \in \fact{X}$. Since $\fact{X}$ is factor-closed, this means that $u \in \fact{X}$, so there exists a run $i \xrightarrow[\text{\small $\D$}]{v} p$ for some $p \in F$. This means that if we extend initial states to be $F$, so the non-deterministic automaton $(\D,A,F,F,F)$, then the language recognized is identical. Additionally, any word accepted by this automaton admits an accepting run that start with $i$, and every state is accessible from $i$. Said otherwise, $(\D,A,F)$ is a representation of the sofic subshift $\limlang{\fact{X}}$, that is pruned, right-resolving and rooted in $i$. Using \Cref{prop:bijection-subshifts}, $\limlang{\fact{X}} = X$.
	
	Now for uniqueness, we consider another minimal pruned right-resolving presentation $(\M,A,P)$, rooted in $r$. If $\M$ is already total (which only happens whenever $\fact{X} = A^*$), $(\M,A,P,\{r\},P)$ is the minimal deterministic automaton recognizing $\fact{X}$. Assuming $\M$ is not a total function, we build build from it a deterministic automaton $(\M_\bot,A,P \sqcup \{\bot\},\{r\},P)$ where $\M_\bot$ is the total function that extends $\M$ by adding a transition toward the new state $\bot$ whenever $\M$ would be undefined. This deterministic automaton recognizes the same language as the presentation, which according to \Cref{applem:pruned} is exactly $\fact{X}$. We argue it is the minimal deterministic automaton recognizing $\fact{X}$. Indeed, if we take a deterministic automaton recognizing $\fact{X}$ that is strictly smaller, then by following the same procedure we used above on $\D$ to make it a pruned right-resolving presentation, we would obtain one that is strictly smaller than $\M$, which contradict minimality.
	
	In both cases, we have a minimal deterministic automaton that recognizes the same language as $(\D,A,P,\{i\},F)$, so by uniqueness of the minimal automata (see \cite{DBLP:books/daglib/0016921}) we have and isomorphism $\iota$ between the two. This directly induces an isomorphism between $(\D,A,F)$ and $(\M,A,P)$.
\end{proof}

\section{Proof for Bi-Infinite Words}
\label{app:infinite}

\subsection{From $\ZZ$-Transducers to Graphical Representation}\label{app:inf-zeta-transducer-red}

In this section, we prove \Cref{prop:zeta-transducer-red}, that is for every $\ZZ$-transducer $\T$, its behavior can be obtained using the lift of $\T$ as follows:
\[ \tikzfig{rel-trans-inf-red} \]

We write $\R$ for the left-hand-side. We take $w \in A^\ZZ$ and $v \in B^\ZZ$, and note that by definition of the composition of relations, $w~\R~v$ if and only there exists two words $q,u \in Q^\ZZ$ such that $(w,q)~\T^\ZZ~(v,u)$ and $q~\transpose{\left(\shift{}{}\right)}~u$. 

Let us focus on the latter, it is equivalent to the for all $k \in \ZZ$, $q_{k+1} = u_k$.
Let us focus on the former, it is equivalent to, for every $k \in \ZZ$, $(w_k,q_k)~\T~(v_k,u_k)$, that is $q_k \xrightarrow[v_k]{w_k} u_k$. Combining both, we obtain $q_k \xrightarrow[v_k]{w_k} q_{k+1}$, which yields exactly the definition of a run. So we have that $w~\R~v$ if and only if $w~\L^\ZZ(\T,A,B,Q)~v$. \qed

\subsection{Explicit Definition of the Graphical Language on Bi-Infinite Words}\label{app:infinite-language}

We start by defining formally the graphical language $\ZZ$\bfup{-Trans}. It is a category where the objects are finite sets, and where the morphisms are generated by composing sequentially and in parallel the generators of \Cref{appfig:graph-generators-inf} together with the equations of a strict symmetric monoidal category (see \Cref{appfig:rel-equations-inf}), equations of a feedback  category (see \Cref{appfig:graph-feedback-inf}), and equations ensuring we faithfully embed \bfup{FinRel} (see \Cref{appfig:graph-finrel-inf}).

\begin{figure*}[h]
	\tikzfig{graph-generators-inf}
	\caption{Generators of  $\ZZ$\bfup{-Trans}.}
	\label{appfig:graph-generators-inf}
\end{figure*}

\begin{figure*}[h]
	\tikzfig{rel-equations-inf}
	\caption{Equations for a Strict Symmetric Monoidal Category.}
	\label{appfig:rel-equations-inf}
\end{figure*}

\begin{figure*}[h]
	\tikzfig{graph-feedback-inf}
	\caption{Equations for a Feedback Category.}
	\label{appfig:graph-feedback-inf}
\end{figure*}

\begin{figure*}[h]
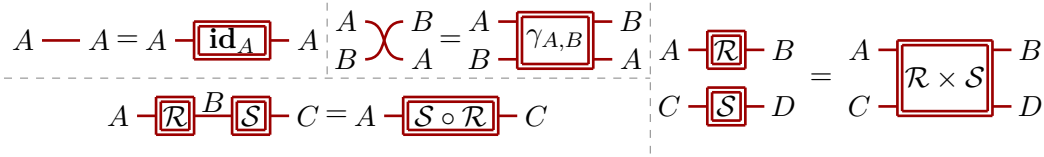

	\tikzfig{graph-finrel-inf}
	\caption{Equations for Faithfully Embedding \bfup{FinRel}.}
	\label{appfig:graph-finrel-inf}
\end{figure*}

Let us point a couple of facts about this language:
\begin{itemize}
	\item A bundle of wires labeled $A_1,\dots,A_n$ is the same as a single wire labeled $\Pi_{i=1}^n A_i$, and wires labeled $\one$ can be omitted.
	\item The wires and every syntactical construct are in \bfup{\red{thick red}}, to distinguish them from actual relations. In particular, $\R$ refers to an actual relation while $\rR$ refers to an element of our language.
	\item Double-line boxes denote the generator, while single-line boxes denote any diagram potentially constituted of many generators, including feedbacks.
	\item The arrows on the feedback are a reminder that those are not the same as the cup and cap of \Cref{sec:relations}.
	\item Within \Cref{appfig:graph-feedback-inf}, the gray lines correspond to bracketing, and similarly to \Cref{fig:rel-equations} the overall consequences of those equations is that bracketing can be safely ignored. 
	\item The equations of \Cref{appfig:graph-finrel-inf} ensures that if a diagram does not contain any instance of the ``feedback'' generator, we can merge all the generators into a single double-line box.
	\item The current equational theory is incomplete, additional equations will be added in \Cref{fig:inf-simulation-principle} to obtain completeness.
\end{itemize}

In order to ensure that the equations are not contradictory\footnote{Which would lead to the trivial category where all the diagrams are equal to one another.}, we provide a semantics and prove soundness of our equations. The semantics is a strong symmetric monoidal functor from $\ZZ$\bfup{-Trans} to \bfup{Rel}, which we write $\rinterp{-}$, and is actually simply ``removing the color and adding a $\_^\ZZ$ everywhere''. We provide an explicit definition in \Cref{appfig:inf-interp}. 

\begin{figure*}
	\[\tikzfig{inf-interp}\]
	\caption{Inductive Definition of the Semantics $\rinterp{-} : \ZZ\bfup{-Trans} \to \ZZ\bfup{-Rel}$.}
	\label{appfig:inf-interp}
\end{figure*}

\subsection{Soundness of the Standard Equations}\label{app:inf-soundness}

We now prove the soundness of the equational theory. 
Soundness means that if one rewrites a diagram $\rR$ into $\rS$ using any of the listed equations, we still have $\rinterp{\rR} = \rinterp{\rS}$.  The soundness of the equation of \Cref{appfig:rel-equations-inf} follows immediately from the fact that $\bfup{Rel}$ is a strict symmetric monoidal category, so we only need to look at the equations of \Cref{appfig:graph-feedback-inf}. The top-left and bottom-left equations also follow from the fact that $\bfup{Rel}$ is a strict symmetric monoidal category. The top-right is sound because $\shift{}{}_\one = \id_\one$, and the bottom-right is sound because $\shift{}{}_C \times \shift{}{}_D = \shift{}{}_{C \times D}$. \qed

\subsection{Quasi-Normal Form}\label{app:inf-quasi-normal-form}

We now explicitly state the generalization of \Cref{prop:normal-form-fin} to the infinite case.

\begin{proposition}[Quasi-Normal Form]\label{prop:normal-form-inf}
	Any diagram of $\rR \in \ZZ$\bfup{-Trans} from $A$ to $B$ can be put in the following form for some finite set $Q$ and $\T \in \bfup{FinRel}$.
	\[ \tikzfig{graph-from-trans-inf}\]
\end{proposition}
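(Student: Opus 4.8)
The plan is to transfer the proof of the finite-word quasi-normal form (\Cref{prop:normal-form-fin}) essentially verbatim, since the equational theory of $\ZZ$\bfup{-Trans} is structurally identical to that of \bfup{Trans}: the same generators (\Cref{appfig:graph-generators-inf}), the same strict symmetric monoidal equations (\Cref{appfig:rel-equations-inf}), the same feedback-category equations (\Cref{appfig:graph-feedback-inf}), and the same \bfup{FinRel}-embedding equations (\Cref{appfig:graph-finrel-inf}). The only difference is that the shift decorating the feedback is now the infinite shift rather than the finite one, but none of the three rewriting steps below inspects that decoration.

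First I would use the first (sliding) equation of \Cref{appfig:graph-feedback-inf}, read from right to left, to push every occurrence of the feedback generator down to the bottom of the diagram, leaving a feedback-free portion on top. Then I would invoke the \bfup{FinRel}-embedding equations of \Cref{appfig:graph-finrel-inf}, which allow merging any feedback-free composite of generators into a single double-line box; this collapses the top portion into one box, which by definition denotes some relation $\T \in \bfup{FinRel}$. Finally I would use the last equation of \Cref{appfig:graph-feedback-inf} to amalgamate the remaining feedback loops at the bottom into a single feedback on a state space $Q$ taken to be the (finite) product of the individual feedback labels, yielding exactly the form displayed in the statement.

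The only thing that genuinely needs checking is that pushing feedbacks past arbitrary generators and merging feedbacks in parallel and in series are licensed by the red theory exactly as in the blue one. Since \Cref{appfig:graph-feedback-inf} is the recolored-and-substituted copy of \Cref{fig:graph-feedback}, and the substitution merely replaces the finite shift by the infinite shift inside the feedback symbol while leaving the combinatorial content of each equation untouched, I expect no real obstacle: the finite-case argument applies line by line, and the soundness already verified in \Cref{app:inf-soundness} guarantees the whole manipulation stays within a consistent theory.
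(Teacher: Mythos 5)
Your proposal matches the paper's proof exactly: both push all feedbacks to the bottom via the first equation of \Cref{appfig:graph-feedback-inf} read right to left, merge the feedback-free part into a single box using \Cref{appfig:graph-finrel-inf}, and fuse the feedbacks into one with the last equation of \Cref{appfig:graph-feedback-inf}. Your additional observation that the recoloring and the finite-to-infinite shift substitution leave the combinatorial content of the rewriting untouched is correct and consistent with how the paper transfers the finite-case argument.
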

\begin{proof}
We start by using the first equation of \Cref{appfig:graph-feedback-inf} from right to left to push all the feedbacks at the bottom of the diagram. Then, we use the equations of \Cref{appfig:graph-finrel-inf} to merge all the non-feedback into a single box. Lastly, we use the last equation of \Cref{appfig:graph-feedback-inf} to merge all the feedbacks into a single feedback.
\end{proof}

\subsection{Universality}\label{app:inf-universality}

We now explicitly state generalization of \Cref{thm:universality-fin} to the infinite case.
\begin{theorem}[Universality]\label{thm:universality-inf}
	For any $\S \in \ZZ$\bfup{-Trans}, $\rinterp{\rS}$ is a sofic relation. 
	For any sofic relation $\R$, there exists $\S \in \ZZ$\bfup{-Trans} such that $\rinterp{\rS} = \R$.
\end{theorem}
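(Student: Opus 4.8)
The plan is to mirror the finite-word universality proof (\Cref{thm:universality-fin}), deducing both halves from the quasi-normal form together with the transducer-behavior identities already established for bi-infinite words. Concretely, the two statements will follow from \Cref{prop:normal-form-inf}, \Cref{prop:zeta-transducer-red}, and the identification of sofic relations with $\ZZ$-transducer behaviors recorded in \Cref{sec:sofic-z-transducer} (itself resting on \Cref{lem:zeta-transducer-from-transducer} and \Cref{prop:bijection-subshifts}).

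For the first claim, I would start from an arbitrary $\rS \in \ZZ\bfup{-Trans}$ and apply \Cref{prop:normal-form-inf} to rewrite it into its quasi-normal form, namely the lift of some $\T \in \bfup{FinRel}$ closed by a single feedback. Since $\rinterp{-}$ is sound for the equational theory (proved in \Cref{app:inf-soundness}), the interpretation is unchanged, so $\rinterp{\rS}$ equals the interpretation of that normal-form diagram. By \Cref{prop:zeta-transducer-red}, the latter is exactly the behavior $\L^\ZZ(\T,A,B,Q)$ of the $\ZZ$-transducer $(\T,A,B,Q)$. Finally \Cref{lem:zeta-transducer-from-transducer} gives $\L^\ZZ(\T,A,B,Q) = \limlang{\L(\T,A,B,Q,Q,Q)}$, which is sofic by the definition adopted in \Cref{sec:sofic-z-transducer}; hence $\rinterp{\rS}$ is a sofic relation.

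For the converse, I would use that a sofic relation $\R : A^\ZZ \to B^\ZZ$ is, by the characterization of \Cref{sec:sofic-z-transducer} together with \Cref{lem:zeta-transducer-from-transducer}, the behavior $\L^\ZZ(\T,A,B,Q)$ of some $\ZZ$-transducer $(\T,A,B,Q)$. I would then exhibit the witness diagram directly: take $\rS$ to be the quasi-normal-form diagram whose body is the lift of $\T$ closed by one feedback. Its interpretation is, again by \Cref{prop:zeta-transducer-red}, precisely $\L^\ZZ(\T,A,B,Q) = \R$, so $\rinterp{\rS} = \R$ as required.

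The whole argument is a routine composition of previously proved facts, so the genuine content sits entirely upstream, in the identification of sofic relations with the behaviors of $\ZZ$-transducers. That is where I expect the only real difficulty to lie, since it rests on the bijection between sofic subshifts and pruned factor-closed regular languages (\Cref{prop:bijection-subshifts}) and on \Cref{lem:zeta-transducer-from-transducer} relating the finite and bi-infinite behaviors through the limit-language operator. Granting those, the universality theorem itself presents no further obstacle and reduces, as in the finite case, to invoking the quasi-normal form and \Cref{prop:zeta-transducer-red}.
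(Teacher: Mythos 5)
Your proposal is correct and follows exactly the paper's own route: the paper proves this theorem in one line by invoking \Cref{prop:normal-form-inf} and \Cref{prop:zeta-transducer-red}, implicitly relying on the identification of sofic relations with $\ZZ$-transducer behaviors from \Cref{sec:sofic-z-transducer}, which you simply make explicit. Your expanded version, including the appeal to \Cref{lem:zeta-transducer-from-transducer} and the soundness of $\rinterp{-}$, fills in precisely the steps the paper leaves implicit, with no divergence in method.
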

\begin{proof}
It directly follows from \Cref{prop:normal-form-inf} and \Cref{prop:zeta-transducer-red}.
\end{proof}

\subsection{Soundness of the Simulation Principle}\label{app:inf-simulation-principle}

\begin{figure*}[h]
	\[\tikzfig{inf-simulation-principle}\]
	\caption{Simulation Principle for Bi-Infinite Words, and an Equation Deducible from it.} 
	\label{appfig:inf-simulation-principle}
\end{figure*}

\begin{figure*}[h]
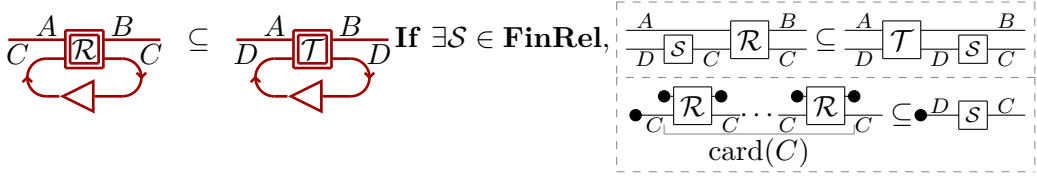

	\[\tikzfig{inf-backward-simulation-principle}\]
	\caption{Backward-Simulation Principle for Bi-Infinite Words.}
	\label{appfig:inf-backward-simulation-principle}
\end{figure*}

\begin{figure*}[h]
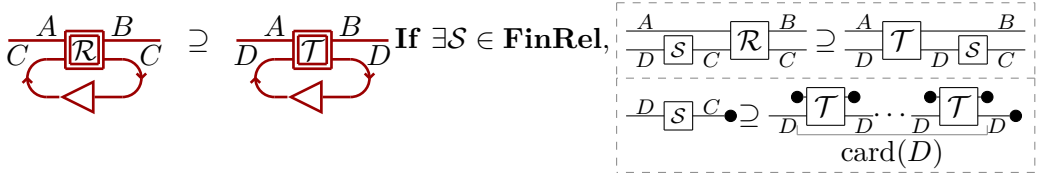

	\[\tikzfig{inf-forward-simulation-principle}\]
	\caption{Forward-Simulation Principle for Bi-Infinite Words.}
	\label{appfig:inf-forward-simulation-principle}
\end{figure*}

We start by proving the following compactness result.
For any $\ZZ$-transducer $(\T,A,B,Q)$, we recall that by definition, we have $(a_n)_{n \in \ZZ}~\L^\ZZ(\T,A,B,Q)~(b_n)_{n \in \ZZ}$ if and only if there exists $(q_n)_{n \in \ZZ}$ such that $\forall n \in \ZZ$, $(a_n,q_{n+1})~\T~(b_n,q_n)$.
\begin{theorem}[Compactness]\label{appthm:compactness} 
	It is sufficient to only consider finite sequences, meaning that $(a_n)_{n \in \ZZ}~\L^\ZZ(\T,A,B,Q)~(b_n)_{n \in \ZZ}$  if and only if there exists for all $N \leq M \in \ZZ$ a finite sequence $(q_{n}^{N,M})_{N \leq n \leq M}$ such that $\forall N \leq n < M \in \ZZ$, $(a_n,q^{N,M}_{n+1})~\T~(b_n,q^{N,M}_n)$.
\end{theorem}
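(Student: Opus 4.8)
The plan is to prove the two implications separately: the forward direction is immediate by restriction, while the backward direction is the genuine content and is resolved by a compactness argument on the space $Q^\ZZ$.

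For the ``only if'' direction, suppose $(a_n)_{n\in\ZZ}~\L^\ZZ(\T,A,B,Q)~(b_n)_{n\in\ZZ}$. By definition there is a global sequence $(q_n)_{n\in\ZZ}$ with $(a_n,q_{n+1})~\T~(b_n,q_n)$ for every $n$, so for any $N\le M$ the restriction $(q_n)_{N\le n\le M}$ witnesses the finite statement and I simply set $q_n^{N,M}:=q_n$. The subtlety appears only in the ``if'' direction: the finite witnesses provided for different windows need not agree with one another, so one cannot naively glue them or pass to a direct limit. Instead I would exploit that $Q$ is finite, hence $Q^\ZZ$ is compact for the prodiscrete topology (by the same reasoning that makes $A^\ZZ$ compact, as recalled in \Cref{sec:subshifts}). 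Assuming $Q\neq\varnothing$ (the case $Q=\varnothing$ being degenerate: no single-state witness exists even for $N=M$, so the right-hand side fails, and the left-hand side fails as well), for each $m\in\NN$ define
\[ K_m := \{\, q \in Q^\ZZ \mid \forall n,\ -m\le n<m,\ (a_n,q_{n+1})~\T~(b_n,q_n) \,\}. \]
Each $K_m$ constrains only the finitely many coordinates $q_{-m},\dots,q_m$, so it is a closed (indeed clopen) subset of $Q^\ZZ$, and the family is nested decreasing, $K_0\supseteq K_1\supseteq\cdots$.

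The key step is to check that every $K_m$ is nonempty. By hypothesis applied to the window $[N,M]=[-m,m]$, there is a finite sequence $(q_n^{-m,m})_{-m\le n\le m}$ satisfying the constraints for $-m\le n<m$; I extend it to an element of $Q^\ZZ$ by filling the remaining coordinates with an arbitrary fixed state of $Q$, and this extension lies in $K_m$. Since the $K_m$ are nonempty nested closed subsets of a compact space, the finite intersection property gives $\bigcap_{m\in\NN}K_m\neq\varnothing$ (any finite subfamily intersects, as $K_{m_1}\cap\cdots\cap K_{m_k}=K_{\max_i m_i}$). Any $q\in\bigcap_m K_m$ then satisfies $(a_n,q_{n+1})~\T~(b_n,q_n)$ for all $n\in\ZZ$, which is exactly the condition witnessing $(a_n)_{n\in\ZZ}~\L^\ZZ(\T,A,B,Q)~(b_n)_{n\in\ZZ}$.

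The main obstacle is precisely recognizing that compatibility of the finite witnesses is \emph{not} required: the only thing one must verify is the nonemptiness of each cylinder-constrained set $K_m$, after which compactness of $Q^\ZZ$ performs all the gluing automatically. A secondary point demanding a line of care is the degenerate empty-state case together with the verification that each $K_m$ is closed, both of which are routine.
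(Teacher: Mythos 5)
Your proof is correct, but it reaches the conclusion by a genuinely different route than the paper. You invoke compactness of $Q^\ZZ$ abstractly: the sets $K_m$ are clopen cylinders, nested, and nonempty (each finite witness on the window $[-m,m]$ extends arbitrarily, modulo the degenerate case $Q = \varnothing$, which you correctly dispatch), so the finite intersection property hands you a global run in $\bigcap_m K_m$ with no further work. The paper instead performs the gluing by hand: it also restricts to the centered windows $[-N,N]$, but then builds $(q_n)_{n\in\ZZ}$ by an explicit inductive pigeonhole extraction, maintaining the invariant that infinitely many of the finite witnesses agree with the segment $q_{-n},\dots,q_n$ constructed so far, and at each step choosing the $\leq_Q$-minimal pair $(q_{\ominus},q_{\oplus})$ of boundary values realized by infinitely many of the surviving witnesses. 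The two arguments are morally the same (your compactness claim for $Q^\ZZ$ is itself proved by exactly such a diagonal extraction, or by Tychonoff), so the difference is in what is outsourced: your version is shorter, conceptually cleaner, and fits naturally with the prodiscrete-topology machinery the paper already sets up in \Cref{sec:subshifts} for $A^\ZZ$; the paper's version is entirely self-contained and elementary, makes the role of finiteness of $Q$ explicit (pigeonhole on $Q$ and on $Q^2$), and, via the total order $\leq_Q$, makes every choice canonical rather than appealing to an abstract nonemptiness of an intersection. One small point in your favor: your treatment of $Q=\varnothing$ and of the closedness of each $K_m$ is stated explicitly, whereas the paper leaves both implicit.
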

\begin{proof}
	The direct implication is trivial, as we simply take $q_{n}^{N,M} := q_n$. For the indirect implication, it is in fact enough to only consider the finite sequences centered on zero, that is $(q_{n}^{-N,N})_{-N \leq n \leq N}$ for $N \geq 0$, and ignore those that are not centered on zero. From them, we will define inductively $(q_n)_{n \in \ZZ}$. In order to do so, we add an additional induction hypothesis: there exists an infinite amount of $N \geq n$ such that $\forall -n \leq k \leq n$, $q^{-N,N}_{k} = q_k$.
	
	We start by arbitrarily providing a total order $\leq_Q$ over the finite set $Q$. 
	
	\bfup{Base Case}: We look at all the $q \in Q$ and check whether there exists an infinite amount of $N \geq 0$ such that $q^{-N,N}_0 = q$. Since $Q$ is finite, there must exists a $q$ satisfying that condition. We take $q_0$ to be the minimal $q$ (for $\leq_Q$) satisfying that condition.
	
	\bfup{Inductive Case}: We assume that we have defined $q_{-n},\dots,q_0,\dots,q_n$ such that:
	\begin{itemize}
		\item For all $-n \leq k < n$, $(a_n,q_{n+1})~\T~(b_n,q_n)$.
		\item There exists an infinite amount of $N \geq n$ such that $\forall -n \leq k \leq n$, $q^{-N,N}_{k} = q_k$.
	\end{itemize} 
	We look at all the pairs $(q_{\ominus},q_{\oplus}) \in Q^2$ and check whether there exists an infinite amount of $N \geq n+1$ such that $q^{-N,N}_{-n-1} = q_{\ominus}$ and $q^{-N,N}_{n+1} = q_{\oplus}$. Since $Q^2$ is finite, there must exists a pair $(q_{\ominus},q_{\oplus})$ satisfying that condition. We take $(q_{-n-1},q_{n+1})$ to be the minimal pair (for the lexicographic order) satisfying that condition. By definition, both induction hypothesis are preserved.
\end{proof}

We start by proving the soundness of \Cref{appfig:inf-backward-simulation-principle} with respect to $\rinterp{-}$. We need to prove that if we consider $(a_n)_{n \in \ZZ}~\L^\ZZ(\R,A,B,C)~(b_n)_{n \in \ZZ}$ then  $(a_n)_{n \in \ZZ}~\L^\ZZ(\T,A,B,D)~(b_n)_{n \in \ZZ}$. 

Focusing on the right-hand-side, and using the above \Cref{appthm:compactness}, it is enough to show there exists for all $N \leq M \in \ZZ$ a finite sequence $(d_{n}^{N,M})_{N \leq n \leq M}$ such that $\forall N \leq n < M \in \ZZ$, $(a_n,d^{N,M}_{n+1})~\T~(b_n,d^{N,M}_n)$. Or, in diagrammatic terms, it is enough to show that the following relation relates $(a_N,\dots,a_M)$ to $(b_N,\dots,b_M)$.
\[ \tikzfig{inf-simulation-principle-sound-right} \]

Focusing on the left-hand-side, it means that there exists $(c_n)_{n \in \ZZ}$ such that $\forall n \in \ZZ$, $(a_n,c_{n+1})~\T~(b_n,c_n)$. It implies that for $N \leq M$, the following relation relates $(a_N,\dots,a_M)$ to $(b_N,\dots,b_M)$.
\[ \tikzfig{inf-simulation-principle-sound-left} \]
Using the bottom precondition of \Cref{appfig:inf-backward-simulation-principle}, we can replace the left part of the above diagram by $\S \circ \bullet_D$ and the result relation would still relate $(a_N,\dots,a_M)$ to $(b_N,\dots,b_M)$. Then using the top precondition we can slide the $\S$ toward the right and transform every $\R$ into a $\T$, and finally using that the full relation is maximal we can replace $\bullet_C \circ \S$ by $\bullet_D$. The resulting relation still relates $(a_N,\dots,a_M)$ to $(b_N,\dots,b_M)$. \qed

Proving the soundness of \Cref{appfig:inf-forward-simulation-principle} follows an identical proof, albeit mirrored. And the soundness of the simulation principle of \Cref{appfig:inf-simulation-principle} follows from both the backward and forward principle. \qed

\subsection{Deducing the Sliding Equation}\label{app:inf-slide}

We prove that the bottom equation from \Cref{appfig:inf-simulation-principle} can be deduced from the others. For that, we start by using \Cref{prop:normal-form-inf} on $\rR$:
\[\tikzfig{graph-slide-proof-normal-form-inf}\]
We can then merge the two feedbacks into one, rewriting both sides of the equation we are trying to prove into:
\[\tikzfig{graph-slide-proof-rewrite-inf}\]

We then conclude using the the simulation principle, using the following prerequisites and the fact that $\bullet$ is the maximal relation for the inclusion:
\[\tikzfig{graph-slide-proof-prerequisites-inf}\]
~\qed

\subsection{Completeness of the Simulation Principle}\label{app:inf-completeness}

The core idea of the proof is similar to the finite case, meaning we will determinize then minimize the presentation of our sofic subshift, except we have to start by pruning the states of our presentation that cannot be part of any bi-infinite path. Whenever we consider $(\T,A,Q)$ to be a presentation of sofic subshift, we write $\xrightarrow[\text{\small $\T$}]{}$ for the transitions within that presentation, and we write $\xrightarrow[\text{\small $\T$}]{w}$ with $w$ a finite word of size $k$ for the iterated transition $\xrightarrow[\text{\small $\T$}]{w_1} \dots \xrightarrow[\text{\small $\T$}]{w_k}$.

\begin{definition}[Forward-Pruning]
	Let $(\T,A,Q)$ be a presentation of a sofic subshift. Its forward-pruning is the presentation $(\T,A,Q^{\to \infty})$, where $Q^{\to \infty}$ is the restriction of $Q$ to the states $q_0$ that are the start of an infinite path $q_0 \xrightarrow[\text{\small $\T$}]{a_1} q_1 \xrightarrow[\text{\small $\T$}]{a_2} \dots$, or equivalently of a path of size at least $\textup{card}(Q)$ as such a path necessarily loops, and where $\T$ has been restricted to $A \times Q^{\to \infty} \to Q^{\to \infty}$.
\end{definition}

\begin{proposition}\label{prop:forward-pruning-inf}
	For all presentation of a sofic subshift $(\T,A,Q)$, we have
	\[ \tikzfig{inf-forward-pruning}\]
	where $= : Q \to Q^{\to \infty}$ is the usual ``equality'' relation.
\end{proposition}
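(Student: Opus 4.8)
The plan is to establish this as an equality of genuine relations in $\ZZ$-\bfup{Rel} (not a diagrammatic rewriting), following exactly the logical method used for \Cref{prop:determinization} and \Cref{prop:minimization}. First I would translate the two sides of the diagram into logical formulas as in \Cref{sec:logic}. Both sides are relations from $A \times Q$ to $Q^{\to \infty}$; reading them off, for all $a \in A$, $p \in Q$ and $y \in Q^{\to \infty}$, the left-hand side (the transition $\T$ on $Q$ followed by $= : Q \to Q^{\to \infty}$ on the state wire) asserts $p \xrightarrow[\text{\small$\T$}]{a} y$, whereas the right-hand side (the relation $=$ followed by the forward-pruned transition on $Q^{\to \infty}$) asserts $p \in Q^{\to \infty}$ together with $p \xrightarrow[\text{\small$\T$}]{a} y$. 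Thus the claimed equality reduces to the equivalence
\[ \left( p \xrightarrow[\text{\small$\T$}]{a} y \right) \iff \left( p \in Q^{\to \infty} \;\wedge\; p \xrightarrow[\text{\small$\T$}]{a} y \right), \]
where $y$ ranges over $Q^{\to \infty}$ on both sides.

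The $\Leftarrow$ direction is immediate, so all the content lies in the $\Rightarrow$ direction, which says that $Q^{\to \infty}$ is closed under taking predecessors: if $p \xrightarrow[\text{\small$\T$}]{a} y$ and $y \in Q^{\to \infty}$, then $p \in Q^{\to \infty}$. I would prove this straight from the definition. Since $y \in Q^{\to \infty}$, there is an infinite path $y \xrightarrow[\text{\small$\T$}]{a_1} y_1 \xrightarrow[\text{\small$\T$}]{a_2} \cdots$; prepending the transition $p \xrightarrow[\text{\small$\T$}]{a} y$ produces an infinite path starting at $p$, so $p \in Q^{\to \infty}$. Using instead the finitary characterization from the definition, if $y$ begins a path of length at least $\textup{card}(Q)$ then $p$ begins one of length at least $\textup{card}(Q)+1$, which again witnesses $p \in Q^{\to \infty}$.

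This is essentially the whole argument, and I do not expect a serious obstacle. The only point demanding care is the faithful reading of the diagram into logic: one must check that, compared with the left-hand side, the composite of $=$ with the forward-pruned transition contributes exactly the single extra conjunct $p \in Q^{\to \infty}$, the codomain constraint $y \in Q^{\to \infty}$ being forced on both sides by the label of the output wire. The equivalence between the ``infinite path'' and ``path of length at least $\textup{card}(Q)$'' formulations of $Q^{\to \infty}$ is the standard pigeonhole argument already granted in the definition, so it can be invoked without reproof. This proposition is a semantic equality; the corresponding behavioral statement would afterwards follow by feeding this intertwining into the simulation principle of \Cref{fig:inf-simulation-principle}, just as \Cref{cor:determinization} does for \Cref{prop:determinization}.
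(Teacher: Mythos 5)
There is a genuine gap, and it comes from misreading what the proposition actually asserts. The figure in \Cref{prop:forward-pruning-inf} does not state a single intertwining equation: it states \emph{three} (in)equations, namely all the preconditions of the bi-infinite simulation principle of \Cref{fig:inf-simulation-principle} instantiated with $\S = (=) : Q \to Q^{\to\infty}$. Your proposal proves only the first of these. Your treatment of that first equation is correct and is essentially the paper's argument (the reduction to $\bigl(p \xrightarrow[\text{\small$\T$}]{a} y\bigr) \iff \bigl(p \in Q^{\to\infty} \wedge p \xrightarrow[\text{\small$\T$}]{a} y\bigr)$ for $y \in Q^{\to\infty}$, settled by closure of $Q^{\to\infty}$ under predecessors). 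But your closing claim --- that the behavioral corollary ``would afterwards follow by feeding this intertwining into the simulation principle, just as \Cref{cor:determinization} does for \Cref{prop:determinization}'' --- fails in the infinite setting. Unlike \Cref{fig:fin-simulation-principle}, the principle of \Cref{fig:inf-simulation-principle} has two additional preconditions, and they are not automatic here precisely because $(=) : Q \to Q^{\to\infty}$ is \emph{not total}: its domain is $Q^{\to\infty}$, which is in general a strict subset of $Q$. The paper explicitly warns that one cannot simply drop the conditions that acted on $I$ and $F$; without the extra preconditions the rule is unsound, so the intertwining alone does not yield \Cref{appcor:forward-pruning-inf}.

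Concretely, the two missing items are: (second equation) an inclusion whose right-hand side simplifies to the full relation $\bullet_{Q^{\to\infty}}$, hence holds trivially by maximality of $\bullet$; and (third equation) the statement that any state of $Q$ which starts a path of length $\textup{card}(Q)$ lies in the domain of $=$, i.e.\ in $Q^{\to\infty}$ --- diagrammatically, $\textup{card}(Q)$ iterated copies of $\T$ capped by full relations are included in a composite factoring through $\S$. This third inequality is where the pigeonhole argument genuinely lives: a path of length $\textup{card}(Q)$ must revisit a state, hence pumps to an infinite path. You do invoke exactly this pigeonhole fact, but only as an alternative witness for predecessor-closure, and you dismiss it as ``already granted in the definition''; in the paper it is a separate proof obligation, needed so that the simulation principle applies at all. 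So the fix is small --- add the two remaining (in)equations with the arguments just sketched --- but as written the proposal proves a strictly weaker statement than \Cref{prop:forward-pruning-inf} and the intended application would not go through.
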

\begin{proof}
	The first equation state that if $q_0$ is the start of an infinite path, and $q_{-1} \xrightarrow[\text{\small $\T$}]{a_{0}} q_0$, then $q_{-1}$ is also the start of an infinite path, which is true.
	
	The second equation is trivially true as the right-hand-side simplifies to $\bullet_{Q^{\to\infty}}$, which is maximal for the inclusion.
	
	The third equation state that if $q_0$ is the start of a path of size $\textup{card}(Q)$, then it is the start of an infinite path, which is true because a path of size $\textup{card}(Q)$ necessarily loops.	
\end{proof}

\begin{corollary}\label{appcor:forward-pruning-inf}
	For all presentation of a sofic subshift $(\T,A,Q)$, using \Cref{fig:inf-simulation-principle}, we have
	\[ \tikzfig{inf-forward-pruning-conclusion}\]
\end{corollary}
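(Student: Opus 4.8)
The plan is to deduce the corollary directly from Proposition~\ref{prop:forward-pruning-inf} by feeding its three equations into the simulation principle of \Cref{fig:inf-simulation-principle}. The target equation of \Cref{appcor:forward-pruning-inf} asserts that the presentation $(\T,A,Q)$ and its forward-pruning $(\T,A,Q^{\to\infty})$ have the same behavior when plugged into the diagrammatic feedback construction of \Cref{prop:zeta-transducer-red}. Recall that the simulation principle equates two such feedback diagrams, one built from $(\R,A,B,C)$ and one from $(\T,A,B,D)$, provided one exhibits a relation $\S : D \to C$ satisfying a \emph{sliding} precondition (that $\S$ carries $\T$-transitions to $\R$-transitions, i.e. $\S$ is a simulation) together with the two \emph{totality/surjectivity-like} side conditions governing infinite paths.

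First I would take $\S$ to be the relation $(=) : Q^{\to\infty} \to Q$, that is, the inclusion of the surviving states viewed as a relation; this is exactly the relation named in Proposition~\ref{prop:forward-pruning-inf}. With this choice the three equations proved there are precisely the three preconditions of the simulation principle instantiated at $\R = \T|_{Q}$, $\T = \T|_{Q^{\to\infty}}$, and $\S = (=)$: the first equation is the sliding/commutation condition (a transition into $q_0 \in Q^{\to\infty}$ lifts to a transition into a forward-infinite state, matching $\S \circ \T = \T \circ \S$ on the relevant wires), the second supplies the side condition ensuring $\S$ covers enough of the domain $Q^{\to\infty}$ (it collapses to the full relation $\bullet_{Q^{\to\infty}}$, which is maximal), and the third is the side condition certifying that every state feeding the pruned transducer really lies on an infinite path (a path of length $\mathrm{card}(Q)$ loops). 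Since these are exactly the hypotheses of \Cref{fig:inf-simulation-principle}, the simulation principle applies verbatim and yields the equality of the two feedback diagrams, which is the statement of \Cref{appcor:forward-pruning-inf}.

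Concretely, the steps are: (i) match the diagrammatic shapes of Proposition~\ref{prop:forward-pruning-inf} with the precondition slots of the simulation principle, identifying $\S = (=)$; (ii) invoke the three equations of Proposition~\ref{prop:forward-pruning-inf} as the verified preconditions; (iii) apply \Cref{fig:inf-simulation-principle} to rewrite the feedback diagram built from $(\T,A,Q)$ into the one built from $(\T,A,Q^{\to\infty})$. This is the same mechanical pattern used to pass from \Cref{prop:determinization} to \Cref{cor:determinization} and from \Cref{prop:minimization} to \Cref{cor:minimization} in the finite case, so the corollary is essentially immediate once the proposition is in hand.

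The only point requiring care — and hence the main (though mild) obstacle — is verifying that the inclusion relation $(=)$ really does satisfy the totality and surjectivity hypotheses that make the \emph{second and third} preconditions of \Cref{fig:inf-simulation-principle} applicable, rather than needing the general (weaker) forms of those preconditions. As noted in the discussion following \Cref{fig:inf-simulation-principle}, when $\S$ is total and surjective the side conditions hold automatically; here $(=) : Q^{\to\infty} \to Q$ is injective and total on its domain but \emph{not} surjective onto $Q$, precisely because pruning discards states, so one genuinely needs the third equation of Proposition~\ref{prop:forward-pruning-inf} to certify that the discarded states never carry infinite paths and therefore do not obstruct the simulation. Confirming that these two equations discharge exactly the two relevant preconditions is the substantive bookkeeping step; everything else is a direct appeal to the simulation principle.
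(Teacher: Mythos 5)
Your overall route is exactly the paper's: the corollary carries no separate proof in the paper and is obtained, as you say, by the same mechanical pattern as \Cref{cor:determinization} and \Cref{cor:minimization}, feeding the three equations of \Cref{prop:forward-pruning-inf} into the simulation principle of \Cref{fig:inf-simulation-principle}. Your identification of which equation plays which role (first equation as the sliding condition, one side condition trivially true because it collapses to $\bullet_{Q^{\to\infty}}$, the other being the substantive ``a path of length $\textup{card}(Q)$ loops, hence extends to an infinite path'' condition) is also the intended one.

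The one concrete slip is the orientation of $\S$. In the simulation principle the relation runs $\S : D \to C$, where the second precondition constrains the $D$-transducer (every state starting a path of length $\textup{card}(D)$ must lie in the domain of $\S$) and the third constrains the $C$-transducer (every state ending a path of length $\textup{card}(C)$ must lie in the image of $\S$). You take $\S = (=) : Q^{\to\infty} \to Q$, i.e. $D = Q^{\to\infty}$ and $C = Q$; then the third precondition would read ``every state of $(\T,A,Q)$ that is the \emph{end} of a path of length $\textup{card}(Q)$ lies in $Q^{\to\infty}$'', which is false in general: a state can have an arbitrarily long history yet no outgoing transition, so it is discarded by forward-pruning while still ending long paths. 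This is why \Cref{prop:forward-pruning-inf} orients its relation as $= : Q \to Q^{\to\infty}$ (contrast \Cref{prop:backward-pruning-inf}, where the transpose $= : Q^{\infty\to} \to Q$ is the right choice): the unpruned presentation must occupy the $D$-slot and the pruned one the $C$-slot, so that the substantive long-path requirement lands on the \emph{second} precondition --- which is exactly the third equation of the proposition --- while the third precondition becomes the trivially true equation whose side simplifies to $\bullet_{Q^{\to\infty}}$. With $\S$ transposed accordingly, the rest of your argument, including your correct observation that $\S$ here is not total-and-surjective so the general preconditions are genuinely needed, goes through verbatim.
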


\begin{definition}[Backward-Pruning]
	Let $(\T,A,Q)$ be a presentation of a sofic subshift. Its backward-pruning is the presentation $(\T,A,Q^{\infty\to})$, where $Q^{\infty\to}$ is the restriction of $Q$ to the states $q_0$ that are the end of an infinite path $\dots \xrightarrow[\text{\small $\T$}]{a_{-1}} q_{-1} \xrightarrow[\text{\small $\T$}]{a_0} q_0$, or equivalently of a path of size at least $\textup{card}(Q)$ as such a path necessarily loops, and where $\T$ has been restricted to $A \times Q^{\infty\to} \to Q^{\infty\to}$.
\end{definition}

\begin{proposition}\label{prop:backward-pruning-inf}
	For all presentation of a sofic subshift $(\T,A,Q)$, we have
	\[ \tikzfig{inf-backward-pruning}\]
	where $= : Q^{\infty \to} \to Q$ is the usual ``equality'' relation.
\end{proposition}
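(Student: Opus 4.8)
Proposition \ref{prop:backward-pruning-inf} is the mirror image of Proposition \ref{prop:forward-pruning-inf}. For a presentation $(\T,A,Q)$ of a sofic subshift, the backward-pruning restricts $Q$ to the subset $Q^{\infty\to}$ of states lying at the \emph{end} of some bi-infinite path, and the proposition asserts a diagrammatic identity involving the equality relation ${=}:Q^{\infty\to}\to Q$. By analogy with the forward case, the display should consist of three equations, each of which is read off as a logical statement about paths via the logical-formula translation of \Cref{sec:logic}.

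**The plan.** I would first unfold each of the three diagrammatic equations into its logical content using the reading rules of \Cref{sec:logic}, exactly as was done for Proposition \ref{prop:forward-pruning-inf}. Because backward-pruning is the transpose/time-reversal of forward-pruning, I expect the three equations to state, respectively: (1) that if $q_0$ is the end of an infinite path and $q_0 \xrightarrow[\text{\small $\T$}]{a_1} q_1$, then $q_1$ is also the end of an infinite path (closure of $Q^{\infty\to}$ under $\T$-successors, in the reversed direction); (2) a trivially true equation whose right-hand side simplifies to the maximal full relation $\bullet_{Q^{\infty\to}}$; and (3) that if $q_0$ is the end of a path of size $\textup{card}(Q)$, then it is the end of an infinite path, which holds because a path of length $\textup{card}(Q)$ must revisit a state and hence loops, so the loop can be prepended indefinitely to extend the path infinitely far back.

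**Carrying it out.** The proof is then three short verifications, mirroring the forward case almost verbatim. For the first equation, I observe that prepending the transition $q_0 \xrightarrow[\text{\small $\T$}]{a_1} q_1$ to an infinite backward path ending at $q_0$ produces an infinite backward path ending at $q_1$, so the restriction is preserved. For the second, I note that the right-hand side reduces to $\bullet_{Q^{\infty\to}}$, which is maximal for inclusion, giving the equation directly. For the third, the pigeonhole argument on a path of length $\textup{card}(Q)$ yields a repeated state and therefore a cycle, and iterating that cycle exhibits an infinite backward path; conversely an infinite path trivially contains one of length $\textup{card}(Q)$.

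**Expected obstacle.** Since I cannot see the rendered diagram, the one genuine risk is getting the \emph{orientation} of the three equations exactly right — in particular which wire carries the equality relation ${=}:Q^{\infty\to}\to Q$ and on which side the shift feedback sits. The mathematical content is routine once the diagrams are transcribed; the care needed is purely in matching the transposed bookkeeping of the forward case so that the logical statements come out as path-extension facts rather than their (false) converses. Given the symmetry with \Cref{prop:forward-pruning-inf}, I would simply reuse that proof's three-part structure with every "start of a path'' replaced by "end of a path'' and every arrow reversed, and conclude the proof there.
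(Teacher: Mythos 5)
Your proposal is correct and takes essentially the same route as the paper's proof: each of the three diagrammatic equations is unfolded into a statement about paths, with the first verified by extending an infinite path ending at $q_0$ through a transition $q_0 \xrightarrow{a_1} q_1$, the second dismissed as trivial because one side collapses to the maximal relation $\bullet_{Q^{\infty\to}}$, and the third settled by the pigeonhole/loop argument on paths of length $\textup{card}(Q)$. The only transcription slip --- which you yourself flagged as an orientation risk --- is that in the paper's mirrored diagram it is the \emph{left}-hand side of the second equation that simplifies to $\bullet_{Q^{\infty\to}}$, not the right; this is pure bookkeeping with no bearing on the argument.
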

\begin{proof}
	The first equation state that if $q_0$ is the end of an infinite path, and $q_0 \xrightarrow[\text{\small $\T$}]{a_{1}} q_1$, then $q_1$ is also the end of an infinite path, which is true.
	
	The second equation is trivially true as the left-hand-side simplifies to $\bullet_{Q^{\infty\to}}$, which is maximal for the inclusion.
	
	The third equation state that if $q_0$ is the end of a path of size $\textup{card}(Q)$, then it is the end of an infinite path, which is true because a path of size $\textup{card}(Q)$ necessarily loops.
\end{proof}

\begin{corollary}\label{appcor:backward-pruning-inf}
	For all presentation of a sofic subshift $(\T,A,Q)$, using \Cref{fig:inf-simulation-principle}, we have
	\[ \tikzfig{inf-backward-pruning-conclusion}\]
\end{corollary}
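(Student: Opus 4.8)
The plan is to apply the simulation principle of \Cref{fig:inf-simulation-principle} directly, feeding it the three equations already established in \Cref{prop:backward-pruning-inf}. The corollary is nothing but the purely diagrammatic counterpart of that proposition, in exactly the same way that \Cref{appcor:forward-pruning-inf} is the diagrammatic counterpart of \Cref{prop:forward-pruning-inf}; indeed the entire argument is the mirror image of the forward-pruning case, with the orientation of every component reversed.

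First I would instantiate the simulation principle with the two $\ZZ$-transducers being the original presentation $(\T,A,Q)$ and its backward-pruning $(\T,A,Q^{\infty\to})$, and with the simulation relation taken to be the equality relation $= : Q^{\infty\to} \to Q$ appearing in the statement of \Cref{prop:backward-pruning-inf}. Under this instantiation, the three preconditions of \Cref{fig:inf-simulation-principle} are \emph{exactly} the three equations of \Cref{prop:backward-pruning-inf}: the first equation is the ``sliding'' precondition that lets the equality relation pass through the transition relation (matching up the dynamics on $Q^{\infty\to}$ with the dynamics on $Q$), while the second and third equations are the domain and codomain conditions guaranteeing that no bi-infinite run is lost or gained by the restriction. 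Since all three preconditions hold, the simulation principle immediately yields the claimed equality of the two feedback diagrams.

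The one point that genuinely requires care—and the reason \Cref{prop:backward-pruning-inf} verifies the preconditions by hand rather than invoking the ``total and surjective'' shortcut noted after \Cref{fig:inf-simulation-principle}—is that $= : Q^{\infty\to} \to Q$ is total but \emph{not} surjective: its image misses exactly the states that were pruned away. Totality renders the second precondition trivial, the relevant composite collapsing to the full relation $\bullet_{Q^{\infty\to}}$, which is maximal for inclusion; but the codomain precondition is not automatic, and is precisely what the third equation of \Cref{prop:backward-pruning-inf} supplies, namely that a state lying at the end of a path of length $\textup{card}(Q)$ must already lie at the end of an infinite path and hence belong to $Q^{\infty\to}$. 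The main obstacle is therefore not mathematical depth but diagrammatic bookkeeping: one must check that the orientation of the equality relation and the layout of the backward-pruned feedback diagram match the template of \Cref{fig:inf-simulation-principle}, so that the roles of the two transducers and the direction of the simulation relation are correctly assigned and the three equations slot into the three precondition slots in the right order.
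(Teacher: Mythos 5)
Your proposal is correct and takes essentially the same route as the paper: the corollary is obtained by instantiating the simulation principle of \Cref{fig:inf-simulation-principle} with the relation $= \,:\, Q^{\infty\to} \to Q$, the three preconditions being exactly the three (in)equations established in \Cref{prop:backward-pruning-inf}. Your side remark is also exactly the paper's reasoning: totality of $=$ trivializes the second precondition (the composite collapses to $\bullet_{Q^{\infty\to}}$, maximal for inclusion), while non-surjectivity means the third, codomain precondition is the substantive one, supplied by the observation that a path of length $\textup{card}(Q)$ must loop and hence extends to an infinite path.
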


\begin{definition}[Pruning]
	Let $(\T,A,Q)$ be a presentation of a sofic subshift. Its pruning is the presentation $(\T,A,Q^{\infty\to\infty})$, where $Q^{\infty\to\infty}$ is the restriction of $Q$ to the states $q_0$ that are on a bi-infinite path $\dots \xrightarrow[\text{\small $\T$}]{a_{-1}} q_{-1} \xrightarrow[\text{\small $\T$}]{a_0} q_0 \xrightarrow[\text{\small $\T$}]{a_1} q_1 \xrightarrow[\text{\small $\T$}]{a_2} \dots$, or equivalently at the middle of a path of size at least $2 \times \textup{card}(Q)$ as such a path necessarily loops before and after $q_0$, and where $\T$ has been restricted to $A \times Q^{\infty\to} \to Q^{\infty\to}$.
\end{definition}

\begin{lemma}\label{lem:double-pruning}
	The pruning of a presentation is equal to the forward-pruning of its backward-pruning, or equivalently the backward pruning of its forward-pruning. The pruning of a representation is always a \textbf{pruned} representation.
\end{lemma}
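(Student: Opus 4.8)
The plan is to prove the two set equalities $Q^{\infty\to\infty} = (Q^{\infty\to})^{\to\infty} = (Q^{\to\infty})^{\infty\to}$ first, then observe that all three presentations carry the same transition relation (namely $\T$ restricted to the common state set), and finally verify prunedness directly. Throughout I write $q \xrightarrow[\text{\small $\T$}]{} q'$ for a transition in the original presentation, and recall that $Q^{\to\infty}$ (resp. $Q^{\infty\to}$, resp. $Q^{\infty\to\infty}$) is the set of states that start an infinite forward path (resp. end an infinite backward path, resp. lie on a bi-infinite path) in $(\T,A,Q)$.

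I would establish $(Q^{\infty\to})^{\to\infty} = Q^{\infty\to\infty}$ by double inclusion. For $\supseteq$, take $q_0 \in Q^{\infty\to\infty}$, witnessed by a bi-infinite path $\dots \xrightarrow[\text{\small $\T$}]{} q_{-1} \xrightarrow[\text{\small $\T$}]{} q_0 \xrightarrow[\text{\small $\T$}]{} q_1 \xrightarrow[\text{\small $\T$}]{} \dots$. The backward half shows $q_0 \in Q^{\infty\to}$, and more importantly each state $q_i$ with $i \geq 0$ is itself the end of the infinite backward path $\dots \xrightarrow[\text{\small $\T$}]{} q_{-1} \xrightarrow[\text{\small $\T$}]{} \dots \xrightarrow[\text{\small $\T$}]{} q_i$, hence $q_i \in Q^{\infty\to}$. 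Thus the forward path $q_0 \xrightarrow[\text{\small $\T$}]{} q_1 \xrightarrow[\text{\small $\T$}]{} \dots$ stays inside $Q^{\infty\to}$, so it is an infinite forward path in the backward-pruned presentation and $q_0 \in (Q^{\infty\to})^{\to\infty}$. For $\subseteq$, a state $q_0 \in (Q^{\infty\to})^{\to\infty}$ is first in $Q^{\infty\to}$ (so it ends an infinite backward path in the original) and second the start of an infinite forward path using only original transitions; concatenating the two halves yields a bi-infinite path through $q_0$, so $q_0 \in Q^{\infty\to\infty}$. The equality $(Q^{\to\infty})^{\infty\to} = Q^{\infty\to\infty}$ follows from the same argument applied to the transducer with all transitions reversed, which swaps the roles of forward- and backward-pruning.

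Since $Q^{\infty\to\infty} \subseteq Q^{\infty\to} \subseteq Q$, restricting $\T$ first to $A \times Q^{\infty\to} \to Q^{\infty\to}$ and then to $A \times Q^{\infty\to\infty} \to Q^{\infty\to\infty}$ yields exactly $\T$ restricted to $A \times Q^{\infty\to\infty} \to Q^{\infty\to\infty}$, which is the transition relation of the pruning; the three presentations are therefore literally equal, not merely isomorphic. For prunedness, let $w = a_1 \dots a_k$ be a word accepted by the pruning, with run $q_0 \xrightarrow[\text{\small $\T$}]{} \dots \xrightarrow[\text{\small $\T$}]{} q_k$ where every $q_i \in Q^{\infty\to\infty}$. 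Since $q_0$ lies on a bi-infinite path there is a predecessor $q_{-1} \xrightarrow[\text{\small $\T$}]{} q_0$ with $q_{-1} \in Q^{\infty\to\infty}$, and since $q_k$ lies on a bi-infinite path there is a successor $q_k \xrightarrow[\text{\small $\T$}]{} q_{k+1}$ with $q_{k+1} \in Q^{\infty\to\infty}$. Reading the transition labels gives non-empty words $u,v$ of length one with $uwv$ accepted by the pruning, which is the definition of a pruned presentation (the empty word and the degenerate case $Q^{\infty\to\infty} = \varnothing$ being handled identically or vacuously).

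The only genuine subtlety is the $\supseteq$ inclusion of the first equality: forward-pruning is applied inside the backward-pruned presentation, so one must check that the forward continuation of a bi-infinite path stays within $Q^{\infty\to}$ rather than merely within $Q$. This is exactly the observation that every state reached forward along a bi-infinite path inherits an infinite backward path by reading the bi-infinite path up to that state, and it is what makes the composition of the two one-sided prunings coincide with the two-sided one.
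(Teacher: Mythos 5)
Your proof is correct and takes essentially the same route as the paper's: both establish the state-set equalities by double inclusion, splitting a bi-infinite path into its two halves and observing that every state along such a path survives both one-sided prunings (so the relevant half-paths stay inside the pruned state sets), with the other composition order handled symmetrically. You additionally spell out the prunedness claim via one-letter extensions from a predecessor and successor inside $Q^{\infty\to\infty}$, a part of the statement the paper's proof leaves implicit.
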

\begin{proof}
	If a state $q_0$ is on a bi-infinite path $(q_n)_{n \in \ZZ}$, then all the states $q_n$ are on an bi-infinite path, so none of them will be removed by the forward-pruning or backward-pruning, meaning that $Q^{\infty \to \infty} \subseteq (Q^{\to \infty})^{\infty \to}$. We then take $q_0 \in (Q^{\to \infty})^{\infty \to}$, there is an infinite path $(q_n)_{n \leq 0}$ in $(Q^{\to \infty})^{\infty \to}$ that ends on $q_0$, and there is an infinite path  $(q_n)_{n \geq 0}$ in $Q^{\to \infty}$ that starts with $q_0$. So $q_0 \in Q^{\infty \to \infty}$, hence $Q^{\infty \to \infty} = (Q^{\to \infty})^{\infty \to}$. The same reasoning can be done to prove $Q^{\infty \to \infty} = (Q^{\infty\to})^{\to\infty}$.
\end{proof}

\begin{definition}[Determinization]
	Let $(\T,A,Q)$ be a pruned presentation of a \bfup{non-empty} sofic subshift. Its determinization is the \textbf{pruned right-resolving rooted} presentation $(\Pne(\T), A, \Pne^{\textup{acc}}(Q))$.
	More precisely, we start by recalling that $\Pne(Q)$ is the set of all non-empty subsets of $Q$. Since our subshift is non-empty, $Q \in \Pne(Q)$. We use $x,y$ for elements of $Q$, and $X$,$Y$ for elements of $\Pne(Q)$.  Then, we define the partial function\footnote{We consider partial functions to be a special case of relations.} $\Pne(\T): A \times \Pne(Q) \to \Pne(Q)$ as 
	\[ \Pne(\T)(a,X)=\{y\in Q ~|~ \exists x\in X,~  x \xrightarrow[\text{\small $\T$}]{a} y \} \bfup{ if }\text{it is non-empty (undefined otherwise)} \]
	We then consider the set $\Pne^{\textup{acc}}(Q)$ of subsets of $Q$ accessible by iteration of that function, starting from $Q$. We can now restrict $\Pne(\T)$ to a function $A \mapsto \Pne^{\textup{acc}}(Q) \to \Pne^{\textup{acc}}(Q)$. The root is simply $Q$, and every state is accessible from $Q$, by definition of $ \Pne^{\textup{acc}}(Q)$.
\end{definition}

\begin{proposition}\label{prop:determinization-inf}
	For all pruned presentation of a non-empty sofic subshift $(\T,A,Q)$, we have
	\[ \tikzfig{inf-determinization}\]
	where $\ni : \Pne^{\textup{acc}}(Q) \to Q$ is the usual ``contains'' relation, that is $X \ni x$ whenever $x \in X$.
\end{proposition}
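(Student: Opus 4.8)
The plan is to follow the blueprint of the finite-case \Cref{prop:determinization}: translate each diagrammatic relation equality into a first-order formula over the transitions, using the logical dictionary of \Cref{sec:logic}, and then check the equivalences directly. There are three equalities to establish, matching the three preconditions of the infinite simulation principle of \Cref{fig:inf-simulation-principle} with the simulating relation taken to be $\ni : \Pne^{\textup{acc}}(Q) \to Q$. The main (sliding) equality reads, for all $a \in A$, $X \in \Pne^{\textup{acc}}(Q)$ and $y \in Q$,
\[ \begin{array}{c}
\left(\exists x \in Q,~ (x \in X) \land (x \xrightarrow[\text{\small$\T$}]{a} y)\right) \\
\Leftrightarrow \\
\left(\exists Y \in \Pne^{\textup{acc}}(Q),~ (X \xrightarrow[\text{\small$\Pne(\T)$}]{a} Y) \land (y \in Y)\right)
\end{array} \]
and the other two are the domain and codomain conditions on $\ni$.

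First I would prove the sliding equivalence. The left-hand side says that some $x \in X$ has $x \xrightarrow{a} y$. For the right-hand side I reformulate using that $\Pne(\T)$ is a \emph{partial} function, which forces a small case split absent from the finite proof. If some state of $X$ carries an outgoing $a$-transition, then $\Pne(\T)(a,X)$ is defined and is the unique witness $Y$, so the condition collapses to $y \in \Pne(\T)(a,X) = \{y \mid \exists x \in X, x \xrightarrow{a} y\}$, which is exactly the left-hand side; if no state of $X$ carries an $a$-transition, then $\Pne(\T)(a,X)$ is undefined, no $Y$ witnesses the right-hand side, and the left-hand side is simultaneously false. Hence the equivalence holds in both cases.

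Next I would dispatch the two preconditions. The domain condition asks that every accessible determinized state be in the domain of $\ni$; this is immediate because every element of $\Pne^{\textup{acc}}(Q)$ is non-empty (we use $\Pne$, and non-emptiness of the subshift guarantees the root $Q$ is itself non-empty), so $X \ni x$ for some $x$, and the corresponding equation reduces to a full relation $\bullet$ exactly as in \Cref{prop:forward-pruning-inf}. The codomain condition asks that every state of $Q$ ending a path of length $\textup{card}(Q)$ (hence lying on an infinite backward path) be contained in some accessible subset; this is precisely where the \textbf{pruned} hypothesis enters. Since the presentation is pruned, such a state $x$ lies on a bi-infinite path, so it is the end of a finite path $q_{-k} \xrightarrow{a_{-k+1}} \dots \xrightarrow{a_0} x$ with $q_{-k} \in Q$; tracing the labels $a_{-k+1}\dots a_0$ from the root $Q$ in the determinization yields an accessible set that contains the path's state at each step (the determinized image always contains the image of $q_{-k}$ along the path), so $x$ lies in this accessible subset and thus in the codomain of $\ni$. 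I expect the codomain condition to be the main obstacle, as it is the only point using both the pruned hypothesis and the choice of $Q$ as root, whereas the sliding equivalence and the domain condition are routine once the partiality of $\Pne(\T)$ is accounted for.
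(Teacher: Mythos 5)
Your handling of the main sliding equation is exactly the paper's argument: introduce the total extension $\P(\T)(a,X)=\{y \mid \exists x\in X,\ x \xrightarrow{a} y\}$ of the partial function $\Pne(\T)$, use partiality to eliminate the existential over $Y$, and reduce the right-hand side to $y \in \P(\T)(a,X)$; your case split on whether $\Pne(\T)(a,X)$ is defined is the same reasoning phrased differently. Your domain condition also coincides with the paper's: every element of $\Pne^{\textup{acc}}(Q)$ is non-empty, so the side of the inclusion involving $\ni$ collapses to the full relation $\bullet_{\Pne^{\textup{acc}}(Q)}$, which is maximal for the inclusion.

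Where you diverge is the codomain condition, and there your diagnosis is off. The paper dispatches it exactly as trivially as the domain one: since the determinization is rooted at $Q$ itself, we have $Q \in \Pne^{\textup{acc}}(Q)$ and $Q \ni x$ for every $x \in Q$, so $\ni$ is surjective and the relevant side simplifies to $\bullet_{Q}$, again maximal for the inclusion. The pruned hypothesis plays no role in this proposition; in the completeness proof it is consumed beforehand by the forward- and backward-pruning steps (\Cref{appcor:forward-pruning-inf,appcor:backward-pruning-inf}) and later by the uniqueness theorem (\Cref{appthm:minimal-uniqueness-inf}) --- here only non-emptiness is used, to guarantee $Q \in \Pne(Q)$. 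Your path-tracing argument from the root does reach the right conclusion (each determinized step along a path starting anywhere in $Q$ contains the corresponding state of the path, hence is non-empty and defined), so it is salvageable, but it is needlessly long, and the intermediate inference ``since the presentation is pruned, such a state $x$ lies on a bi-infinite path'' is false in general: prunedness is a property of the \emph{language}, not of the states. For instance, with states $\{p,q\}$, transitions $p \xrightarrow{a} q$ and $q \xrightarrow{a} q$, and all states initial and final, the language $a^*$ is pruned, yet $p$ lies on no bi-infinite path. Fortunately you never need that claim: the precondition's hypothesis, that $x$ ends a path of length $\textup{card}(Q)$, already hands you the finite path you trace --- or, more simply, $x \in Q$ and $Q$ is accessible.
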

\begin{proof}
	In this proof, we write $\P(\T)$ for the total function from $A \times \P(Q) \to \P(Q)$ which extends $\Pne(\T)$ as below, and note that $\Pne(\T)(a,X)$ is defined if and only if $\P(\T)(a,X) \neq \varnothing$.
	\[ \P(\T)(a,X)=\{y\in Q ~|~ \exists x\in X,~  x \xrightarrow[\text{\small $\T$}]{a} y \} \]
	
	Using the logical reasoning as in \Cref{sec:logic}, we can rewrite the first equation as the following. We are looking at $\forall a\in A,~ \forall X\in \Pne^{\textup{acc}}(Q),~\forall y\in Q,~$
	
	\[ \left(\exists x\in Q,~ (x\in X) \land (x \xrightarrow[\text{\small$\T$}]{a} y)\right) ~\Leftrightarrow~ \left(\exists Y\in \Pne^{\textup{acc}}(Q),~   (X \xrightarrow[\text{\small$\Pne(\T)$}]{a} Y) \land (y\in Y)\right)\]
	We start by reformulating the right side of the equivalence, as $\Pne(\T)$ is a partial function, we can replace the $\exists$ and obtain $(\P(\T)(a,X) \neq \varnothing) \land (y \in \Pne(a,X))$, which is equivalent to $y \in \P(\T)(a,X)$. Then, using the definition of $ \P(\T)$, we obtain that it is equivalent to $\exists x \in X, x \xrightarrow[\text{\small$\T$}]{a} y$, which is exactly the left side of the equivalence. 
	
	The second and third equations are trivially true as their left-hand-side simplifies to $\bullet_{\Pne^{\textup{acc}}(Q)}$ and $\bullet_{Q}$, which are maximal for the inclusion.
\end{proof}

\begin{corollary}\label{appcor:determinization-inf}
	For all pruned presentation of a non-empty sofic subshift $(\T,A,Q)$, using \Cref{fig:fin-simulation-principle}, we have
	\[ \tikzfig{inf-determinization-conclusion}\]
\end{corollary}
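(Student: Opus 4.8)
The plan is to derive this corollary as a direct instance of the simulation principle of \Cref{fig:inf-simulation-principle}, taking for simulation relation $\S := {\ni} : \Pne^{\textup{acc}}(Q) \to Q$ and linking through it the two transition relations $\Pne(\T)$ (on the states $\Pne^{\textup{acc}}(Q)$, which form the domain of ${\ni}$) and $\T$ (on the states $Q$, which form its codomain). The three equations established in \Cref{prop:determinization-inf} are designed to be exactly the three preconditions that \Cref{fig:inf-simulation-principle} requires, so the corollary should follow by a single application of the principle.

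Concretely, I would first match the preconditions one by one. The first equation of \Cref{prop:determinization-inf} is precisely the simulation-square precondition: it states that ${\ni}$ can be slid across $\Pne(\T)$, turning it into $\T$, and conversely. The second and third equations, shown in the proposition to collapse to the full relations $\bullet_{\Pne^{\textup{acc}}(Q)}$ and $\bullet_Q$, are the domain and codomain preconditions. Equivalently, following the remark after \Cref{fig:inf-simulation-principle}, it suffices to observe that ${\ni}$ is total and surjective: every element of $\Pne^{\textup{acc}}(Q)$ is a non-empty subset of $Q$ and hence contains some state, while the root $Q \in \Pne^{\textup{acc}}(Q)$ contains every state of $Q$. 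Totality and surjectivity make the second and third preconditions hold automatically, which is why the proposition could dismiss them as trivially true.

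With the three preconditions in place, applying the simulation principle rewrites the feedback diagram built from $\Pne(\T)$ into the one built from $\T$, which is exactly the equation claimed by the corollary. Almost all the real content has already been discharged in \Cref{prop:determinization-inf}, so I do not expect a genuine obstacle here; the only point deserving care is to confirm that the diagrammatic preconditions of \Cref{fig:inf-simulation-principle} — phrased with the infinite shift, the full relation $\bullet$, and the path-of-length-$\textup{card}(Q)$ conditions — reduce, under the logical translation of \Cref{sec:logic}, to precisely the three equations of the proposition, and in particular that the path-length formulation of the domain and codomain conditions is subsumed by the totality and surjectivity of ${\ni}$.
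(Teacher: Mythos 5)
Your proposal is correct and matches the paper's intended proof exactly: the corollary is obtained by one application of the simulation principle of \Cref{fig:inf-simulation-principle} with $\S = {\ni}$, whose three preconditions are precisely the three equations established in \Cref{prop:determinization-inf} (the second and third being automatic since ${\ni}$ is total on non-empty sets and surjective via the root $Q \in \Pne^{\textup{acc}}(Q)$, as the paper's remark after the principle anticipates). Note in passing that the corollary's citation of \Cref{fig:fin-simulation-principle} is evidently a typo for the infinite-case principle, which you correctly used.
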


\begin{definition}[Minimization]	
	Let $(\D,A,P)$ be a \textbf{pruned right-resolving rooted} presentation of a sofic subshift, and we write $r$ for its\footnote{While it is possible for that presentation to have multiple roots, choosing a different root have no influence on this minimization, as the $L_w$ defined would be identical.} root. Its minimzation is the \textbf{right-resolving rooted} presentation $(L_\D,A,L^{\neq \varnothing}_{A^*})$.
	More precisely, we write $L_w = \{ v \in A^* \mid \exists f \in P, r \xrightarrow[\text{\small $\D$}]{wv} f\}$, and remark that whenever $L_w = L_u$, then for all $a \in A$ we also have $L_{wa} = L_{ua}$. We then define $L^{\neq \varnothing}_{A^*} = \{ L_w \neq \varnothing \mid w \in A^*\}$ note that a single element of this set might correspond to multiple distinct $w$, and in fact $L_{A^*}$ is actually smaller or equal to $P$ in cardinality. Lastly, we define the transition function as $L_\D(a,L_w) =  L_{wa}$ whenever it is non-empty, and undefined otherwise.
\end{definition}

Similarly to the uniqueness of the minimal deterministic automata, this minimization yield the unique (up to isomorphism) minimal \textbf{pruned right-resolving rooted} presentation of the given sofic subshift, see \Cref{appthm:minimal-uniqueness-inf}.

\begin{proposition}
	For all pruned right-resolving rooted presentation $(\D,A,P)$, we have
	\[ \tikzfig{inf-minimization}\]
	where $\L : P \to L^{\neq \varnothing}_{A^*}$ relates $p \in P$ to $\ell \in L^{\neq \varnothing}_{A^*}$ whenever $\ell = \{ v \mid \exists f \in P, p \xrightarrow[\text{\small$\D$}]{v} f\}$. We note that since every state is accessible, there exists a $w \in A^*$ such that the later is equal to $L_w$.
\end{proposition}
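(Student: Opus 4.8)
The plan is to reproduce, in the sofic setting, the argument used for the finite minimization (\Cref{prop:minimization}) and for the infinite determinization \Cref{prop:determinization-inf}: read the diagrammatic identity as a first-order statement through the logical dictionary of \Cref{sec:logic}, and then discharge that statement using the two structural hypotheses on $(\D,A,P)$, namely that it is right-resolving (so $\D$ is a partial function) and rooted (so every state is reachable from the root $r$). As in \Cref{prop:determinization-inf}, I expect the displayed diagram to consist of one genuine simulation equation together with two auxiliary side conditions matching the preconditions of the simulation principle.

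First I would check that $\L$ is total, lands in $L^{\neq\varnothing}_{A^*}$, and is surjective. Because every state of a presentation is final, the empty word lies in $\{v \mid \exists f \in P, p \xrightarrow[\text{\small$\D$}]{v} f\}$ for every $p$, so this future-language is always non-empty; rootedness yields some $u$ with $r \xrightarrow[\text{\small$\D$}]{u} p$, whence the future-language equals $L_u \in L^{\neq\varnothing}_{A^*}$. Surjectivity holds because each non-empty $L_w$ is, by right-resolvingness, the future-language of the unique state reached by reading $w$ from $r$. Totality and surjectivity of $\L$ are precisely the hypotheses that, per the remark following \Cref{fig:inf-simulation-principle}, discharge the path-length preconditions, so the two auxiliary equations reduce to the full relation $\bullet$ and hold trivially, exactly as in \Cref{prop:determinization-inf}.

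The substance is the simulation square. Its logical reading, identical in shape to the finite proof, is that for all $a \in A$, $p \in P$, $\ell \in L^{\neq\varnothing}_{A^*}$:
\[ \left(\exists q \in P,\ (p \xrightarrow[\text{\small$\D$}]{a} q) \wedge (\ell = \{v \mid \exists f \in P, q \xrightarrow[\text{\small$\D$}]{v} f\})\right) \Leftrightarrow \left(\exists m \in L^{\neq\varnothing}_{A^*},\ (m = \{v \mid \exists f \in P, p \xrightarrow[\text{\small$\D$}]{v} f\}) \wedge (m \xrightarrow[\text{\small$L_\D$}]{a} \ell)\right). \]
Fixing $u$ with $r \xrightarrow[\text{\small$\D$}]{u} p$, the right side forces $m = L_u$ and, by the definition of $L_\D$, becomes $\ell = L_{ua}$ (with $L_{ua} \neq \varnothing$). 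On the left side, right-resolvingness makes the witness $q$ unique when it exists, so the future-language of $q$ equals $\{v \mid \exists f \in P, r \xrightarrow[\text{\small$\D$}]{uav} f\} = L_{ua}$, and such a $q$ exists exactly when $L_{ua} \neq \varnothing$. Matching these two descriptions of $\ell$ closes the equivalence.

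I expect the main obstacle to be the careful handling of partiality. Unlike the finite minimization, where $\D$ could be treated as a total function, here both $\D$ and $L_\D$ are genuine partial functions, so the cases where $\D(a,p)$ is undefined — equivalently $L_{ua} = \varnothing$, equivalently $L_{ua} \notin L^{\neq\varnothing}_{A^*}$ — must be shown to render both sides simultaneously unsatisfiable. The delicate point is that the outer quantifier over $\ell \in L^{\neq\varnothing}_{A^*}$ is exactly what suppresses these empty-set cases on both sides at once; once this is verified the equivalence is routine, and one concludes as in the finite case by feeding the resulting simulation equation into the principle of \Cref{fig:inf-simulation-principle} with $\S = \L$.
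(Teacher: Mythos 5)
Your proposal is correct and takes essentially the same route as the paper's proof: the same logical translation of the simulation square, the same use of rootedness to fix $u$ with $r \xrightarrow[\text{\small$\D$}]{u} p$ so that both sides reduce to $(L_{ua} \neq \varnothing) \land (\ell = L_{ua})$ via right-resolvingness and the definition of $L_\D$, and the same trivialization of the two side preconditions (the paper simplifies their right-hand sides to the maximal relation $\bullet$, which is exactly what your totality-and-surjectivity observation about $\L$ justifies). Your explicit check that each future language is non-empty, because all states of a presentation are final, is a small detail the paper leaves implicit.
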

\begin{proof}
	Using the logical reasoning as in \Cref{sec:logic}, we can rewrite the first equation as the following. We are looking at $\forall a \in A, \forall p \in P, \forall \ell \in L^{\neq \varnothing}_{A^*},$
	\[ \begin{array}{c} \left(\exists q\in P,~  (p \xrightarrow[\text{\small$\D$}]{a} q)\land (\ell = \{ w \mid \exists f \in F, q \xrightarrow[\text{\small$\D$}]{w} f\}) \right) \\ \Leftrightarrow \\ \left(\exists m \in L^{\neq \varnothing}_{A^*},~  (m = \{ v \mid \exists f \in P, p \xrightarrow[\text{\small$\D$}]{v} f\}) \land (m \xrightarrow[\text{\small$L_\D$}]{a} \ell)\right) \end{array} \]
	Before studying either side of that equivalence, we note that since $p$ is accessible from the root state $r$, there exists a $u \in A^*$ such that $r \xrightarrow[\text{\small$\D$}]{u} p$, hence $L_u = \{ v \mid \exists f \in F, p \xrightarrow[\text{\small$\D$}]{v} f\}$.
	
	We start by reformulating the second part of the equivalence.  Since the formula for $m$ is given, we can remove the $\exists m$ and obtain $L_u \xrightarrow[\text{\small$L_\D$}]{a} \ell$. We can now use the definition of $L_\D$ and simplify the later in $(L_{ua} \neq \varnothing) \land (\ell = L_{ua})$.
	
	We now look at the first part of the equivalence, since $\D$ is right-resolving, $q$ is uniquely determined if it exists, so we can remove $\exists q$ and we obtain $(\D(a,p) \text{ is defined}) \land (\ell = \{ w \mid \exists f \in F, \D(a,p) \xrightarrow[\text{\small$\D$}]{w} f\})$. Since  $r \xrightarrow[\text{\small$\D$}]{u} p$, this is equivalent to $(L_{ua} \neq \varnothing) \land (\ell = \{ w \mid \exists f \in F, r \xrightarrow[\text{\small$\D$}]{uaw} f\})$, that is $(L_{ua} \neq \varnothing) \land (\ell = L_{ua})$.

	The second and third equations are trivially true as their right-hand-side simplifies to $\bullet_{L_{A^*}}$ and $\bullet_{P}$, which are maximal for the inclusion.
\end{proof}

\begin{corollary}\label{appcor:minimization-inf}
	For all deterministic finite automata $(\D,A,P,\{i\},F)$ where every state is accessible from the initial state, using \Cref{fig:inf-simulation-principle}, we have
	\[ \tikzfig{inf-minimization-conclusion}\]
\end{corollary}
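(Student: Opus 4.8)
The plan is to obtain this corollary as an immediate application of the simulation principle of \Cref{fig:inf-simulation-principle}, instantiating its simulation relation $\S$ with the relation $\L : P \to L^{\neq \varnothing}_{A^*}$ introduced in the preceding proposition. Unlike the finite simulation principle, the bi-infinite one carries no initial/final side-conditions (consistent with a presentation having all states both initial and final); its three preconditions are instead the ``sliding square'' together with a domain and a codomain condition on $\S$. The point I would stress is that the preceding proposition already establishes, in diagrammatic form, exactly these three equations: its first equation is the sliding square asserting that $\L$ slides through the transition relation $\D$, turning each transition of $(\D,A,P)$ into one of its minimization $(L_\D,A,L^{\neq \varnothing}_{A^*})$, while its second and third equations are precisely the required domain and codomain side-conditions.

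Before invoking the principle I would record that $\L$ is a \emph{total and surjective} relation. Totality holds because in a presentation every state is simultaneously initial and final, so for any $p \in P$ the empty word is accepted from $p$; hence the set $\{v \mid \exists f \in P,\ p \xrightarrow[\text{\small$\D$}]{v} f\}$ contains $\epsilon$, is non-empty, and (since $p$ is accessible from the root) equals some $L_w \in L^{\neq \varnothing}_{A^*}$. Surjectivity holds because every element of $L^{\neq \varnothing}_{A^*}$ is of the form $L_w$, which is the $\L$-image of the state reached from the root by reading $w$. As remarked just after \Cref{fig:inf-simulation-principle}, a total and surjective $\S$ automatically discharges the second and third preconditions, so only the sliding square needs checking, and that is supplied verbatim by the preceding proposition.

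With all three preconditions in hand, applying \Cref{fig:inf-simulation-principle} with $\S := \L$ rewrites the diagram representing the behavior of $(\D,A,P)$ into the one representing the behavior of its minimization $(L_\D,A,L^{\neq \varnothing}_{A^*})$, which is the claimed equation. I expect no genuine obstacle here: the substantive work --- verifying the commuting square by the standard minimization argument, and the accompanying uniqueness statement of \Cref{appthm:minimal-uniqueness-inf} --- has already been done upstream, so this corollary is essentially mechanical. The only step demanding a little care is confirming that $\L$ is total and surjective, since it is precisely this observation that lets the domain and codomain preconditions be dispatched for free rather than verified by hand.
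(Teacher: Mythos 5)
Your proposal matches the paper's (implicit) argument: the corollary is obtained by instantiating the simulation principle of \Cref{fig:inf-simulation-principle} with $\S := \L$, whose three preconditions are exactly the equations of the preceding proposition. Your observation that $\L$ is total and surjective is the same fact the paper uses when it notes that the second and third equations hold because one side simplifies to the full relation $\bullet$, so the two proofs coincide in substance.
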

We can now provide a proof to \Cref{thm:completeness-inf}.
\begin{proof}
	We start we two diagrams $\rR,\rS$ of $\ZZ$\bfup{-Trans} from $A$ to $B$ and assume $\rinterp{\rR} = \rinterp{\rS}$. We start by focusing on $\rR$. We combine it with the cup $\epsilon_B$ to bend its output into an input, and we then use \Cref{prop:normal-form-inf} to put the result in normal form:
	\[ \tikzfig{inf-compl-epsilon} \qquad = \qquad \tikzfig{inf-compl-normal-form}\]
	Then, we use \Cref{appcor:forward-pruning-inf,appcor:backward-pruning-inf} on $\T$ to prune it, and we distinguish two cases.
	
	\textbf{Case 1}: If $\rinterp{\rR}$ is the empty relation, then the result of the pruning is empty, meaning that:
	\[ \tikzfig{inf-compl-epsilon} \qquad = \qquad \tikzfig{inf-compl-empty} \]
	We can do the same for $\S$, which leads to
	\[\tikzfig{inf-compl-epsilon} \qquad = \qquad \tikzfig{inf-compl-epsilon-bis}   \]
	By combining with the cap $\eta_B$ to bend the input $B$ into an output, and using the the fact that $(\id_B \times \epsilon_B) \circ (\eta_B \times \id_B) = \id_B$, we obtain $\rR = \rS$.
	
	\textbf{Case 2}: If $\rinterp{\rR}$ is non-empty, then the result of the pruning is non-empty. We can continue and use \Cref{appcor:determinization-inf} on $\T$ followed by \Cref{appcor:minimization-inf} on $\P(\T)$ to obtain:
	\[ \tikzfig{inf-compl-epsilon} \qquad = \qquad \tikzfig{inf-compl-minimal}\]
	For convenience, we name $(\R_M,B \times A,Q_M)$ the resulting minimal representation. We do the same for $\S$ and write $(\S_N,B \times A,Q_N)$ for the resulting minimal representation. Since $\rinterp{\rR} = \rinterp{\rS}$ and by soundness of the equations, we obtain that:
	\[ \rinterp{\tikzfig{inf-compl-minimal-M}} \qquad = \qquad \rinterp{\tikzfig{inf-compl-minimal-N}} \]
	Using the definition of $\rinterp{-}$ and \Cref{prop:zeta-transducer-red}, it follows that $(\R_M,B \times A,Q_M)$ and $(\S_N,B \times A,Q_N)$ represent the same subshift, hence by uniqueness of the minimal (pruned right-resolving rooted) representation of \Cref{appthm:minimal-uniqueness-inf}, we obtain that both representations are equal up to an isomorphism $\iota : Q_M \to Q_N$, hence using the simulation principle with this $\iota$, we obtain
	\[  \tikzfig{inf-compl-minimal-M} \qquad =\qquad \tikzfig{inf-compl-minimal-N} \]
	Which leads to
	\[\tikzfig{inf-compl-epsilon} \qquad = \qquad \tikzfig{inf-compl-epsilon-bis}   \]
	By combining with the cap $\eta_B$ to bend the input $B$ into an output, and using the the fact that $(\id_B \times \epsilon_B) \circ (\eta_B \times \id_B) = \id_B$, we obtain $\rR = \rS$.
\end{proof}

\end{document}